\pgfplotsset{compat=1.18}
\definecolor{codegreen}{rgb}{0,0.6,0}
\definecolor{codegray}{rgb}{0.5,0.5,0.5}
\definecolor{codepurple}{rgb}{0.58,0,0.82}
\definecolor{backcolour}{rgb}{0.95,0.95,0.92}
\lstdefinestyle{mystyle}{
    backgroundcolor=\color{backcolour}, 
    commentstyle=\color{codegreen},
    keywordstyle=\color{magenta},
    numberstyle=\tiny\color{codegray},
    stringstyle=\color{codepurple},
    basicstyle=\ttfamily\footnotesize,
    breakatwhitespace=false,
    breaklines=true,
    captionpos=b,
    keepspaces=true,
    numbers=left,
    numbersep=5pt,
    showspaces=false,
    showstringspaces=false,
    showtabs=false,
    tabsize=2
}
\newcommand{\code}[1]{\texttt{\detokenize{#1}}}
\newcommand{\tsan}{\textsc{ThreadSanitizer}\xspace}
\newcommand{\fasttrack}{\textsc{FastTrack}\xspace}
\newtheorem{proofsketch}{Proof Sketch}
\newcommand{\Locks}{{\mathcal{L}}}
\newcommand{\Globals}{{\mathcal{G}}}
\newcommand{\Locals}{{\mathcal{V}}}
\newcommand{\Formals}{{\mathcal{R}eg}^{\textsf{Form}}}
\newcommand{\Registers}{{\mathcal{R}eg}}
\newcommand{\memAccessInstructions}{{\mathcal{M}}}
\newcommand{\memInstr}{{\mathcal{I}}}
\newcommand{\memInstrSuff}{{\underline{\mathcal{I}}}}
\newcommand{\memInstrRed}{\mathcal{R}}
\newcommand{\sem}[1]{\llbracket #1 \rrbracket}
\newcommand{\Functions}{{\mathcal{F}}}
\newcommand{\BBs}{{\mathbf{BB}}}
\newcommand{\func}{\textsf{func}}
\newcommand{\bblock}{\textsf{block}}
\newcommand{\mainFun}{{\textsf{main}}}
\newcommand{\lock}[1]{\texttt{lock(#1)}}
\newcommand{\unlock}[1]{\texttt{unlock(#1)}}
\newcommand{\create}[1]{\texttt{create(#1)}}
\newcommand{\writes}[1]{\textsf{writes}(#1)}
\newcommand{\locs}[1]{\textsf{locs}(#1)}
\newcommand{\Instructions}{{\mathcal{I}\!nstr}}
\newcommand{\RQ}[1]{\textbf{RQ#1}}
\title{Compiling Away the Overhead of Race Detection}
\author{Alexey Paznikov}{National University of Singapore, Singapore}{paznikov@nus.edu.sg}{0000-0002-3735-6882}{}
\author{Andrey Kogutenko}{Saint Petersburg Electrotechnical University "LETI", Russia}{andrey.a.kogutenko@gmail.com}{}{}
\author{Yaroslav Osipov}{Saint Petersburg Electrotechnical University "LETI", Russia}{yaroslav.o@keemail.me}{}{}
\author{Michael Schwarz}{National University of Singapore, Singapore}{m.schwarz@nus.edu.sg}{0000-0002-9828-0308}{}
\author{Umang Mathur}{National University of Singapore, Singapore}{umathur@nus.edu.sg}{0000-0002-7610-0660}{}
\authorrunning{A. Paznikov, A. Kogutenko, Y. Osipov, M. Schwarz, and U. Mathur} 
\keywords{Data race detection, instrumentation, static analysis, escape analysis, redundancy elimination, ThreadSanitizer, LLVM}
\begin{document}
\maketitle


\begin{abstract}
Dynamic data race detectors are indispensable for flagging concurrency errors in software, but their high runtime overhead often limits their adoption. 
This overhead stems primarily from pervasive instrumentation of memory accesses
--  a significant fraction of which is redundant. 
This paper addresses this inefficiency through a static, 
compiler-integrated approach that identifies and eliminates redundant 
instrumentation, drastically reducing the runtime cost of dynamic 
data race detectors. We introduce a suite of interprocedural
static analyses reasoning about memory access patterns, synchronization,
and thread creation
to eliminate instrumentation for provably race-free accesses and show that
the completeness properties of the data race detector are preserved.
%
%
We further observe that many inserted checks flag
a race if and only if a preceding check has already flagged an
\emph{equivalent} race for the same memory location -- albeit potentially at a different access.
We characterize this notion of equivalence and show that, when limiting reporting to at least one representative for each equivalence class,
a further class of redundant checks can be eliminated. We identify such accesses using a
novel dominance-based elimination analysis.
Based on these two insights, we have implemented a comprehensive framework of 
five static analyses within the \textsc{LLVM} compiler, 
integrated with the instrumentation pass of the race detector \tsan. 
Our experimental evaluation on a diverse suite of real-world applications, 
including \textsc{Memcached}, \textsc{Redis}, \textsc{SQLite}, \textsc{MySQL} and \textsc{Chromium} demonstrates that our 
approach significantly reduces race detection overhead, 
achieving a geometric mean speedup of 1.34× across a diverse suite of real-world applications, with peak speedups reaching 2.5× under high thread contention.
This performance is achieved with a negligible 
increase in compilation time and, being fully automatic, 
places no additional burden on developers.
Our optimizations have been accepted by the \tsan maintainers for inclusion
in the next release and are in the process of being upstreamed.
\end{abstract}

\section{Introduction}
Programmers use concurrency to harness the computational power of multi-core architectures which have been ubiquitous for at least a decade now.
However, developing correct and efficient concurrent programs is 
notoriously challenging, with data races standing out as one of 
the most insidious classes of bugs. 
A data race occurs when two 
threads concurrently access the same memory location, at least one of these accesses is a write, and the accesses are not ordered by synchronization operations. 
Data races can have wide-ranging consequences, 
including crashes, silently corrupting data, 
and security vulnerabilities. 
Because data races and their consequences manifest nondeterministically,
they are exceptionally difficult to reproduce and debug.

Dynamic data race detectors, such as \tsan~\cite{serebryany2009threadsanitizer} and 
\textsc{Helgrind}~\cite{jannesari2009helgrind+}
have emerged as the de-facto standard for identifying such errors at runtime.
At a high level, these tools instrument memory accesses 
and synchronization operations in the program-under-test 
to record extra metadata such as timestamps, allowing them to 
track causality via the \emph{Happens-Before}~\cite{lamport2019time} partial order ($\to_{hb}$),
and then flagging data races whenever two conflicting accesses are found to be unordered.
Such detectors are sound (i.e., do not report false positives)
and effective (can report many data races) in practice.
However, they are known to introduce significant runtime overheads,
with slowdowns ranging from $2\times$ to $20\times$ or more described
as typical in the official \tsan documentation~\cite{TSanDoc}.
Overheads of such magnitudes pose a serious barrier to widespread adoption in 
continuous integration (CI) pipelines, interactive debugging sessions, 
or in large software applications --
despite the promise of automatically identifying 
critical errors early on in the development cycle.

Reducing the runtime overhead of dynamic race detectors has received
attention from the research community with proposals ranging from
algorithmic improvements to the underlying data structures
 such as the use of vector clocks as in \textsc{Djit+},
epoch optimization as in~\fasttrack, to
optimal timestamping data structures~\cite{flanagan2009fasttrack,pozniansky2007multirace, treeClocksMathur2022}.
Other approaches include the use of 
sampling~\cite{bond2010pacer,marino2009literace,RPT2023,erickson2010effective,UTrack2025,racemob2013} 
to reduce the number of checked accesses, inherently trading completeness for performance.
There also is some prior work on leveraging static analysis to reduce the overhead of dynamic race detection~\cite{flanagan2013redcard,di2016accelerating,choi2002efficient,vonPraun2003static,rhodes2017bigfoot}, with much work featuring
expensive whole-program analyses, focusing only on narrow sets of specific patterns,
or not easily integrated into existing development workflows.
In contrast, we propose an extensible framework built around the
commercial-grade dynamic data race detector \tsan which ships with 
the \textsc{LLVM} compiler platform. We characterize properties that static
analyses plugged into the framework need to satisfy to ensure that
the overall dynamic race detection remains sound and complete.
We instantiate our framework with five complementary static analyses
to achieve both instruction- and location-level pruning across the entire program,
and preserving completeness with zero runtime decision cost.

Efficiency concerns aside, the ideal solution to reducing the instrumentation overhead would be a static analysis that precisely computes those memory accesses that can participate in at least one data race, so that only accesses that can indeed participate in a race are instrumented.
Such an analysis cannot exist for general programs due to the undesirability of the underlying problem. In the absence of such a perfect analysis, we instead settle for analyses that compute over-approximations of the set of potentially racy accesses.
There is a balance to strike here: the more precise an analysis is, the more
instrumentation it will be able to eliminate, but the more costly it will be
to run this analysis at compile time.

We opt for cheap analyses that can be run as part of the regular compilation
pipeline, and base our choice of analyses on empirical observations:
For a significant fraction of memory access instructions, there are fairly
straightforward reasons why they cannot participate in any data race.
This can, e.g., be because the accessed memory locations are all thread-local,
are consistently protected using synchronization, or because the access always
occurs before any additional threads are created.
In particular the following four analyses compute such over-approximations:
\begin{description}

    \item[\rm \textbf{Single-threaded context (STC)}.]
    A call-graph-based technique to identify functions guaranteed to execute in a single-threaded mode (e.g., during program initialization).
    Such accesses need not be instrumented.

    \item[\rm\textbf{Escape analysis (EA)}.] 
    A field-sensitive, interprocedural analysis that identifies objects whose accessibility is confined to a single thread (i.e., do not escape) and thus cannot participate in a data race.

    \item[\rm \textbf{Lock-ownership analysis (LO)}.]
    An interprocedural dataflow analysis that identifies global variables consistently protected by specific locks.
    
    \item[\rm \textbf{Single-writer/multiple-reader analysis (SWMR)}.] 
    This analysis identifies global variables that are written only in single-threaded contexts but may be read by multiple threads. Instrumentation for such concurrent reads can be safely omitted.
\end{description}
Eliminating instrumentation for such accesses does not have any effect on the
output of the dynamic data race detector, other than reducing its overhead.

\medskip

We further observe that many of the inserted checks for accesses (potentially) involved in races will flag a race if and only if another,
earlier, check on the same memory location also flags a race.
In particular, when an access $e$ to a memory location
is preceded or succeeded by another access $e'$ to the same memory location
without any intervening synchronization,
it suffices to instrument only one of them if the user is satisfied 
if at least one representative race is flagged in such cases;
our fifth optimization eliminates such \emph{redundant} instrumentation.
\begin{description}
    \item[\rm \textbf{Dominance-based redundancy elimination (DE)}.] 
    This analysis identifies and removes redundant instrumentation for memory accesses that are \emph{(post)dominated} by other instrumented accesses to the same location, provided the path between them is free of intervening synchronization and external calls.
\end{description}
To illustrate our approach, we use a unified motivating example shown in Figure~\ref{fig:unified_example}.
This code snippet demonstrates a common pattern in concurrent applications: a single-threaded initialization phase followed by a multi-threaded processing phase.

\begin{figure}[h!]
\begin{minipage}[t]{0.5\textwidth}
    
\begin{minted}[linenos,breaklines=true,xleftmargin=30pt,fontsize=\scriptsize]{C}
pthread_mutex_t lock;
int g_counter = 0;    // Protected global
int g_config;         // SWMR global
int g_sum;
void* g_queue[1024];  // Global work queue

typedef struct {
  int id;             // Thread-local field
  int payload;        // Field will escape
} req_t;

// Called by main before creating threads
void initialize_system() {
  g_config = 42;       // (1) Safe by STC
}

// Helper function for thread workers
void process_req_internals(req_t* req) {
  req->id++;   // (2a) Safe by IPA 
               // & field-Sensitive EA
}
\end{minted}
\end{minipage}\begin{minipage}[t]{0.5\textwidth}
\begin{minted}[linenos,breaklines=true,xleftmargin=30pt,fontsize=\scriptsize,firstnumber=22]{C}
// Main function for each worker thread
void* worker_thread(void* arg) {
  pthread_mutex_lock(&lock);
  g_counter++;         // (4) Safe by LO
  pthread_mutex_unlock(&lock);

  req_t req;
  req.id = g_config;   // (3) Safe by SWMR
  req.payload = create_payload();
  process_req_internals(&req);
  g_queue[...] = &req.payload; // (2b) 
                      // Field escapes

  g_sum = 0;
  for (int i = 0; i < SIZE; ++i)
    // (5) read is dominated by 'g_sum = 0' 
    g_sum += get_value(g_queue[i]); 
}
\end{minted}
\end{minipage}
\caption{Unified example illustrating all five optimizations}
\label{fig:unified_example}
\end{figure}

We implemented a comprehensive framework within \textsc{LLVM} for \tsan,
and instantiated it with the five static analyses described above.
Our experiments demonstrate a significant reduction in race detection overhead -- achieving a geometric mean speedup of 1.34x, with peaks reaching 2.5x -- on real-world applications. 
This fully automatic approach achieves performance gains with negligible compilation time increase, making dynamic race detection more practical.

The rest of this paper is structured as follows: \cref{sec:background} gives background on dynamic race detection
and introduces our concrete semantics. \cref{sec:sufficient} characterizes the set of \emph{sufficient instrumentation}, with \cref{sec:analyses} describing our proposed four static analyses to compute
overapproximations of this set. \cref{s:beyond} then goes beyond sufficient instrumentation
and describes the concepts of \emph{redundant instrumentation} and \emph{weakly sufficient instrumentation},
and how our fifth analysis can eliminate such redundancies.
\cref{sec:pipeline_implementation} gives details of our implementation within \textsc{LLVM} and \tsan,
with \cref{sec:eval} presenting an experimental evaluation on real-world applications.
\cref{sec:discussion} discusses how our proposed optimizations interact with other compiler optimizations
and some trade-offs surrounding report granularity, before \cref{sec:related} reviews related work and \cref{sec:conclusion} concludes.

\section{Dynamic Data Race Detection}\label{sec:background}
This section provides a brief overview of \tsan, focusing on why the introduced instrumentation incurs significant performance overheads.  We then present the core fragment of an imperative language used to formalize our approach and supply the necessary definitions.

\subsection{\tsan and the Source of the Overhead}

The primary source of overhead in precise, happens-before based race detectors like \tsan and others~\cite{serebryany2009threadsanitizer, serebryany2011dynamic, lin2018runtime, schilling2024binary} is that 
each memory access instruction in the program is, during the instrumentation 
step of compilation, prepended with a call to a runtime library (RTL) function (e.g., \texttt{\_\_tsan\_readN}). The same happens for
synchronization primitives (e.g., mutex locks/unlocks).
At runtime, the RTL uses a dynamic \textbf{happens-before algorithm}, based on vector clocks~\cite{lamport2019time}, to track the partial order of events across all threads. This information is stored in shadow memory. Third, upon each instrumented access, the \textbf{race detection logic} checks the access's metadata against the current state in shadow memory.
If a conflicting access (e.g., a write concurrent with another read or write) is found with no happens-before relationship, \tsan reports a data race. 
Thus, a relatively simple memory access in the original program can translate to dozens of instructions at runtime, leading to significant overheads.

The instrumentation process is implemented as a pass within the \textsc{LLVM} compiler infrastructure which runs relatively early in the optimization pipeline. This placement represents a trade-off: it should run \emph{after} essential IR canonicalization passes (like \texttt{mem2reg}) that simplify the code and improve the precision of any subsequent analysis. However, it must run \emph{before} more aggressive transformations (such as function inlining or loop unrolling) that could obscure the program's structure or invalidate assumptions made by the instrumentation pass. Our work is tightly integrated with this existing pipeline, where our static analyses run as a precursor to the instrumentation pass to guide its decisions.



\subsection{Concrete Semantics and Definitions}
While our implementation targets \tsan and is thus integrated into the
\textsc{LLVM} framework, for soundness proofs, we consider a set
of instructions that is sufficient to model the relevant aspects of a small,
{C}-like language with threads, shared memory, and mutexes.

To abstract away unnecessary complications, we assume that the program is given as a set
of labelled control flow graphs (CFGs), one for each function in the program.
Each function $f$ has a set of local variables $\Locals_f$ and a set of virtual registers $\Registers_f$.
While local variables can have their address taken, causing them to potentially escape
their function and even their thread, virtual registers cannot have their address taken,
and are thus always local to their function and thread.
The CFG of each function consists of basic blocks, where each block is a sequence of instructions with no internal control flow, except for function calls.
The basic blocks are connected by control-flow edges which are labelled with guards.
We assume that guards are predicates over registers only.
\begin{figure*}
\[
    \begin{array}{lll}
        \Locks &:& \text{Set of locks} \\
        \Globals &:& \text{Set of global variables} \\
        \Functions &:& \text{Set of functions} \\
        \Instructions &:& \text{Set of all instructions in the program (see \cref{fig:instr-set})} \\
        \BBs_f &:& \text{Set of basic blocks in function } f \in \Functions \\
        \func\,i &:& \text{Function containing instruction } i \\
        \bblock\,i &:& \text{Basic block containing instruction } i \\
        \memAccessInstructions &:& \text{Set of all memory access instructions in a program}\\
        \Locals_f &:& \text{Set of local variables associated with
            function $f \in \Functions$} \\
        \Registers_f &:& \text{Set of virtual registers associated with
            function $f \in \Functions$ which are never}\\ 
        && \text{passed to other functions and never escape} \\ 
        \Formals_f &:& \text{Formal parameters associated with
            function $f \in \Functions$, $\Formals_f \subseteq \Registers_f$} \\
        \locs{i} &:& \text{Set of memory locations potentially accessed by instruction } i \\
    \end{array}
\]
\caption{Definitions and notation for our fragment of a {C}-like language. All sets are finite.}
\label{fig:definitions-notation}
\end{figure*}
\begin{figure*}
\[
\begin{array}{lll}
    \lock{l}\;/\;\unlock{l} &:& \text{Locking/unlocking a lock } l \in \Locks\\
    a_0 \texttt{= f(}a_1,\dots,a_{n}\texttt{)}; &:& \text{Static call to an $n$-array
        function $f\in\Functions$ with arguments $a_1,\dots,a_{n}$}\\
        && \text{from $\Registers_g$ where $g$ is the caller function, assigning result
        to $a_0 \in \Registers_g$}\\
    a_0 \texttt{ = *$a_{1}$(}a_2,\dots,a_{n+2}\texttt{)}; &:& \text{Dynamic call 
        with function pointer stored in $a_{1} \in \Registers_g$}\\
    \texttt{return }a; &:& \text{Return from current function, returning value in register } a \in \Registers_f\\
    a_0 = \mathbf{F}(a_1, \cdots, a_{i}); &:& \text{Computations involving only registers}\\
    \create{f}; &:& \text{Create a new thread executing an argumentless function } f \in \Functions \\
    a_0 = r(*a_1); &:& \text{Read value from the memory location pointed to by register } a_1 \in \Registers_f\\
    && \text{ and store it into } a_0 \in \Registers_f\\
    w(*a_0, a_1); &:& \text{Write value of register } a_1 \in \Registers_f \text{ into memory location pointed to}\\
              && \text{by register } a_0 \in \Registers_f\\
    a = \&v; &:& \text{Copy address of global or local variable } v \in (\Globals \cup \Locals_f) \text{ into register } a \in \Registers_f\\
    a = \&g; &:& \text{Copy address of function } g \in \Functions \text{ into register } a \in \Registers_f
\end{array}
\]
\caption{Instruction set $\Instructions$ of our fragment of a {C}-like language.}
\label{fig:instr-set}
\end{figure*}
\cref{fig:definitions-notation,fig:instr-set} summarize our notation as well as the instruction set we consider in this work.
We remark that throughout this work, we assume programs to be type-correct.
We assume program execution starts from a distinguished function \mainFun.
%
The concrete semantics $\sem{P}$ of a program $P$ is defined as the set of all traces, each represented by a sequence of dynamic events.
A dynamic event consists of a thread identifier and an instruction executed by that thread.
For convenience, in the case of reads and writes through pointers, we explicitly record the accessed memory location within the dynamic event, rather than reconstructing it from the program state, which itself can be derived from the preceding trace.
Further, we demand that each trace follow the usual rules such as each lock being held by at most
one thread, the sequence of steps of each thread corresponding to a valid path in the CFG, etc. For the values of variables, we assume that the
execution is sequentially consistent, as is common in previous 
work. 
As our concrete semantics is largely standard, we do not spell out all details here; instead, we remark on one important aspect.
\begin{remark}
    $\sem{P}$ contains not only complete traces, but also traces corresponding to executions that have not terminated yet. It is thus closed
    under prefix for any trace containing more than one dynamic event.
\end{remark}

In this paper, we are interested in the detection of data races, more precisely in happens-before based detection of data races.
This is based on weakening the total order $\leq$ on dynamic events to the happens-before partial order $\to_{hb}$, which is obtained 
as the reflexive and transitive closure of the union of the following orders:
\begin{itemize}
    \item The program order $\to_{po}$, which totally orders events of the same thread according to their occurrence in the trace;
    \item The lock order $\to_{lo}$ which orders unlock events before subsequent lock events on the same lock; and
    \item The thread creation order $\to_{cr}$ which orders the thread creation event before any event in the created thread.
\end{itemize}
\begin{definition}[Data Race]
Two dynamic events $e_1$ and $e_2$ are then said to form a \emph{data race}, when
\begin{itemize}
    \item They access the same memory location;
    \item At least one of the accesses is a write; and
    \item Neither $e_1 \to_{hb} e_2$ nor $e_2 \to_{hb} e_1$ holds, i.e, they are unordered w.r.t.\ $\to_{hb}$.
\end{itemize}
\end{definition}
We also call two memory access instructions $i_1$ and $i_2$ \emph{racy} in a trace $t$, if there are two dynamic events
corresponding to the respective instructions that form a data race.
\begin{remark}
    Note that our results remain sound when stronger versions of the happens-before order are used that, e.g, also account for the synchronization
    arising out of other synchronization primitives such as condition variables or semaphores.
    This also justifies our choice to keep the set of considered instructions small.
\end{remark}
%
\section{Sufficient Instrumentation}
\label{sec:sufficient}
We propose to lower the overhead of dynamic race detectors such as \tsan by statically
identifying access instructions that cannot be involved in races, and limiting instrumentation to the remaining accesses.
Since this approach is not limited to \tsan but works for any instrumentation-based dynamic
data race detector fulfilling some reasonable assumptions, we first introduce these assumptions.
\begin{definition}[Instrumented Memory Accesses]
    Let $P$ be a program and $\memAccessInstructions$ the set of all memory access instructions in $P$.
    We assume that the analyzer takes as an additional parameter a set of memory accesses 
    $\memInstr \subseteq \memAccessInstructions$, representing the instructions to be instrumented.\footnote{All such sets here and in the rest of the paper depend on the program $P$. However, we do not make this dependence explicit
        in the notation in the interest of readability. 
    }
\end{definition}
Next, we state a completeness property, which intuitively means that the analyzer will
flag races involving instrumented accesses upon observing an execution exhibiting the race, provided the respective accesses are instrumented.
%
\begin{definition}[Completeness of a Dynamic Data Race Detector w.r.t.\ Instrumented Accesses]\label{def:completeness}
    We call a dynamic data race detector \emph{complete w.r.t.\ a set of
    instrumented access $\memInstr$}, iff, for any program $P$ and any of its traces $t\in \sem{P}$ the following holds: For any race involving dynamic events
    $x$ (at access instruction $i_x$) and 
    $y$ (at instruction $i_y$) appearing in $t$, the analyzer will, when analyzing $t$, flag the race when $\{i_x,i_y\} \subseteq \memInstr$.
\end{definition}
\begin{remark}
    While this type of completeness statement w.r.t.\ instrumented accesses may
    only hold for an \emph{idealized} version of the Happens-Before algorithm
    used inside \tsan, and not necessarily for the actual implementation,
    we still choose to present our results in terms of this
    stronger property. However, the results also hold for a weaker version of
    completeness w.r.t.\ instrumented accesses, where it suffices that the reporting
    of a given race does not depend on whether accesses that can never be involved in
    any races are instrumented or not.
\end{remark}
Given a data race detector that is complete w.r.t.\ a set of instrumented accesses, by setting $\memInstr = \memAccessInstructions$, i.e., instrumenting all memory accesses, one obtains a data race detector
that will flag all races in the executions it observes.
However, we aim to do better here by limiting the instrumentation to a smaller set of memory accesses
without losing this property. 
\begin{definition}[Sufficient Instrumentation Set]\label{def:sufficient-instr-set}
    For a program $P$ and its set of memory access instructions $\memAccessInstructions$, 
    the set of minimal required memory instrumentations $\memInstrSuff$ is given by
    \[
        \memInstrSuff = \bigcup \{ \{x,y\} \mid x,y \in \memAccessInstructions, 
            \exists t \in \sem{P}: x \text{ and } y \text{ race in } t   \}.
    \]
    We call any set $\memInstr$ with $\memInstrSuff \subseteq \memInstr$ a \emph{sufficient instrumentation set} for $P$.
\end{definition}
$\memInstrSuff$ corresponds to those memory accesses that are actually involved in data races in at least one of the executions of the program.
\begin{proposition}
    A dynamic data race detector that is complete w.r.t. instrumented accesses, 
    when instantiated with any instrumentation set $\memInstr$ such that $\memInstrSuff \subseteq \memInstr$,
    enjoys the same completeness properties as one with the full instrumentation set $\memAccessInstructions$. 
\end{proposition}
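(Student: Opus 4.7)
The plan is to unfold the definitions and observe that every racy pair of instructions is, by construction, already contained in $\memInstrSuff$, so any instrumentation set $\memInstr$ that contains $\memInstrSuff$ inherits the completeness enjoyed by full instrumentation. Concretely, I would first make precise what the baseline claim is: when the detector is instantiated with $\memInstr = \memAccessInstructions$, every pair of accesses is instrumented, so \cref{def:completeness} guarantees that every race in every observed trace $t\in\sem{P}$ is flagged. The proposition therefore reduces to showing that every race appearing in any trace $t\in\sem{P}$ is still flagged when the (smaller) instrumentation set $\memInstr$ with $\memInstrSuff\subseteq\memInstr$ is used.

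Next, I would fix an arbitrary trace $t\in\sem{P}$ and an arbitrary race in $t$ between two dynamic events $x$ (at access instruction $i_x$) and $y$ (at access instruction $i_y$). By the definition of $\memInstrSuff$ in \cref{def:sufficient-instr-set}, the pair $\{i_x,i_y\}$ belongs to the union defining $\memInstrSuff$, because $i_x$ and $i_y$ witness a race in some trace of $P$ (namely $t$ itself). Hence $\{i_x,i_y\}\subseteq \memInstrSuff$, and by the hypothesis $\memInstrSuff\subseteq \memInstr$ we obtain $\{i_x,i_y\}\subseteq \memInstr$.

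Finally, I would invoke \cref{def:completeness} for the detector instantiated with $\memInstr$: since both $i_x$ and $i_y$ lie in $\memInstr$, the detector flags the race between $x$ and $y$ when analyzing $t$. As both $t$ and the race were arbitrary, this matches exactly the guarantee provided by full instrumentation $\memAccessInstructions$, and the proposition follows.

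The proof is essentially a definition chase, so there is no real technical obstacle; the only point where one must be careful is the quantifier structure of \cref{def:sufficient-instr-set}, where one has to notice that the union ranges over \emph{all} traces in $\sem{P}$, so a race witnessed in \emph{any} single trace is enough to pull the involved instructions into $\memInstrSuff$. This is precisely what allows completeness to transfer from $\memAccessInstructions$ to every superset of $\memInstrSuff$, and it is the conceptual content the subsequent sections will exploit by statically over-approximating $\memInstrSuff$.
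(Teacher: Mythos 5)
Your proof is correct and is exactly the definition chase the paper intends: the paper's own proof is just the one-line remark ``Directly from \cref{def:completeness} and \cref{def:sufficient-instr-set},'' and your write-up spells out precisely that argument (any race witnessed in any trace pulls its two instructions into $\memInstrSuff$, hence into $\memInstr$, so completeness applies). No differences in approach worth noting.
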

\begin{proof}
    Directly from \cref{def:completeness} and \cref{def:sufficient-instr-set}.
\end{proof}
Computing $\memInstrSuff$ precisely would amount to sound and complete
static race detection, which is an undecidable problem.
We propose an approach combining four static analyses to compute \emph{overapproximations} of $\memInstrSuff$ 
(or perhaps, more intuitively, \emph{underapproximations} of its complement), allowing to limit instrumentation to
fewer accesses, thus reducing the overhead without affecting
the completeness properties of the dynamic analyzer.


\section{Static Analyses to Overapproximate the Sufficient Instrumentation Set \texorpdfstring{$\memInstrSuff$}{}}\label{sec:analyses}
We use the four analyses already informally presented in the introduction to compute accesses that can definitely not be involved
in any races and can thus safely be excluded from instrumentation.
The following subsections describe each of these analyses in turn.

\subsection{Single-Threaded Context (STC) Identification}\label{ss:stc}
Accesses that only happen while the program is single-threaded cannot be involved
in any race.
\begin{proposition}
    A memory access instruction $x$ that is guaranteed to only be executed before the program creates additional threads cannot be involved in any data race, and thus $x \notin \memInstrSuff$.
\end{proposition}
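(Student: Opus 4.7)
The plan is to fix an arbitrary trace $t \in \sem{P}$, take any dynamic event $e_x$ produced by $x$ in $t$, and show that $e_x$ is happens-before ordered with respect to every other dynamic event $e_y$ in $t$ that accesses the same memory location. By the definition of a data race, this is enough to conclude that $x$ cannot be involved in any race in $t$; since $t$ was arbitrary, \cref{def:sufficient-instr-set} then gives $x \notin \memInstrSuff$.

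First, I would argue that $e_x$ must belong to the main thread. Any non-main thread $t'$ can only contribute events after its creating \create event has occurred; that event in turn must precede (in the trace) every event of $t'$. By the hypothesis that $x$ only executes before any additional threads are created, $e_x$ occurs in $t$ strictly before every \create event, so $e_x$ cannot be an event of any spawned thread, i.e., it is executed by \mainFun.

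Next I would split on where $e_y$ lives. If $e_y$ is also in the main thread, then program order $\to_{po}$ totally orders $e_x$ and $e_y$, so $e_x \to_{hb} e_y$ or $e_y \to_{hb} e_x$ holds directly. If $e_y$ is in some other thread $t'$, I would build an explicit $\to_{hb}$ chain from $e_x$ to $e_y$. By induction on the spawn chain of $t'$, there is a sequence of \create events $e_{c_1}, \ldots, e_{c_k}$ such that $e_{c_1}$ is executed by the main thread, $e_{c_{j+1}}$ is executed by the thread spawned at $e_{c_j}$, and $e_y$ is executed by the thread spawned at $e_{c_k}$. Since $e_x$ precedes every \create event in $t$, we have $e_x \to_{po} e_{c_1}$; between consecutive creations, the respective thread order them by $\to_{po}$ before $\to_{cr}$ hands off to the child thread; and finally $e_{c_k} \to_{cr} e_y$ by the thread-creation order plus $\to_{po}$ inside $t'$. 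Taking reflexive-transitive closure yields $e_x \to_{hb} e_y$.

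The argument is essentially bookkeeping on the definitions of $\to_{po}$, $\to_{cr}$, and $\to_{hb}$; the only mildly subtle point I expect to be the possibly indirect spawn chain for $e_y$, which is why I would handle it by an explicit induction on the depth of the chain rather than asserting the one-step case. Once the chain is built, the conclusion $e_x \to_{hb} e_y$ (or the symmetric direction in the intra-main case) is immediate, and \cref{def:sufficient-instr-set} closes the proof.
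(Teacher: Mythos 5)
Your proof is correct and follows essentially the same route as the paper's: same-thread events are ordered by $\to_{po}$, and cross-thread events are ordered by chaining $\to_{po}$ and $\to_{cr}$ through the (transitive) spawn chain. The only differences are presentational --- you argue directly rather than by contradiction and spell out the spawn-chain induction that the paper leaves implicit.
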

\begin{proof}
    Consider for a contradiction a trace $t$ which contains a dynamic event $e_x$ corresponding to $x$ and another dynamic event $e_y$ such that $e_x$ and $e_y$ race.
    If $e_x$ and $e_y$ belong to the same thread, they are ordered by $\to_{po}$ (contradiction).
    If they belong to different threads, $e_y$ occurs in a thread (transitively) created by the main thread
    after $e_x$ occurred. $e_x$ and $e_y$ are thus ordered by the transitive closure of $\to_{po}$ and $\to_{cr}$ (contradiction).  
\end{proof}
We compute this information at the granularity of functions, making an exception for the
function $\textsf{main}$, where the information is computed at the level of basic blocks.
The set of functions and \mainFun's basic blocks that may potentially execute
in a multi-threaded context, are then described by the least fixed-point solution of the following equations:
\begin{equation}
    \begin{array}{lll}
        \Functions^{MT} &\supseteq& 
            \{ f \in \Functions \setminus \{\mainFun\} \mid
                f \text{ contains } \texttt{create}
            \} \cup\\
        &&    \{ f \in \Functions \setminus \{\mainFun\} \mid
                f \text{ has its address taken }
            \} \cup\\
        &&    \{ f \in \Functions \setminus \{\mainFun\} \mid
                \exists g \in \Functions^{MT}:
                f \text{ may call } f \lor g \text{ may call } f
            \} \cup\\
        &&    \{ f \in \Functions \setminus \{\mainFun\} \mid
                \exists b \in \BBs_{\mainFun}^{MT}:
                b \text{ may call } f
            \}
            \\[2ex]
            
        \BBs_{\mainFun}^{MT} &\supseteq& 
            \{ b \in \BBs_{\mainFun} \mid
                b \text{ contains } \texttt{create}
            \}\\
        && \cup
            \{ b \in \BBs_{\mainFun} \mid
                \exists f \in \Functions^{MT}:
                b \text{ may call } f
            \} \\
        &&  \cup \{ b \in \BBs_{\mainFun} \mid
                \exists b' \in \BBs_{\mainFun}^{MT}:
                b \text{ is a CFG successor of } b'
            \}
    \end{array}
\end{equation}
where the \emph{may call} relationship is an overapproximation of the actual behavior of the program, as computed by the analyzer in a first step.
\begin{proposition}\label{th:stc}
    Provided the \emph{may call} relationship is a sound overapproximation of the
    program's call behavior, any memory access instruction $x$ occurring in a function
    $f$ or in a basic block $b$ with $f \notin \Functions^{MT}$ and $b \notin \BBs_{\mainFun}^{MT}$
    does not occur after any thread creation.
\end{proposition}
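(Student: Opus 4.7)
The plan is to argue by contradiction. Suppose there is a trace $t \in \sem{P}$ in which a dynamic event $e_x$ at instruction $x$ occurs after some thread creation event. I would show that the enclosing function $f = \func\,x$ then must lie in $\Functions^{MT}$, or (if $f = \mainFun$) that the enclosing block $b = \bblock\,x$ must lie in $\BBs_{\mainFun}^{MT}$, contradicting the hypothesis of the proposition.

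Consider the thread $T$ executing $e_x$ and its call stack $f_0, f_1, \ldots, f_k = f$ at the moment $e_x$ fires. The argument splits on whether $T$ is the main thread. If $T$ is the main thread, then $f_0 = \mainFun$. Let $e_c$ be the \emph{earliest} create event in $t$; since initially only the main thread is alive, $e_c$ lies in the main thread and thus precedes $e_x$ in program order, in some basic block $b_c$ of $\mainFun$. By the first clause defining $\BBs_{\mainFun}^{MT}$, $b_c \in \BBs_{\mainFun}^{MT}$. At the moment $e_x$ fires, the main thread is either executing $b$ itself (when $k = 0$), or has most recently left some block $b^*$ of $\mainFun$ that invoked $f_1$ (when $k \geq 1$). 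An induction along the CFG path actually traversed in $\mainFun$ from $b_c$, using the CFG-successor clause of $\BBs_{\mainFun}^{MT}$, shows that this current block of $\mainFun$ lies in $\BBs_{\mainFun}^{MT}$. This closes the $k = 0$ case; when $k \geq 1$, the ``called from $\BBs_{\mainFun}^{MT}$'' clause yields $f_1 \in \Functions^{MT}$, and the transitive may-call clause then propagates membership to $f_k = f$.

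If $T$ is not the main thread, then $T$ was spawned by some create instruction inside some function $h$. Since $h$ syntactically contains create, the first clause for $\Functions^{MT}$ gives $h \in \Functions^{MT}$. The entry function $f_0$ of $T$ must also lie in $\Functions^{MT}$, either because the analyzer's may-call relation treats a thread spawn as an implicit call edge from $h$ to $f_0$, or because $f_0$'s address has been taken in order to be passed to create, in which case the address-taken clause applies. Transitivity of may-call along $T$'s runtime call stack then lifts each of $f_1, \ldots, f_k$ into $\Functions^{MT}$.

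The main obstacle I anticipate is bridging the dynamic semantics of thread creation and the purely static may-call relation: the argument for non-main threads hinges on the may-call overapproximation either modeling a create as an implicit call edge to its target, or capturing the target through address-taking. This is precisely what the hypothesis ``provided the may-call relationship is a sound overapproximation'' is meant to ensure. Beyond this conceptual bridge, the remaining work is a routine unwinding of the CFG-successor and transitive-closure clauses in the fixed-point definitions, which I would spell out as a simple induction on the length of the CFG trajectory and the depth of the call stack, respectively.
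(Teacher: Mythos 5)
Your route is genuinely different from the paper's. The paper argues abstractly: it sets up a constraint system over sets of traces (one unknown per non-main function and per basic block of $\mainFun$, collecting the traces in which some thread is at that program point), observes that this system is equivalent to the trace semantics, and then exhibits an explicit post-fixpoint $\eta$ built from $\Functions^{MT}$ and $\BBs_{\mainFun}^{MT}$ that assigns create-free trace sets exactly to the points outside the $MT$ sets; soundness then follows because the least solution is below $\eta$. You instead argue operationally by contradiction, unwinding the call stack of the thread executing $e_x$ and inducting along the CFG trajectory of $\mainFun$. Both are legitimate; the paper's version localizes all the case analysis into checking that $\eta$ satisfies each constraint, whereas yours makes the dynamic reasoning explicit and, usefully, surfaces exactly where the static clauses must mirror dynamic behavior (your observation that thread spawning must be reflected in the \emph{may call} relation, or else the entry functions of spawned threads would escape $\Functions^{MT}$, is precisely the right thing to flag and is indeed what the soundness hypothesis on \emph{may call} has to cover).

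There is, however, one concrete misstep in your main-thread case: you assert that the earliest create event $e_c$ lies ``in some basic block $b_c$ of $\mainFun$.'' That does not follow. The main thread may execute \texttt{create} inside a function reached transitively from $\mainFun$; then $e_c$ lies in a block of that callee, and the first clause of the definition of $\BBs_{\mainFun}^{MT}$ does not apply to any block of $\mainFun$. The base case of your CFG induction must instead be the block of $\mainFun$ that is at the bottom of the call stack when $e_c$ fires, and its membership in $\BBs_{\mainFun}^{MT}$ comes from the ``$b$ may call some $f \in \Functions^{MT}$'' clause --- which in turn requires that $MT$-ness propagate from the function syntactically containing \texttt{create} up the call chain to the function invoked directly from that block (via the caller-direction disjunct of the third clause for $\Functions^{MT}$, or via transitivity of \emph{may call}). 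This is repairable, but as written the induction's base case is only established in the special case where \texttt{create} appears textually in $\mainFun$, so you should add this case explicitly. The remainder of your argument (successor propagation within $\mainFun$, descent into $f_1,\dots,f_k$ via the fourth clause and \emph{may call} transitivity, and the non-main-thread case) is sound modulo the create-edge assumption you already identified.
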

\begin{proofsketch}
    As this is the first proof sketch, we provide some more details here, while we will omit these later.
    Consider a constraint system ranging over non-main functions and basic blocks of \mainFun,
    where for each $x \in \Functions \setminus \{\mainFun\} \dot\cup \BBs_{\mainFun}$,
    $[x]$ collects the set of traces where at least one
    thread is in the corresponding function or basic block.
    This constraint system can be shown equivalent to the trace semantics of the program.
    We then construct from $\Functions^{MT}$ and $\BBs_{\mainFun}^{MT}$ a mapping
    \[
        \eta\,[x] =
        \begin{cases}
            \{ t \in \sem{P} \mid \texttt{create(\_)}\not\in t, 
                \text{ at least one thread at } x \} & \text{if } x \not\in (\Functions^{MT} \cup \BBs_{\mainFun}^{MT})\\
            \{ t \in \sem{P} \mid \text{at least one thread at } x \} & \text{otherwise}
        \end{cases}
    \]
    and show that $\eta$ is a solution to the above constraint system.
\end{proofsketch}
%



\begin{example}
In the running example (\cref{fig:unified_example}), \code{initialize_system} is called from the program's entry point before any threads are created. Our STC analysis correctly identifies it as a ST function, and thus the write to \code{g_config} at point (1) is not instrumented.
\end{example}

\subsection{Single-Writer/Multiple-Reader (SWMR)}
If all writes to a variable occur in single-threaded mode, instrumentation of read accesses is unnecessary,
even if they occur in multi-threaded mode, as two reads cannot race.
\begin{proposition}\label{prop:swmr}
    A memory access instruction $x$ potentially reading memory locations $L$, cannot be involved in any race
    if all instructions $y$ potentially writing to any location in $L$ are guaranteed to only occur before 
    the program creates additional threads, i.e., all such $y$ occur in ST context. Thus,
    $x \notin \memInstrSuff$.
\end{proposition}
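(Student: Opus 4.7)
The plan is to reduce the claim to the single-threaded context argument already used in the proof of Proposition~\ref{th:stc}, exploiting the fact that the access $x$ itself is a read.

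First, I would proceed by contradiction: suppose there exists a trace $t \in \sem{P}$ containing a dynamic event $e_x$ corresponding to $x$ and another dynamic event $e_y$ such that $e_x$ and $e_y$ form a data race. By the definition of a data race, $e_x$ and $e_y$ access the same memory location $\ell$, and at least one of them is a write. Since $x$ is a read instruction, $\ell \in L$ and $e_x$ is not a write, so $e_y$ must be a write. Let $y$ be the instruction underlying $e_y$; then $y$ is a write instruction that potentially writes to $\ell \in L$, so by hypothesis $y$ is guaranteed to execute only in single-threaded context.

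Next, I would argue that $e_x$ and $e_y$ are actually $\to_{hb}$-ordered, contradicting the assumption that they form a race. By the STC hypothesis on $y$, the event $e_y$ occurs in the trace $t$ strictly before any $\texttt{create}$ event. Consider the thread of $e_x$: if it is the same as the thread of $e_y$ (necessarily the main thread in this case), then $e_x$ and $e_y$ are ordered by $\to_{po}$. Otherwise, the thread of $e_x$ was transitively created by the main thread only after $e_y$ occurred, so $e_y$ precedes that creation event by $\to_{po}$, which in turn precedes every event in the spawned (and transitively spawned) thread by $\to_{cr}$; chaining through program order within the main thread and thread creation edges, we obtain $e_y \to_{hb} e_x$. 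In both cases, the events are $\to_{hb}$-ordered, contradicting the supposed race.

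The argument is essentially a specialization of the STC proof, and the one step that merits care is justifying that $e_y$, executing in the main thread while still single-threaded, happens-before every event in any later thread. This follows from the definition of $\to_{cr}$ together with the transitive closure in $\to_{hb}$, plus the fact that in the main thread's program order the create event comes after $e_y$; I would spell this out by induction on the thread-creation chain leading to the thread of $e_x$. No other ingredient is needed, so the main obstacle is only conceptual: recognizing that "$x$ is a read" combined with the race definition forces the partner to be a write, which is precisely the scenario the STC guarantee rules out.
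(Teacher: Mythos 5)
Your proof is correct and follows essentially the same route as the paper's: argue by contradiction, observe that since $x$ is a read the racing partner $e_y$ must be a write whose instruction $y$ is confined to the single-threaded context, and then split on whether $e_x$ and $e_y$ share a thread, closing each case with $\to_{po}$ or the transitive closure of $\to_{po}$ and $\to_{cr}$. The only difference is that you spell out the chain through transitively created threads slightly more explicitly than the paper does.
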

\begin{proof}
    Consider a memory access instruction $x$ potentially reading memory locations $L$, where all instructions
    potentially writing to any location in $L$ occur before the program creates additional threads.
    Consider for a contradiction a trace $t$ in which a dynamic event $e_x$ corresponding to a read access at $x$
    to some memory location $\ell \in L$ and another dynamic event $e_y$ accessing the same memory location $\ell$
    race.
    Then, $e_y$ must correspond to a write access and the corresponding access instruction $y$ writes to a location
    in $L$.
    If $e_x$ and $e_y$ are on the same thread, they are ordered by $\to_{po}$ (contradiction).
    If they are on different threads, $e_y$ occurs on the main thread, and $e_x$ occurs in a thread (transitively) created by the main thread after $e_y$ occurred. $e_y$ and $e_x$ are thus ordered by the transitive closure of $\to_{po}$ and $\to_{cr}$ (contradiction).  
\end{proof}
We build this analysis on top of a sound pointer analysis: Let us assume this analysis computes
an overapproximation of the set of access instructions $\writes{\ell}$ for each memory location $\ell$,
as well as an overapproximation of the set of memory locations $\locs{i}$ accessed by each access instruction $i$.\footnote{
    In practice, these sets are from some suitable abstract domain of memory locations, applying, e.g.,
    an allocation-site based abstraction for dynamically allocated memory.
}
\begin{proposition} 
Then, a memory access instruction $x$ need not be instrumented if:
\[
    \forall \ell\in \locs{x}, \forall y \in \writes{\ell}: 
        (\func\,y) \not\in \Functions^{MT} \land (\bblock\,y) \not\in \BBs_{\mainFun}^{MT},  
\]
i.e., all writes to locations potentially accessed by $x$ occur in ST context.
\end{proposition}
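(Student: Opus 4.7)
The plan is to reduce the claim directly to the two previously-established propositions: \cref{th:stc} (on STC-classified code) and \cref{prop:swmr} (SWMR reads). I would proceed by case analysis on whether the instruction $x$ is a read or a write, since the SWMR argument alone only directly handles reads.

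First, I would dispatch the \emph{read} case. Assuming $x$ is a read access, the hypothesis says that every $y \in \writes{\ell}$ for every $\ell \in \locs{x}$ lies in a function outside $\Functions^{MT}$ and (if inside \mainFun) a basic block outside $\BBs_{\mainFun}^{MT}$. By \cref{th:stc}, each such $y$ is then guaranteed to execute only before any \texttt{create} event. This is exactly the hypothesis of \cref{prop:swmr}, which yields $x \notin \memInstrSuff$.

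Second, for the \emph{write} case, I would appeal to soundness of the underlying pointer analysis: since $x$ writes to some location in $\locs{x}$, for every $\ell \in \locs{x}$ actually accessed by $x$ we have $x \in \writes{\ell}$. The hypothesis then applies to $x$ itself, giving $(\func\,x) \notin \Functions^{MT}$ and $(\bblock\,x) \notin \BBs_{\mainFun}^{MT}$. A direct invocation of \cref{th:stc} shows $x$ executes only before any thread creation, hence cannot participate in a race, so $x \notin \memInstrSuff$.

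The argument is therefore almost entirely compositional; the only delicate point is the dependence on the soundness of the pointer analysis providing $\writes{\cdot}$ and $\locs{\cdot}$. In particular, the write case implicitly requires that if an instruction $x$ dynamically writes to $\ell$, then $\ell \in \locs{x}$ and $x \in \writes{\ell}$ in the abstract domain used. I would state this as an explicit assumption on the pointer analysis (a standard soundness condition that is already needed for \cref{prop:swmr} to be applicable at all), and otherwise the proof is a short two-line case split citing the two earlier propositions.
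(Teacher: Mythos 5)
Your proof is correct and follows essentially the same route as the paper, whose proof is a one-line sketch citing \cref{prop:swmr}, soundness of the pointer analysis, and \cref{th:stc}. Your explicit case split for writes (where \cref{prop:swmr} does not directly apply and one must instead observe that $x$ itself lands in $\writes{\ell}$ and hence in ST context) is a detail the paper's sketch glosses over, but it is a refinement of the same argument rather than a different approach.
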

\begin{proofsketch}
    By \cref{prop:swmr}, soundness of the pointer analysis, and \cref{th:stc}.
\end{proofsketch}




\begin{example}
In the running example (\cref{fig:unified_example}), \code{g_config} is written only in the ST function \code{initialize_system} (point 1)
and read in the MT function \code{worker_thread} (point 3). The analysis identifies this pattern, and instrumentation of the read access 
at (point 3) is elided.
\end{example}


\subsection{Lock Ownership (LO) Analysis}
If all accesses to a variable that can happen in multi-threaded mode occur while holding a common lock,
these accesses cannot be involved in any races.
\begin{proposition}\label{prop:protection}
    A memory access instruction $x$ happening while holding the set of locks $S_0$ that potentially accesses memory locations $L$, cannot be involved in any race if all instructions $y$ potentially accessing any location in $L$
    either happen in ST context or happen while holding a set of locks $S_1$ with $S_0 \cap S_1 \neq \emptyset$,
    i.e., they share at least one common lock.
\end{proposition}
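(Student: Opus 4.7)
The plan is to mirror the structure of the proofs for \cref{th:stc} and \cref{prop:swmr}: assume for contradiction a trace $t \in \sem{P}$ containing dynamic events $e_x$ (corresponding to $x$) and $e_y$ (corresponding to some $y$) that access a common memory location $\ell \in L$, with at least one write, and that are unordered with respect to $\to_{hb}$. If $e_x$ and $e_y$ occur in the same thread, $\to_{po}$ immediately orders them — contradiction. Hence we may assume they belong to distinct threads and then split on which of the two hypotheses applies to $y$.

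In the case that $y$ occurs in an ST context, I reuse the argument from the previous propositions: the matching dynamic event $e_y$ executes on the main thread before any \texttt{create} instruction is issued, while $e_x$ runs in a thread (transitively) spawned afterward. This yields an ordering via the transitive closure of $\to_{po}$ and $\to_{cr}$, contradicting raciness.

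The crux of the proof is the remaining case, where $y$ holds a set of locks $S_1$ with $S_0 \cap S_1 \neq \emptyset$ — say $l \in S_0 \cap S_1$. Here I exploit the standard mutual-exclusion invariant of the concrete semantics: at every point of any trace, $l$ is held by at most one thread. Consequently, the lock-ownership intervals of $l$ in the two threads are disjoint. Without loss of generality, assume the acquisition of $l$ that encloses $e_x$ in its thread precedes, in trace order, the acquisition that encloses $e_y$. Then the matching \texttt{unlock} event $u$ in $e_x$'s thread occurs before the subsequent \texttt{lock} event $\ell'$ in $e_y$'s thread; by definition of the lock order, $u \to_{lo} \ell'$. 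Combining with $e_x \to_{po} u$ and $\ell' \to_{po} e_y$ gives $e_x \to_{hb} e_y$ by transitivity, again contradicting the assumption that $e_x$ and $e_y$ race.

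The main obstacle is a piece of bookkeeping rather than a genuinely hard step: the proposition is stated in terms of a \emph{static} predicate ``$x$ is executed while holding $S_0$'', whereas the happens-before argument operates on \emph{dynamic} lock/unlock events. Making this correspondence precise requires observing that, for each dynamic event $e_x$ of $x$, the trace must contain matching \texttt{lock}/\texttt{unlock} events on each $l \in S_0$ in the same thread that properly nest $e_x$; this is a well-nestedness invariant on traces that is easily established by induction on trace length from the constraints imposed by our concrete semantics. Once this correspondence is available, the lock-order step above goes through directly, and the proposition follows.
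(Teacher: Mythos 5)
Your proof is correct and follows essentially the same route as the paper's own proof sketch: contradiction, dispatch of the single-thread and ST-context cases via the earlier propositions, and then the mutual-exclusion argument on a common lock $l \in S_0 \cap S_1$ yielding an ordering through the transitive closure of $\to_{po}$ and $\to_{lo}$. The only difference is that you spell out the well-nestedness bookkeeping between the static lockset predicate and the dynamic lock/unlock events, which the paper's sketch leaves implicit.
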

\begin{proofsketch}
    Consider for a contradiction that there is a trace $t$ in which a dynamic event $e_x$ corresponding to an access instruction $x$ is involved in a race on some memory location
    $\ell$.
    Then, there is another dynamic event $e_y$ accessing the same memory location $\ell$ which corresponds to an access instruction $y$.
    If $y$ or $x$ happens in ST context, we have a contradiction (by \cref{th:stc}). Assume both happen in an MT context, i.e., the analysis cannot determine
    whether they happen before or after the first thread creation. $e_x$ happens while holding the set of locks $S_0$, and $e_y$ happens while holding the set of locks $S_1$ with $S_0 \cap S_1 \neq \emptyset$. W.l.o.g., let $l \in S_0 \cap S_1$ be such a common lock.
    Then, the thread executing $e_x$ must have acquired $l$ before executing $e_x$, and the thread executing $e_y$ must have acquired $l$ before executing $e_y$.
    As only one thread can hold a lock at any given time, either the thread executing $e_x$ released $l$ before $e_y$ acquired it, or vice versa.
    Thus, the two evens are ordered by the transitive closure of $\to_{po}$ and $\to_{lo}$ (contradiction).
\end{proofsketch}
\begin{corollary}
    A memory access instruction $x$ need not be instrumented if, for all memory locations $\ell$ potentially accessed by $x$, there is a non-empty set of locks $S(\ell)$ such that for any access instruction $y$ potentially accessing $\ell$ happening in a multi-thread context, $S(\ell)$ is held.
\end{corollary}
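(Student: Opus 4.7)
The plan is to derive this corollary by instantiating \cref{prop:protection} on a per-location, per-dynamic-event basis, falling back to \cref{th:stc} whenever either side of a potential race happens to be in single-threaded context. At a high level, I want to show that no dynamic event $e_x$ of $x$ can participate in a race, and the argument will proceed by a case split on the context in which $e_x$ executes.

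First I would dispatch the case where $e_x$ executes in ST context: any potentially racing event $e_y$ either lies on the same thread (ordered by $\to_{po}$) or on a thread transitively created by the main thread after $e_x$ (ordered by the transitive closure of $\to_{po}$ and $\to_{cr}$), which is exactly the argument already used in the proof of \cref{th:stc}. The main case is then $e_x$ in MT context, accessing some location $\ell \in \locs{x}$. Since $x$ is itself an access instruction potentially accessing $\ell$ in MT context, the hypothesis of the corollary (read with $y$ ranging over \emph{all} access instructions to $\ell$, including $x$) yields that $e_x$ holds $S(\ell)$, which is non-empty. Any potentially racing $e_y$ must access $\ell$ as well; if its instruction $y$ is in ST context we are back in the previous case with roles swapped, and otherwise $y$ is in MT context, so by hypothesis $e_y$ also holds $S(\ell)$, exhibiting a common lock between $e_x$ and $e_y$. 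Invoking \cref{prop:protection} with $S_0 := S(\ell)$ and $L := \{\ell\}$ then rules out a race, so $x \notin \memInstrSuff$ and need not be instrumented.

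I do not expect any substantial obstacle, since the statement is essentially a packaging of \cref{prop:protection} together with \cref{th:stc}. The only subtle point worth flagging is that the hypothesis must be read as quantifying over all access instructions to $\ell$ (including $x$ itself), because without this reading there would be no way to exhibit a set $S_0$ of locks held by $e_x$ with which to invoke \cref{prop:protection}. A secondary care point is that $x$ may statically access several locations, but since each concrete dynamic event touches only one $\ell \in \locs{x}$, we can apply the argument per event and the multi-location case reduces to a union over the single-location analyses.
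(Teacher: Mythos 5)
Your proposal is correct and matches the paper's (implicit) argument: the corollary is intended as a direct packaging of \cref{prop:protection} together with \cref{th:stc}, with the ST fallback already built into the hypothesis of \cref{prop:protection}. Your flagged subtlety --- that the quantification over $y$ must include $x$ itself so that $e_x$ holds $S(\ell)$ --- is the right reading and is exactly what makes the instantiation $S_0 := S(\ell)$ go through.
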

This set of locks $S(\ell)$ is said to \emph{own} the respective memory location $\ell$.
To compute ownership, we perform an interprocedural, context-insensitive, lockset analysis,
computing for each access instruction the set of must-held locks using $2^{\Locks}$ as the domain,
where the join is given by set intersection.
Then, for each memory location, the intersection of must-held locks at all possible accesses occurring
potentially in multi-threaded mode is computed to determine ownership.
\begin{proposition}
Then, a memory access instruction $x$ need not be instrumented if:
\[
    \forall \ell\in \locs{x}: S(\ell) \neq \emptyset
\]
i.e., all accessed locations are owned by one of the held locks.
\end{proposition}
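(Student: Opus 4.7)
The plan is to reduce this statement directly to Proposition~\ref{prop:protection} via the soundness of the two underlying analyses: the context-insensitive must-lockset dataflow on $2^{\Locks}$ and the pointer analysis furnishing $\locs{\cdot}$ and $\writes{\cdot}$. The proof amounts to chasing through the definition of $S(\ell)$ and invoking the earlier result once those two soundness bridges are in place.

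First, I would establish the standard soundness of the must-lockset dataflow: since the domain is ordered by reverse inclusion with intersection as join, the computed lockset at any access instruction $y$ is, in every trace containing a dynamic event for $y$, a subset of the locks actually held at that event. Context-insensitivity causes no difficulty, since merging over call sites can only shrink the computed must-held set. Unfolding the definition of $S(\ell)$ as the intersection, across every instruction $y$ that may touch $\ell$ in MT context, of $y$'s computed must-lockset then yields the key invariant: every lock in $S(\ell)$ is in fact held at every runtime MT access to $\ell$. Soundness of $\locs{\cdot}$ and $\writes{\cdot}$ ensures that no runtime access is missed in this quantification.

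Second, I would fix an instruction $x$ satisfying $\forall \ell \in \locs{x}: S(\ell) \neq \emptyset$ and, for contradiction, a racing pair $(e_x, e_y)$ of dynamic events on some location $\ell \in \locs{x}$, where $e_x$ corresponds to $x$ and $e_y$ to some other instruction $y$ with $\ell \in \locs{y}$. Applying Proposition~\ref{th:stc} to each side, we may assume both events occur in MT context; otherwise the race is already precluded. Pick any $l \in S(\ell)$; by the invariant above, both $e_x$ and $e_y$ hold $l$ at the time of their respective accesses, so the hypothesis of Proposition~\ref{prop:protection} (with $l \in S_0 \cap S_1$) is satisfied, contradicting the assumed race.

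The main obstacle is essentially pedantic: spelling out the soundness bridge between the dataflow solution and the concrete semantics. One wants a lemma of the shape \emph{``whenever a trace executes an instance of instruction $y$, the runtime held-lockset at that event is a superset of the computed must-lockset at $y$''}, proved by induction on the trace and mirroring the usual correctness argument for intersection-based dataflow, with special care taken around interprocedural edges so that \texttt{lock}/\texttt{unlock} effects are propagated conservatively through callers and callees. Once this lemma is in hand, the remainder of the argument is purely definitional.
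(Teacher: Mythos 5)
Your proposal is correct and follows essentially the same route as the paper, whose entire proof is the one-line sketch ``By \cref{prop:protection}, soundness of the pointer and lockset analyses, and \cref{th:stc}''; you have simply made explicit the soundness bridge for the intersection-based must-lockset dataflow and the reduction to \cref{prop:protection} that the paper leaves implicit.
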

\begin{proofsketch}
    By \cref{prop:protection}, soundness of the pointer and lockset analyses, and \cref{th:stc}.
\end{proofsketch}
\begin{example}
In our running example (\cref{fig:unified_example}), the analysis determines that \code{g_counter} is always accessed within the critical section protected by \code{lock}. Instrumenting the access at point (4) is therefore redundant.
\end{example}

\subsection{Escape Analysis (EA)}\label{ss:ea}
Accesses to memory objects that are known to be confined to a single thread cannot be involved in any races. The same holds for fields of such objects.
\begin{proposition}\label{prop:ea0}
    A memory access instruction $x$ accessing only memory locations that are local to its thread cannot
    be involved in any race. Thus, $x\not\in\memInstrSuff$. 
\end{proposition}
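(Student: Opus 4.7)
The plan is to prove this by contradiction, mirroring the structure of the proof of the STC-related proposition in \cref{ss:stc}. First, I would fix what ``local to its thread'' means semantically: a memory location $\ell$ is local to a thread $T$ in a trace $t$ if no dynamic event in $t$ belonging to a thread other than $T$ accesses $\ell$. The proposition then reads: if, in every trace $t\in\sem{P}$, every location in $\locs{x}$ accessed by a dynamic event corresponding to $x$ is local (in $t$) to the thread executing that event, then $x\notin\memInstrSuff$.

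Next, I would unfold the definition of $\memInstrSuff$ from \cref{def:sufficient-instr-set}. Suppose for contradiction that $x\in\memInstrSuff$. Then there exist a trace $t\in\sem{P}$ and two dynamic events $e_x$ (at instruction $x$) and $e_y$ (at some instruction $y$) that form a data race in $t$ and access the same memory location $\ell$. By the assumption on $x$, $\ell$ is local to the thread of $e_x$ in $t$, so $e_y$ must belong to the same thread as $e_x$. But then $e_x$ and $e_y$ are totally ordered by $\to_{po}$, hence by $\to_{hb}$, contradicting racy-ness of $e_x$ and $e_y$.

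I would present this as a short \textbf{proofsketch} analogous to the STC one, omitting routine reasoning about $\to_{hb}$ being the reflexive-transitive closure of $\to_{po},\to_{lo},\to_{cr}$ (which the reader has just seen used in the STC proof). The step that needs the most care, and which I expect to be the main obstacle, is not the logical argument but the formal statement: the proposition presupposes a semantic notion of ``local to its thread'' that has not yet been pinned down in the excerpt. If the surrounding text does not define it precisely, I would add one sentence identifying locality with the non-existence of cross-thread accesses in any trace of $\sem{P}$, so that the contradiction step is unambiguous. Once that definition is in place, the proof is essentially the same two-case analysis as for STC, but only the ``same thread'' case survives, because the assumption of locality directly rules out the ``different threads'' case without needing $\to_{cr}$ at all.
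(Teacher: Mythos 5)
Your proposal is correct and matches the paper's own proof sketch essentially verbatim: both argue by contradiction that a racing partner $e_y$ on a thread-local location must lie on the same thread as $e_x$ and is therefore ordered by $\to_{po}$. Your extra step of pinning down what ``local to its thread'' means semantically is a reasonable addition, but it does not change the argument, which the paper likewise leaves at the level of a short sketch.
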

\begin{proofsketch}
    Consider for a contradiction that there is a trace $t$ in which a dynamic event $e_x$ corresponding to an access instruction $x$ is involved in a race on some memory location $\ell$.
    Then, there is another dynamic event $e_y$ accessing the same memory location $\ell$ which corresponds to an access instruction $y$.
    As $\ell$ is local to the thread executing $e_x$, $e_y$ must be executed by the same thread as $e_x$.
    Thus, $e_x$ and $e_y$ are ordered by $\to_{po}$ (contradiction).
\end{proofsketch}
An object $o$ is initially considered to be \textbf{non-escaping} (thread-local). It transitions to the \textbf{escaped} state if a pointer to it, $p$, is used in any of the following ways, which form a recursive definition:

\begin{enumerate}
\item \textbf{Global Escape:} $p$ is stored into a global variable.
\item \textbf{Transitive Escape:} $p$ is stored into a field of another object, $o'$, where $o'$ itself is already in an escaped state.
\item \textbf{Argument Escape:} $p$ is passed as an argument to a function where analysis determines that the corresponding parameter escapes within the callee.
\item \textbf{Return Escape:} $p$ is returned from a function, making it accessible to the calling context.
\end{enumerate}
\begin{proposition}\label{prop:ea1}
    An object $o$ that has not escaped according to the definition above remains thread-local,
    and thus any access to it is performed by the thread that created it.
\end{proposition}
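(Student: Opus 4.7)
The plan is to prove the contrapositive: if an object $o$ is accessed by a thread other than the one that created it, then $o$ must have triggered one of the four escape conditions listed above. The central observation enabling the argument is that, since $\texttt{create(f)}$ takes no arguments, a freshly spawned thread cannot directly receive any pointer from its creator; the only channel through which a pointer can become visible to another thread is global memory, or a chain of writes ending at a target already visible to that other thread.

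First, I would introduce, for a trace prefix $t$ and a thread $T$, the set $\text{Reach}_T(t)$ of memory addresses that $T$ can access by dereferencing chains of pointers starting from its registers and local variables at $t$. Any access performed by $T$ in the step following $t$ must target an address in $\text{Reach}_T(t)$. The key invariant, established by induction on the length of $t$, is that every location belonging to a non-escaped object stays outside $\text{Reach}_{T'}(t)$ for every thread $T' \ne \text{creator}(o)$ and every prefix $t$ of the trace.

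For the inductive step, I would examine each instruction type and argue it preserves the invariant. Register-only computations, address-taking, and accesses confined to memory not reachable by $T'$ trivially preserve it. The interesting case is a store $w(*a_0, a_1)$ where $a_1$ holds a pointer transitively reaching $o$: such a write can extend another thread's reach only when the target pointed to by $a_0$ is a global variable or a field of an already-escaped object, which are precisely the Global Escape and Transitive Escape conditions. Function calls and returns that propagate $o$'s address map directly onto Argument Escape and Return Escape. Finally, when a new thread is created, its initial $\text{Reach}$ is limited to addresses obtainable from globals, none of which hold a pointer to $o$ by the inductive hypothesis.

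The main obstacle is formalizing the transitive cascade cleanly: a store that places a pointer to a non-escaped $o$ into the field of another non-escaped object $o'$ is harmless \emph{as long as} $o'$ stays non-escaped, but if $o'$ later escapes, the recursive definition must retroactively classify $o$ as escaped too. Handling this requires interpreting the four-clause definition as a least fixed point over the set of objects and proving the invariant simultaneously for all non-escaped objects, rather than reasoning about a single $o$ in isolation. Once the fixed point is fixed for the full execution, the inductive argument above specializes to each individual $o$ and yields the proposition.
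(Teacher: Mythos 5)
Your proposal is correct and takes essentially the same approach as the paper, whose proof is only a two-sentence sketch stating that one must ``verify that if none of the conditions above are met, no pointer to $o$ becomes accessible to any other thread''; your reachability invariant, the induction over trace prefixes with a case split on instruction types, and the observation that argumentless \texttt{create} forces all cross-thread pointer flow through globals are precisely the verification the paper leaves implicit. Your added care in treating the four-clause escape definition as a least fixed point and proving the invariant simultaneously for all non-escaped objects addresses a genuine subtlety the paper's sketch glosses over, and is the right way to make the argument rigorous.
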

\begin{proofsketch}
    For another thread to accesses $o$, it must obtain a pointer to it. We verify that if none of the conditions above are met, no pointer to $o$ becomes accessible to any other thread.
\end{proofsketch}
Our escape analysis is an interprocedural, field-sensitive, bidirectional analysis that identifies memory objects (or their specific fields) confined to a single thread.
This design allows for more precise identification of non-escaping data compared to traditional unidirectional analyses.
While we provide a more formal description of our escape analysis in \cref{app:ea}, we illustrate the key ideas here along the example in \cref{fig:ipa_diagram} depicting an abstract call graph.
The analysis proceeds in two main phases:
\begin{itemize}
    \item \textbf{Bottom-up Pass (blue):} The analysis begins at the leaves of the call graph (Step \textcircled{1}). For \code{func_B} and \code{func_C}, it generates summaries indicating how they affect their arguments. For instance, the summary for \code{func_B} concludes that its parameter `q` does not escape, while the summary for \code{func_C} concludes that `r` escapes. At Step \textcircled{2}, the caller \code{func_A} aggregates these summaries to determine its own behavior.
    
    \item \textbf{Top-down Pass (green):} At Step \textcircled{3}, the analysis uses context from the caller. Since `x` is passed to an escaping function directly in \code{main}, it is marked as escaping. This context is then propagated downwards. At Step \textcircled{4}, when analyzing the call to \code{func_A}, this "escaping" property is passed to parameter `y`, refining the analysis.
\end{itemize}
These phases are iterated until a fixed point is reached.
This bidirectional flow, combined with field-sensitivity, allows for highly precise identification of non-escaping data.

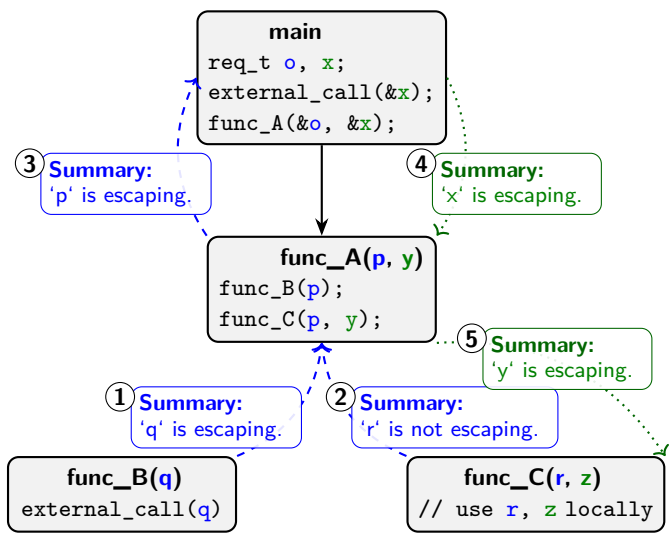
\begin{figure}[h!]
\centering
\begin{tikzpicture}[
    scale=0.9,
    node distance=1.3cm and 1.3cm,
    func_node/.style={
        rectangle, draw, rounded corners, fill=gray!10, thick,
        minimum width=3cm, align=left, inner sep=4pt, font=\sffamily\small
    },
    lib_call/.style={
        func_node, draw=orange!80!black, fill=orange!15, dashed
    },
    annot_box/.style={
        font=\sffamily\footnotesize, align=left, inner sep=3pt,
        draw, rounded corners, fill=white, fill opacity=0.9, text opacity=1
    },
    stage_annot/.style={
        font=\sffamily\bfseries\small, circle, draw, fill=white, inner sep=1pt, anchor=north east 
    },
    call_edge/.style={->, thick, >=Stealth},
    bottom_up_edge/.style={->, thick, blue, dashed, bend left=30},
    bottom_up_edge2/.style={->, thick, blue, dashed, bend right=30},
    top_down_edge/.style={->, thick, green!40!black, dotted, bend left=30}
]
    \node[func_node] (main) {
        \textbf{\hspace{0.8cm}main} \\
        \texttt{req\_t {\color{blue}o}, {\color{green!50!black}x};} \\
        \texttt{external\_call(\&{\color{green!50!black}x});} \\
        \texttt{func\_A(\&{\color{blue}o}, \&{\color{green!50!black}x});}
    };
    \node[func_node, below=1.2cm of main] (func_A) {
        \textbf{\hspace{0.8cm}func\_A({\color{blue}p}, {\color{green!50!black}y})} \\
        \texttt{func\_B({\color{blue}p});} \\
        \texttt{func\_C({\color{blue}p}, {\color{green!50!black}y});}
    };
    \node[func_node, below left=1.5cm and -0.4cm of func_A] (func_B) {
        \textbf{\hspace{0.6cm}func\_B({\color{blue}q})} \\
        \texttt{external\_call({\color{blue}q})}
    };
    \node[func_node, below right=1.5cm and -0.4cm of func_A] (func_C) {
        \textbf{\hspace{0.6cm}func\_C({\color{blue}r}, {\color{green!50!black}z})} \\
        \texttt{// use {\color{blue}r}, {\color{green!50!black}z}} \code{locally}
    };

    \draw[call_edge] (main) -- (func_A);

    \draw[bottom_up_edge2] (func_B.north east) to (func_A.south);
    \draw[bottom_up_edge] (func_C.north west) to (func_A.south);
    \draw[bottom_up_edge] (func_A.north west) to (main.west);
    \draw[top_down_edge] (main.east) to (func_A.north east);
    \draw[top_down_edge] (func_A.south east) to (func_C.north east);

    \node[annot_box, text=blue, draw=blue, above right=0.1cm and -1.4cm of func_B.north east] (summary_b) {
        \textbf{Summary:} \\ `\color{blue}{q}` is escaping.
    };
    \node[stage_annot] at (summary_b.north west) {1};

    \node[annot_box, text=blue, draw=blue, above left=0.1cm and -2.0cm of func_C.north west] (summary_lib) {
        \textbf{Summary:} \\ `\color{blue}{r}` is not escaping.
    };
    \node[stage_annot] at (summary_lib.north west) {2};

    \node[annot_box, text=blue, draw=blue, below left=0.1cm and -0.2cm of main.south west] (summary_a) {
        \textbf{Summary:} \\ `\color{blue}{p}` is escaping.
    };
    \node[stage_annot] at (summary_a.north west) {3};
    
    \node[annot_box, text=green!40!black, draw=green!40!black, below right=0.1cm and -0.2cm of main.south east] (summary_main_top_down) {
        \textbf{Summary:} \\ `\color{green!50!black}{x}` is escaping.
    };
    \node[stage_annot] at (summary_main_top_down.north west) {4};

    \node[annot_box, text=green!40!black, draw=green!40!black, below right=-0.2cm and 0.6cm of func_A.south east] (summary_func_A_top_down) {
        \textbf{Summary:} \\ `\color{green!50!black}{y}` is escaping.
    };
    \node[stage_annot] at (summary_func_A_top_down.north west) {5};

\end{tikzpicture}
\caption{Illustration of the bidirectional Interprocedural Escape Analysis (IPA). The analysis proceeds in numbered steps. \textbf{\textcircled{1}-\textcircled{3} Bottom-up Pass:} Summaries are generated at the leaves and propagated upwards. \textbf{\textcircled{4}-\textcircled{5} Top-down Pass:} Context from the root is propagated downwards to refine the analysis.}
\label{fig:ipa_diagram}
\end{figure}
\begin{example}In Figure~\ref{fig:unified_example}, the \code{req_t} object is allocated on the stack. The IPA proves that \code{process_req_internals} does not cause its argument to escape (point 2a). Although a pointer to the \code{payload} field does escape by being stored in a global queue (point 2b), our field-sensitive analysis correctly identifies that the \code{id} field remains thread-local. As a result, the access to \code{req->id} at point (2a) is safely exempted from instrumentation.
\end{example}

\section{Beyond Sufficient Instrumentation: Redundancy Elimination}\label{s:beyond}
Many of the inserted checks will flag a race if and only if another,
earlier, check on the same memory location also flags a race.
\tsan\ already exploits this insight to only instrument one access to the
same memory per basic block. Here, we want to go beyond this quite local notion
of redundancy and also eliminate instrumentation \emph{across} basic blocks.
We first define our notion of dominance:
\begin{definition}[Data Race Domination]\label{def:domination}
    A data race $r_1$ involving accesses $a_1$ and $b_1$ is said to (pre-)\emph{dominate} another data race $r_2$ involving accesses $a_2$ and $b_2$ if, for every trace $t$ containing $r_2$, the following 
    conditions hold:
    \begin{enumerate}
        \item $t$ also contains $r_1$; and
        \item all accesses refer to the same memory locations.
    \end{enumerate}
\end{definition}
\begin{definition}[Redundancy]
    \label{def:redundancy}
    An access only involved in dominated races, is called
    \emph{redundant} (for instrumentation).
\end{definition}
Let the set of all redundant accesses in a program be called $\memInstrRed$.
\begin{proposition}\label{prop:red-complete}
    A dynamic data race analyzer that is complete w.r.t.\ instrumented accesses, will, when
    setting the instrumentation set to some $\memInstr$ with $\memInstr \supseteq \memInstrSuff \setminus
    \memInstrRed$, report all observed non-dominated races.
\end{proposition}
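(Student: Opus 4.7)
The plan is to chain together the three definitions in a straightforward way, since the proposition is essentially a bookkeeping claim about how \cref{def:redundancy,def:domination,def:completeness} fit together. Let $t \in \sem{P}$ be any observed trace and let $r$ be an arbitrary non-dominated race in $t$, involving dynamic events $x$ and $y$ at access instructions $i_x$ and $i_y$, respectively. The goal is to show that the analyzer flags $r$.

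First, I would invoke \cref{def:sufficient-instr-set} to conclude $\{i_x, i_y\} \subseteq \memInstrSuff$, since $i_x$ and $i_y$ witness an actual race in a trace of $P$. Second, I would argue that neither $i_x$ nor $i_y$ lies in $\memInstrRed$: by \cref{def:redundancy}, an instruction is redundant only if \emph{every} race in which it participates is dominated, but by hypothesis $r$ is a non-dominated race involving both $i_x$ and $i_y$, so neither can be classified as redundant. Combining the two observations gives $\{i_x, i_y\} \subseteq \memInstrSuff \setminus \memInstrRed \subseteq \memInstr$. Finally, \cref{def:completeness} applied to $x$, $y$, and $\memInstr$ yields that the analyzer reports $r$ when analyzing $t$.

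The only real subtlety — and hence the main obstacle — is keeping the quantification in \cref{def:redundancy} straight when applied contrapositively. The definition says an access is redundant iff every race it takes part in (across all traces of $P$) is dominated; hence the contrapositive I need is that the existence of \emph{one} non-dominated race involving an access suffices to exclude the access from $\memInstrRed$. I would spell this contrapositive out explicitly once to avoid conflating ``redundant access'' (a per-instruction property defined over all traces) with ``dominated race'' (a per-race property). No properties of the specific analyses from \cref{sec:analyses} are needed, and no reasoning about $\to_{hb}$, $\to_{po}$, $\to_{lo}$, or $\to_{cr}$ enters the argument — the proof is purely a consequence of set containment and the abstract completeness assumption on the detector.
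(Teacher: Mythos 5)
Your proof is correct and follows exactly the route the paper takes: the paper's own proof is just the one-line citation ``Directly from \cref{def:completeness}, \cref{def:sufficient-instr-set}, and \cref{def:redundancy}'', and your write-up is precisely that chain of definitions spelled out, including the correct contrapositive reading of redundancy. No issues.
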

\begin{proof}
    Directly from \cref{def:completeness,def:sufficient-instr-set,def:redundancy}.
\end{proof}
\cref{prop:red-complete} does not imply that only non-dominated races will be reported.
This cannot be guaranteed even when setting $\memInstr = \memInstrSuff \setminus \memInstrRed$:
two accesses that need to be instrumented, as each is involved
in a non-dominated race, can also race with each other, where that race is dominated.

As before for $\memInstrSuff$, computing $\memInstrRed$ exactly is not possible.
We instead compute an \emph{underapproximation} of  $\memInstrRed$: Removing an
\emph{underapproximation} of redundant accesses from an overapproximation of sufficient
instrumentation yields an overapproximation of $\memInstrSuff \setminus\memInstrRed$,
preserving the completeness property stated in \cref{prop:red-complete}.

\subsection{Underapproximating Redundant Accesses}
To identify definitely redundant accesses, we overapproximate the control-flow by considering all paths in the CFGs,
ignoring, e.g., the effect of guards and by demanding that all paths between two accesses are free from any release-like
synchronization instructions,
ensuring that no inter-thread happens before relationships can be established. 
\begin{definition}[Release-like Synchronization Instruction]
    An instruction is called a \emph{release-like synchronization instruction} 
    if a dynamic event corresponding to it can appear on the left side of one of the orders
    whose transitive closure defines the happens-before relation other than the program order.
    For our core language, this corresponds to instructions \code{unlock(a)} for mutexes \code{a}, and \code{create(f)}
    for unary functions \code{f}.
\end{definition}
Let $a$ be a memory access instruction in the program $P$.
We mark $a$ as redundant if there is a memory access instruction $a_{\text{before}}$ such that the following conditions hold:
\begin{enumerate}
    \item $\bblock\,a_{\text{before}} \text{ dominates } \bblock\,a$; and\label{item:dom}
    \item $\locs{a} = \locs{a_{\text{before}}} \land |\locs{a}| = 1$, i.e., both access instructions access the same (concrete!) memory location\footnote{
        In our implementation, we use the weaker (but still sufficient) requirement that
        the accessed locations \emph{must} alias, as computed by the \textsc{LLVM} framework.
    };\label{item:same-loc}
    \item if $a$ is a write, $a_{\text{before}}$ is also a write;\label{item:write}
    \item all paths from $a_\text{before}$ to $a$ through the \textbf{inter-procedural} control-flow-graph are free
    of release-like synchronization instructions and external calls.\label{item:no-sync}
\end{enumerate}
\begin{proposition}\label{prop:really-red}
    An access instruction $a$ fulfilling conditions (1)-(4) is redundant, i.e., $a \in \memInstrRed$.
\end{proposition}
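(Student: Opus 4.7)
The plan is to fix an arbitrary trace $t \in \sem{P}$ containing a race $r_2 = \{e_a, e_c\}$ where $e_a$ is a dynamic instance of $a$, and to exhibit an instance $e_b$ of $a_{\text{before}}$ such that $r_1 = \{e_b, e_c\}$ is also a race whose static endpoints both access the single location $\ell$ promised by condition~(2). Combined with \cref{def:domination,def:redundancy}, this will show that every race involving $a$ is dominated, hence $a \in \memInstrRed$.

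I would first construct $e_b$. Within the activation of $\func\,a$ that produced $e_a$, control reaches $\bblock\,a$, so by condition~(1) it must have previously entered $\bblock\,a_{\text{before}}$ and (since basic blocks are straight-line) executed $a_{\text{before}}$ at least once; take $e_b$ to be the most recent such event, so that $e_b \to_{po} e_a$. Condition~(2) gives $\locs{a} = \locs{a_{\text{before}}} = \{\ell\}$, hence $e_b$ and $e_c$ access the same location as $e_a$ and $e_c$. At least one of $\{e_b,e_c\}$ is a write: by condition~(3), if $e_a$ is a write then so is $e_b$; and if $e_a$ is a read then $e_c$ must already be a write since $r_2$ is a race.

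The crux is to show $e_b$ and $e_c$ are unordered by $\to_{hb}$. The direction $e_c \to_{hb} e_b$ is ruled out immediately, since composing with $e_b \to_{po} e_a$ would yield $e_c \to_{hb} e_a$, contradicting that $r_2$ is a race. For $e_b \to_{hb} e_c$, I would decompose the witnessing chain into a finite sequence of $\to_{po}$, $\to_{lo}$, $\to_{cr}$ steps and isolate the first step that leaves the thread of $e_b$; its source $e_{i-1}$ necessarily corresponds to a release-like synchronization instruction (\code{unlock} or \code{create}) performed by that thread. A case split on the program-order position of $e_{i-1}$ relative to $e_a$ then yields a contradiction: if $e_{i-1}$ strictly precedes $e_a$, it lies on the dynamically taken interprocedural path from $a_{\text{before}}$ to $a$, which condition~(4) forbids; if instead $e_a \to_{po} e_{i-1}$, then $e_a \to_{po} e_{i-1} \to_{hb} e_c$ composes to $e_a \to_{hb} e_c$, again contradicting that $r_2$ is a race. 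This case split is the main obstacle; it rests on two observations that must be stated carefully: every edge of $\to_{hb}$ other than $\to_{po}$ originates at a release-like instruction in the source thread, and condition~(4)'s static interprocedural guarantee translates into a dynamic guarantee along the trace segment in the thread of $e_b$ between $e_b$ and $e_a$. Once both orientations of $\to_{hb}$ are excluded, $r_1 = \{e_b,e_c\}$ is a race accessing $\ell$, and since $e_c$ was arbitrary in an arbitrary trace, $a$ is only ever involved in dominated races.
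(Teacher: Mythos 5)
Your proposal is correct and follows essentially the same route as the paper's proof: use condition~(1) to locate an occurrence of $a_{\text{before}}$ program-order-before $e_a$ in any trace where $a$ races, use conditions~(2) and~(3) to match the location and access type, and use condition~(4) to transfer the absence of a happens-before ordering from the pair $(e_a, e_c)$ to the pair $(e_b, e_c)$. Your explicit decomposition of a hypothetical $e_b \to_{hb} e_c$ chain at the first thread-leaving edge, with the case split on whether its release-like source precedes or follows $e_a$ in program order, is a more detailed rendering of the step the paper compresses into the single claim that no outgoing non-po edges exist between $e_{a_{\text{before}}}$ and $e_a$.
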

\begin{proof}
    An access instruction $a$ is redundant if it is only involved in dominated races.
    Assume for a contradiction that there is an $a_{\text{before}}$ fulfilling conditions (1)-(4),
    but $a$ is involved in a non-dominated data race $r$ involving another access, say $b$.
    We show that under these circumstances, there also is a race between $a_{\text{before}}$ and $b$ which dominates
    $r$, contradicting the assumption that $r$ is non-dominated.
    Consider some trace $t$ containing a race between dynamic events $e_a$ and $e_b$ corresponding to access
    instructions $a$ and $b$.
    This trace also contains a dynamic event $e_{a_{\text{before}}}$ corresponding to access $a_{\text{before}}$, as
    $\bblock\,a_{\text{before}}$ dominates $\bblock\,a$ (by (\ref{item:dom})).
    Also, by (\ref{item:same-loc}), both $e_a$ and $e_{a_{\text{before}}}$ access the same memory location, which, as
    $e_a$ and $e_b$ race, is the same as the location accessed by $e_b$.
    By (\ref{item:write}), if $e_a$ is a write, so is $e_{a_{\text{before}}}$. Thus, at least one of the accesses
    $e_{a_{\text{before}}}$ and $e_b$ is a write.
    By (\ref{item:no-sync}), all paths from $a_{\text{before}}$ to $a$ are free from release-like synchronization instructions,
    meaning there are no outgoing non-po happens before edges 
    for any of the events belonging to the same thread ordered between $e_{a_{\text{before}}}$ and $e_a$ by the program order.
    Thus, as $e_a$ and $e_b$ are unordered, so are $e_{a_{\text{before}}}$ and $e_b$.
    Thus, $e_{a_{\text{before}}}$ and $e_b$ constitute a race which dominates the race between $e_a$ and $e_b$,
    contradicting the assumption that $r$ is non-dominated.
\end{proof}
\begin{example}
    Consider the CFG fragment in \cref{fig:dominance_diagram}. 
    \code{BB1} dominates \code{BB3}, access instructions \code{I1} and \code{I2} refer to the same memory location,
    both are reads, and the path from \code{I3} to \code{I5} is free from release-like synchronization instructions, loops and function calls.
    Thus, instrumentation of \code{I1} is redundant.
\end{example}
For a more real-world example consider the following example:
\begin{example}
In our running example (\cref{fig:unified_example}), the write \code{g_sum = 0} occurs before the loop. 
This write access dominates all operations inside the loop. Consequently, the read access to \code{g_sum} that occurs within the loop body at point (5) (as part of the \code{+=} operation) is dominated. Assuming the path is free from release-like synchronization instructions, the instrumentation for this read is redundant and can be safely eliminated.
\end{example}

\subsection{Post Dominance}
In \cref{def:domination}, as in the rest of this paper, we refer to all traces, not only to those
that reach some final state (such as the program terminating), allowing us to reason also about
programs where not all executions terminate regularly.
If one restricts to traces reaching a final state, one can also consider accesses redundant in case
each racy trace containing a race at some access also contains a race on the same memory location
where the later access happens after the earlier one in the trace.
When one considers not only traces reaching a final state, one can still give a meaningful definition
of redundancy that allows for more accesses to be eliminated.

\begin{definition}[Data Race Post-Domination]\label{def:post-domination}
    A data race $r_2$ involving accesses $a_2$ and $b_2$ is said to (post-)\emph{dominate} another data race $r_1$ involving accesses $a_1$ and $b_1$ if,
    for every trace $t$ containing $r_1$, the following 
    conditions hold:
    \begin{enumerate}
        \item Every trace that has $t$ as a prefix can be extended to a trace $t'$ containing $r_2$; and
        \item all accesses refer to the same memory locations.
    \end{enumerate}
\end{definition}
Intuitively, this definition requires that, as soon as $r_1$ occurs, $r_2$ must eventually also occur in the same execution, by demanding
that every possible trace originating at $t$ (including $t$ itself) can be extended to contain $r_2$.
This excludes cases such as the program taking a different branch along which $r_2$ does not happen or 
program execution diverging after $r_1$, but before $r_2$ has happened.
\begin{definition}[Weakly Sufficient Instrumentation]\label{def:weak-sufficient-instr-set}
    A set of access instructions $\mathcal{W}$ is called \emph{weakly sufficient} instrumentation for a program $P$,
    if for any race in the program involving access instructions $a$ and $b$, either $\{a,b\} \subseteq \mathcal{W}$ or there is a race
    involving access instructions $c$ and $d$ such that that race (post-)dominates the race involving $a$ and $b$.
\end{definition}
Unlike for $\memInstrSuff$ and $\memInstrRed$, there is not always a unique smallest weakly sufficient instrumentation set.
Consider two threads both writing the same variable twice without any synchronization in between. It suffices to instrument
one of the writes in each thread, yielding four different minimal weakly sufficient instrumentation sets.
\begin{proposition}
    A dynamic data race analyzer that is complete w.r.t.\ instrumented accesses, will, when setting the instrumentation set to
    $\memInstr \supseteq \mathcal{W}$, and not aborting the execution of the program under
    test prematurely, report every observed race --- or an equivalent post- or pre-dominating race.
\end{proposition}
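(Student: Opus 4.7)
The plan is to consider an arbitrary race $r$ observed in a trace $t$ seen by the analyzer, involving access instructions $a$ and $b$, and to show that $r$ or some equivalent/dominating race is reported. I would start by applying \cref{def:weak-sufficient-instr-set} to $r$: either $\{a,b\}\subseteq \mathcal{W}\subseteq \memInstr$, in which case completeness w.r.t.\ instrumented accesses (\cref{def:completeness}) directly yields that the analyzer reports $r$; or there is a race $r'$ involving accesses $c,d$ that pre- or post-dominates $r$, and I am left with showing that such a dominating race (or a further dominating one) is itself reported.

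The pre-domination case is straightforward from \cref{def:domination}: every trace containing $r$ already contains $r'$, so $r'$ is part of the observed trace and can be handled in the same way as the original $r$. For post-domination (\cref{def:post-domination}), I would crucially invoke the hypothesis that execution is not aborted prematurely: every trace extending $t$ can in turn be extended to one containing $r'$, so the execution actually observed by the analyzer must include $r'$, since otherwise the program would have to stop or diverge before reaching $r'$, contradicting the non-abortion assumption.

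Next, I would iterate the argument by reapplying \cref{def:weak-sufficient-instr-set} to $r'$, obtaining either a reported race or yet another dominating race $r''$. The structural fact needed to close the argument is that the pre- and post-dominance relations on races are transitive (which follows by chaining the nested implications in \cref{def:domination,def:post-domination}) and that the set of race ``kinds'' (indexed by pairs of access instructions and accessed memory locations, which are finite by the assumptions made in \cref{sec:background}) is finite. Picking a minimal element of the dominance chain in this finite set yields a race whose access instructions must lie in $\mathcal{W}$ (otherwise \cref{def:weak-sufficient-instr-set} would provide a strictly smaller dominating race), and this race is reported by completeness. Transitivity then shows that the reported race indeed dominates $r$ or is equivalent to it in the sense of mutual dominance.

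The main obstacle will be making the termination argument airtight while pinning down what ``not aborting prematurely'' means formally. I would interpret this as follows: the trace the analyzer observes cannot be truncated inside a non-trivial extension forced by \cref{def:post-domination}, so the eventual dominating race in $\mathcal{W}$ is guaranteed to appear in the observed execution. Handling mutual-dominance cycles between races also requires some care, but these do not damage the proof because the claim only asks for an \emph{equivalent} race to be reported, which precisely accommodates the mutual-dominance case.
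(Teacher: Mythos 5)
Your opening move matches the paper's (one-line) proof: apply \cref{def:weak-sufficient-instr-set} to the observed race and discharge the first disjunct via \cref{def:completeness}, using the non-abortion hypothesis to guarantee that a post-dominating race is actually observed. The trouble begins where you iterate. Your termination argument claims that a minimal element of the dominance chain must have its accesses in $\mathcal{W}$, ``otherwise \cref{def:weak-sufficient-instr-set} would provide a strictly smaller dominating race.'' That is not what the definition gives you: it only asserts the \emph{existence} of some dominating race, which may be one you have already visited. Since domination is not antisymmetric --- mutual domination is exactly the ``equivalence'' the paper is built around (e.g.\ two consecutive unsynchronized writes $a_1;a_2$ in one thread racing with a write $b$ in another: $(a_1,b)$ pre-dominates $(a_2,b)$ and $(a_2,b)$ post-dominates $(a_1,b)$) --- your chain can cycle among races none of whose accesses lie in $\mathcal{W}$. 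In that situation every race in the cycle satisfies the second disjunct of the definition, your minimality argument produces no contradiction, and nothing is reported. Your closing remark that mutual-dominance cycles are harmless because the claim ``only asks for an equivalent race to be reported'' misses this: an equivalent race must still be \emph{reported}, which requires its accesses to be instrumented, and the cycle gives you no such race.

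The underlying issue is that \cref{def:weak-sufficient-instr-set} as literally stated does not constrain $\mathcal{W}$ in its second disjunct at all (in the example above even $\mathcal{W}=\emptyset$ would qualify as weakly sufficient), so no amount of iteration can close the argument. The reading the paper's ``direct'' proof tacitly relies on is that the dominating race's accesses $\{c,d\}$ are themselves required to lie in $\mathcal{W}$; under that reading a single application of the definition suffices --- the dominating race is instrumented, it is observed (immediately for pre-domination, via the non-abortion hypothesis for post-domination), and completeness w.r.t.\ instrumented accesses reports it --- and your entire iteration-plus-well-foundedness apparatus becomes unnecessary. So the fix is not to strengthen your termination argument but to strengthen (or correctly read) the definition you are iterating.
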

\begin{proof}
    Directly from \cref{def:completeness,def:weak-sufficient-instr-set}.
\end{proof}
\begin{proposition}\label{prop:post-red-complete}
    If a set $\mathcal{W}$ is a weakly sufficient instrumentation set for a program $P$,
    so is the set $\mathcal{W} \cap \memInstrSuff$.
\end{proposition}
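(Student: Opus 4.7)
The plan is to prove the proposition directly from the definitions of weak sufficiency (\cref{def:weak-sufficient-instr-set}) and of $\memInstrSuff$ (\cref{def:sufficient-instr-set}). The key structural observation driving the argument is that the two disjuncts in the definition of weak sufficiency play very different roles with respect to the instrumentation set: only the first disjunct, $\{a,b\}\subseteq\mathcal{W}$, mentions $\mathcal{W}$ at all, while the second (existence of a (post-)dominating race) is an intrinsic property of the program $P$.

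First, I would fix an arbitrary race in $P$ involving access instructions $a$ and $b$. By the definition of $\memInstrSuff$, the pair $\{a,b\}$ lies in $\memInstrSuff$, since $\memInstrSuff$ is built precisely from access-instruction pairs that race in at least one trace. Next, I would apply the assumed weak sufficiency of $\mathcal{W}$ to this race and split into the two cases provided by \cref{def:weak-sufficient-instr-set}.

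In Case (a), $\{a,b\}\subseteq\mathcal{W}$, and combining this with $\{a,b\}\subseteq\memInstrSuff$ immediately yields $\{a,b\}\subseteq\mathcal{W}\cap\memInstrSuff$, giving the first disjunct of weak sufficiency for $\mathcal{W}\cap\memInstrSuff$. In Case (b), there exist access instructions $c$ and $d$ whose race (pre- or post-)dominates the race between $a$ and $b$. Since this existence claim is entirely about races in $P$ and makes no reference to $\mathcal{W}$, exactly the same witnessing race discharges the second disjunct of weak sufficiency for $\mathcal{W}\cap\memInstrSuff$. As the race between $a$ and $b$ was arbitrary, $\mathcal{W}\cap\memInstrSuff$ is weakly sufficient.

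There is not really a hard step here; the proposition is essentially a sanity check. The only thing to be careful about is recognizing that \cref{def:weak-sufficient-instr-set} does not require the accesses $c,d$ of the dominating race to themselves lie in $\mathcal{W}$, so no chasing of dominance chains or inductive/well-foundedness argument is needed. Consequently, restricting $\mathcal{W}$ to $\memInstrSuff$ can never break the second disjunct, and it preserves the first disjunct whenever it holds for a genuine race, since by \cref{def:sufficient-instr-set} the two accesses involved in that race already belong to $\memInstrSuff$.
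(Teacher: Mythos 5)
Your proof is correct and follows essentially the same route as the paper's: the key observation in both is that intersecting with $\memInstrSuff$ only removes accesses not involved in any race, so the first disjunct of weak sufficiency is preserved for every genuine race while the second disjunct is independent of the instrumentation set. Your version merely spells out the case split that the paper's one-line argument leaves implicit.
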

\begin{proof}
    Consider some access instruction $i \in \mathcal{W}$ but not in $\memInstrSuff$.
    By \cref{def:sufficient-instr-set}, $i$ is not involved in any data races.
    Thus, removing $i$ from $\mathcal{W}$ preserves the weak sufficiency property (\cref{def:weak-sufficient-instr-set}).
\end{proof}
These definitions allow for a more aggressive elimination of redundant accesses, where post-dominated accesses are
also eliminated.
Just like for dominance-based redundancy elimination where we must ensure that no release-like synchronization
instructions occur between the dominating and dominated access, for post-dominance-based redundancy elimination,
we must ensure that no acquire-like synchronization instructions occur on any path from the post-dominated
access to the post-dominating access.
\begin{definition}[Acquire-like Synchronization Instruction]
    An instruction is called a \emph{acquire-like synchronization instruction} 
    if a dynamic event corresponding to it can appear on the right side of one of the non-po orders
    whose transitive closure defines the happens-before relation.
    For our core language, this corresponds to instructions \code{lock(a)} for mutexes \code{a}.\footnote{
        If the core language also contained thread-join operations, these would also be acquire-like synchronization instructions.
    }
\end{definition}
Let $a$ be a memory access instruction in the program $P$.
We mark $a$ as being involved in post-dominated races only if there is a memory access instruction 
$a_{\text{after}}$ such that the following conditions hold:
\begin{enumerate}
    \item $\bblock\,a_{\text{after}} \text{ post-dominates } \bblock\,a$; and\label{item:dom-post}
    \item $\locs{a} = \locs{a_{\text{after}}} \land |\locs{a}| = 1$, i.e., both access instructions access the same (concrete!) memory location;
    \label{item:same-loc-post}
    \item if $a$ is a write, $a_{\text{after}}$ is also a write;\label{item:write-post}
    \item all paths from $a$ to $a_{\text{after}}$ through the control-flow-graph are free
    of acquire-like synchronization instructions and external calls;\label{item:no-sync-post}
    \item all paths from $a$ to $a_{\text{after}}$ do not contain any loops or function calls.\label{item:no-diverge}
\end{enumerate}
\begin{proposition}
    An access instruction $a$ fulfilling conditions (1)-(5) is only involved in post-dominated races.
\end{proposition}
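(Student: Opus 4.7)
The plan is to mirror the proof of \cref{prop:really-red}, adapting it to the forward-looking nature of post-domination. Given a race $r$ involving $a$, I would exhibit a race involving $a_{\text{after}}$ on the same memory location that post-dominates $r$ in the sense of \cref{def:post-domination}. Concretely, suppose $r$ occurs in a trace $t\in\sem{P}$ between dynamic events $e_a$ at $a$ and $e_b$ at some access instruction $b$, and let $T_1$ denote the thread of $e_a$. Observe that $e_b$ must lie on a thread distinct from $T_1$, since two events of the same thread are always ordered by $\to_{po}$ and hence cannot race. The goal is then to show that, for every trace $t''$ having $t$ as a prefix, there exists an extension $t'''$ of $t''$ containing a race between an event $e_{a_{\text{after}}}$ at $a_{\text{after}}$ and an event at $b$ (which can always be chosen to be $e_b$ itself).

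The first key step is an existence argument: from the state reached at the end of $t''$, the scheduler can run $T_1$ until it executes $a_{\text{after}}$, unless it has already done so in $t''$. Condition~(1) guarantees at the CFG level that every path from $\bblock\,a$ to the function exit passes through $\bblock\,a_{\text{after}}$, and condition~(5), forbidding loops and function calls on every path from $a$ to $a_{\text{after}}$, ensures $T_1$ reaches $a_{\text{after}}$ in a bounded number of thread-local steps, so the required extension always exists in $\sem{P}$. The second step is to verify that the chosen $e_{a_{\text{after}}}$ and $e_b$ form a data race. Accessing the same location follows from condition~(2); write-ness follows from condition~(3) combined with at least one of $e_a,e_b$ being a write; and unorderedness under $\to_{hb}$ follows from condition~(4) by a case analysis: any hypothetical happens-before chain from $e_b$ to $e_{a_{\text{after}}}$ must enter $T_1$ at some event $e_c$, and if $e_c$ equals $e_a$ or precedes it in program order the chain yields $e_b \to_{hb} e_a$ (contradicting $r$), while if $e_c$ lies strictly between $e_a$ and $e_{a_{\text{after}}}$ then $e_c$ must be the target of an acquire-like edge on a path from $a$ to $a_{\text{after}}$ (contradicting condition~(4), which also forbids external calls that might hide such synchronization).

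The main obstacle I anticipate is the existence argument for $e_{a_{\text{after}}}$. In the pre-domination setting of \cref{prop:really-red}, the dominating event is automatically present in $t$ by basic-block dominance; here, by contrast, one must actively construct an extension of $t''$ that reaches $a_{\text{after}}$. This requires carefully appealing to the permissive scheduling semantics of $\sem{P}$ and verifying that conditions~(1) and~(5) jointly rule out every way in which $T_1$ could fail to reach $a_{\text{after}}$, in particular non-terminating loops, non-returning recursion, or external calls with unknown control-flow behavior. The HB case analysis in the second step is largely routine given the machinery already developed for \cref{prop:really-red}, so most of the proof's novelty lies in this existence argument.
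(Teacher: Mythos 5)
Your proposal follows essentially the same route as the paper's proof: first argue that any trace extending $t$ can be further extended so that the thread of $e_a$ reaches $a_{\text{after}}$ (via post-dominance and the absence of loops, calls, and intervening synchronization), then verify that the resulting event $e_{a_{\text{after}}}$ races with $e_b$ using conditions (2)--(4) exactly as in the pre-dominance case. The one nuance is that your reachability argument should also invoke condition (4), not only (1) and (5), since an intervening acquire-like (i.e., potentially blocking) instruction could prevent the thread from ever reaching $a_{\text{after}}$; the paper makes this explicit.
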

\begin{proof}
    Assume for a contradiction that there is an $a_{\text{after}}$ fulfilling conditions 
    (1)-(5),
    but $a$ is involved in a non-post-dominated data race $r$ involving another access, say $b$.
    We show that under these circumstances, there also is a race between $a_{\text{after}}$ and $b$ which post-dominates
    $r$, contradicting the assumption that $r$ is non-post-dominated.
    Consider some trace $t$ containing a race between dynamic events $e_a$ and $e_b$ corresponding to access
    instructions $a$ and $b$.
    We first establish that every trace $t'$ that has $t$ as a prefix can be extended to a trace $t''$ containing
    a dynamic event $e_{a_{\text{after}}}$ corresponding to access instruction $a_{\text{after}}$ after $e_a$.
    If $t'$ already contains such an $e_{a_{\text{after}}}$, we are done.
    Consider now the case where this is not true. Such a trace then has a last event that belongs to the same 
    thread as $e_a$ that is either $e_a$ itself or a successor to $e_a$ in program order.
    As $\bblock\,a_{\text{after}}$ post-dominates $\bblock\,a$, every path from $\bblock\,a$ to the exit node
    of the CFG must contain $\bblock\,a_{\text{after}}$.
    As there can be no loops or function calls on any path from $a$ to $a_{\text{after}}$ (by (\ref{item:no-diverge})),
    and no (potentially) blocking operations can occur on these paths (by (\ref{item:no-sync-post})), $t'$ can always
    be extended to obtain a trace $t''$ that contains $e_{a_{\text{after}}}$ after $e_a$.
    It remains to show that $e_{a_{\text{after}}}$ and $e_b$ race. For the reasoning on the type of access and the memory 
    location see the proof of \cref{prop:really-red}.
    Thus, it remains to show that $e_{a_{\text{after}}}$ and $e_b$ are unordered.
    By (\ref{item:no-sync}), all paths from $a$ to $a_{\text{after}}$ are free from acquire-like synchronization instructions,
    meaning there are no incoming non-po happens before edges 
    for any of the events belonging to the same thread ordered between $e_{a}$ and $e_{a_{\text{after}}}$ by the program order.
    Thus, as $e_a$ and $e_b$ are unordered, so are $e_{a_{\text{after}}}$ and $e_b$.
    Thus, any trace with prefix $t$ can be extended to contain a race between $e_{a_{\text{after}}}$ and $e_b$,
    contradicting the assumption that $r$ is non-post-dominated. 
\end{proof}
\begin{remark}
In our implementation, we check conditions (1)-(5) by default, but offer a more optimistic behavior enabled by a flag,
which allows loops and assumes that code fragments will terminate, and the post-dominated access will be reached.
\end{remark}
\begin{example}
    Consider again the CFG fragment in \cref{fig:dominance_diagram}. 
    \code{BB4} is post-dominated by \code{BB5}, the access instructions \code{I5} and \code{I3} refer to the same memory location,
    both are reads, and the path from \code{I3} to \code{I5} is free from acquire-like synchronization instructions, loops and function calls.
    Thus, instrumentation of \code{I3} is redundant when also considering post-dominance.
\end{example}
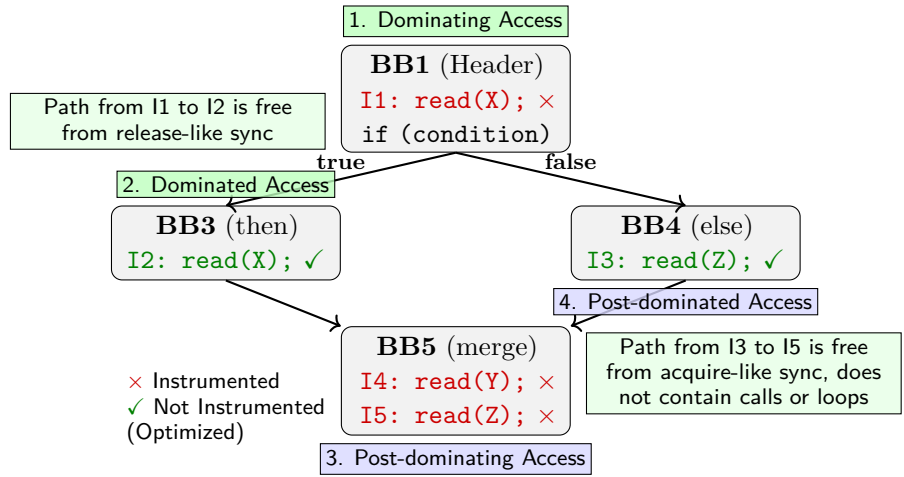
\begin{figure}[t]
\centering
\begin{tikzpicture}[
    scale=0.8,
    node distance=1cm and 0.5cm,
    bb_node/.style={
        rectangle, draw, rounded corners, fill=gray!10,
        text width=2.8cm, align=center, inner sep=3pt
    },
    annot_text/.style={
        font=\sffamily\footnotesize, align=center,
        draw, inner sep=2pt, fill=white, fill opacity=0.8, text opacity=1
    },
    conclusion_text/.style={annot_text, fill=red!20}
]
    \node[bb_node] (BB1) {
        \textbf{BB1} (Header) \\
        {\color{red!80!black}\code{I1: read(X);} \texttimes} \\
        \code{if (condition)}
    };
    \node[bb_node, below left=0.7cm and 0.0cm of BB1] (BB3) {
        \textbf{BB3} (then) \\
        {\color{green!50!black}\code{I2: read(X);} \checkmark}
    };
    \node[bb_node, below right=0.7cm and 0.0cm of BB1] (BB4) {
        \textbf{BB4} (else) \\ 
        {\color{green!50!black}\code{I3: read(Z);} \checkmark}
    };
    \node[bb_node, below=2.3cm of BB1] (BB5) {
        \textbf{BB5} (merge) \\
        {\color{red!80!black}\code{I4: read(Y);} \texttimes}
        {\color{red!80!black}\code{I5: read(Z);} \texttimes}
    };

    \draw[->, thick] (BB1.south) -- node[midway, left, font=\footnotesize, above] {\textbf{true}} (BB3.north);
    \draw[->, thick] (BB1.south) -- node[midway, right, font=\footnotesize, above] {\textbf{false}} (BB4.north);
    \draw[->, thick] (BB3.south) -- (BB5.north west);
    \draw[->, thick] (BB4.south) -- (BB5.north east);


    \node[annot_text, fill=green!25, above=0.1cm of BB1] {1. Dominating Access};
    \node[annot_text, fill=green!25, above=0.1cm of BB3] {2. Dominated Access};
    \node[annot_text, fill=green!10, text width=4cm, left=0.2cm of BB1, yshift=-0.3cm] {Path from I1 to I2 is free from release-like sync};
    \node[annot_text, fill=blue!15, below=0.1cm of BB5] {3. Post-dominating Access};
    \node[annot_text, fill=blue!15, below=0.1cm of BB4] {4. Post-dominated Access};
    \node[annot_text, fill=green!10, text width=4cm, right=0.2cm of BB5, yshift=0.1cm] {Path from I3 to I5 is free from acquire-like sync, does not contain calls or loops};
    
    \node[font=\sffamily\footnotesize, below=1.1cm of BB3, align=left] (legend) {
        {\color{red!80!black}\texttimes} Instrumented \\
        {\color{green!50!black}\checkmark} Not Instrumented \\ (Optimized)
    };

\end{tikzpicture}
\caption{Dominance and Post-dominance. \textbf{Dominance:} Access \texttt{I1} dominates \texttt{I2}. Since the path (green) is free from release-like sync, instrumentation for \texttt{I2} is redundant. \textbf{Post-dominance:} \texttt{BB5} post-dominates both branches (blue), so the instrumented access \texttt{I5} makes checks for \texttt{I3} in preceding blocks redundant.}
\label{fig:dominance_diagram}
\end{figure}

\section{Pipeline and Implementation Details}
\label{sec:pipeline_implementation}
While we have instantiated our analysis pipeline for determining overapproximations of the sufficient set of accesses
with the four analyses described in \cref{sec:analyses}, our framework is designed to be extensible, allowing for
further analyses to be plugged in easily.
Each analysis can be seen as a filter that successively removes accesses from the instrumentation set.
The soundness of the overall pipeline follows directly from the soundness of each individual analysis.
For each analysis, the proof obligation consists in showing that set of accesses it keeps is a superset 
of the sufficient instrumentation set $\memInstrSuff$.
While, in general, such a filtering cascade needs to be iterated until a fixed point is reached, we have carefully designed
our analyses and their order in the pipeline so that a single pass suffices.
This is achieved by ordering our analyses so that analyses providing essential context for others run first, as shown in \cref{fig:filtering_cascade_detailed}.

\definecolor{procBlue}{HTML}{4F81BD}
\definecolor{procGreen}{HTML}{9BBB59}
\definecolor{procRed}{HTML}{C0504D}
\definecolor{procGray}{HTML}{555555}

\definecolor{procBlue}{HTML}{4F81BD}
\definecolor{procGreen}{HTML}{9BBB59}
\definecolor{procRed}{HTML}{C0504D}
\definecolor{procGray}{HTML}{555555}

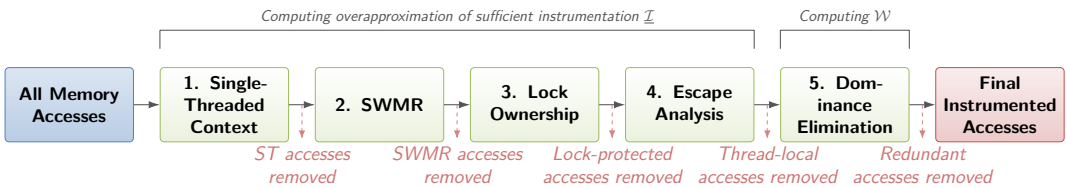
\begin{figure}[h!]
\centering
\resizebox{\textwidth}{!}{%
\begin{tikzpicture}[
    node distance=0.7cm, 
    base_block/.style={
        rectangle, 
        rounded corners=3pt, 
        draw=none,
        text width=3.2cm,      
        minimum height=2.0cm,  
        align=center, 
        font=\sffamily\bfseries\fontsize{14}{17}\selectfont
    },
    start_node/.style={base_block, top color=procBlue!20, bottom color=procBlue!40, draw=procBlue},
    filter_node/.style={base_block, top color=white, bottom color=procGreen!20, draw=procGreen},
    end_node/.style={base_block, top color=procRed!10, bottom color=procRed!30, draw=procRed},
    main_arrow/.style={-{Latex[length=3mm, width=2mm]}, thick, draw=procGray},
    dropout_arrow/.style={-{Latex[length=2mm, width=1.5mm]}, dashed, draw=procRed!70},
    dropout_label/.style={
        font=\sffamily\itshape\fontsize{15}{18}\selectfont,
        color=procRed!80, 
        align=center,
        below=1mm
    }
]

    
    \node[start_node] (start) {All Memory\\Accesses};

    \node[filter_node, right=of start] (step1) {1. Single-Threaded Context};
    
    \node[filter_node, right=of step1] (step2) {2. SWMR};
    
    \node[filter_node, right=of step2] (step3) {3. Lock Ownership};
    
    \node[filter_node, right=of step3] (step4) {4. Escape Analysis};
    
    \node[filter_node, right=of step4] (step5) {5. Dominance Elimination};
    
    \node[end_node, right=of step5] (end) {Final\\Instrumented Accesses};

    
    \draw[main_arrow] (start) -- (step1);
    \draw[main_arrow] (step1) -- (step2);
    \draw[main_arrow] (step2) -- (step3);
    \draw[main_arrow] (step3) -- (step4);
    \draw[main_arrow] (step4) -- (step5);
    \draw[main_arrow] (step5) -- (end);

    

    \coordinate (mid2) at ($(step1.east)!0.5!(step2.west)$);
    \draw[dropout_arrow] (mid2) -- ++(0, -0.9) node[dropout_label] {ST accesses\\removed};

    \coordinate (mid3) at ($(step2.east)!0.5!(step3.west)$);
    \draw[dropout_arrow] (mid3) -- ++(0, -0.9) node[dropout_label] {SWMR accesses\\removed};

    \coordinate (mid4) at ($(step3.east)!0.5!(step4.west)$);
    \draw[dropout_arrow] (mid4) -- ++(0, -0.9) node[dropout_label] {Lock-protected\\accesses removed};

    \coordinate (mid5) at ($(step4.east)!0.5!(step5.west)$);
    \draw[dropout_arrow] (mid5) -- ++(0, -0.9) node[dropout_label] {Thread-local\\accesses removed};

    \coordinate (mid6) at ($(step5.east)!0.5!(end.west)$);
    \draw[dropout_arrow] (mid6) -- ++(0, -0.9) node[dropout_label] {Redundant\\accesses removed};

    \coordinate (brLeft)  at ($(step1.north west)+(0,1.0)$);
    \coordinate (brRight) at ($(step4.north east)+(0,1.0)$);
    \draw[thick, draw=procGray]
      (brLeft) -- (brRight)
      (brLeft) -- ++(0,-0.5)
      (brRight) -- ++(0,-0.5);
    \node[font=\sffamily\itshape\fontsize{12}{14}\selectfont, text=procGray]
      at ($(brLeft)!0.5!(brRight)+(0,+0.35)$) {Computing overapproximation of sufficient instrumentation $\memInstrSuff$};

    \coordinate (br2Left)  at ($(step5.north west)+(0,1.0)$);
    \coordinate (br2Right) at ($(step5.north east)+(0,1.0)$);
    \draw[thick, draw=procGray]
      (br2Left) -- (br2Right)
      (br2Left) -- ++(0,-0.5)
      (br2Right) -- ++(0,-0.5);
    \node[font=\sffamily\itshape\fontsize{12}{14}\selectfont, text=procGray]
      at ($(br2Left)!0.5!(br2Right)+(0,+0.35)$) {Computing $\mathcal{W}$};

\end{tikzpicture}
}
\caption{Optimization pipeline filtering cascade.}
\label{fig:filtering_cascade_detailed}
\end{figure}

First, the \textbf{STC} analysis is applied. It differentiates between code guaranteed to run in a single-threaded context (e.g., initialization) and code that is potentially run while more than one thread exists.
%
This information is reused by the subsequent \textbf{SWMR} and \textbf{LO} analyses, as, e.g., accesses to global variables may not be protected by locks during the program's initial single-threaded phase.
\textbf{LO}, \textbf{SWMR}, and \textbf{EA} target different classes of memory and protection mechanisms: \textbf{LO} and \textbf{SWMR} focus on global variables. \textbf{EA} focuses on stack and heap objects. As these passes are mostly independent, their relative order is flexible.

In the final stage, the \textbf{DE} pass is executed. It acts as a final clean-up pass, pruning redundant instrumentation based on the control-flow graph structure. This analysis must run last because it operates on the final set of accesses that all preceding, semantic analyses have deemed necessary to instrument. Applying DE prematurely could hide memory accesses from other analyses, leading to unsound conclusions and missed races.

We implemented our analyses as passes within the \textsc{LLVM} and Clang 19 framework. Our analysis passes and the modified \texttt{ThreadSanitizerPass} instrumentation pass are scheduled to run together as a single logical unit. This unit is placed at a "sweet spot" in the compiler's pass pipeline: it runs relatively early, \textbf{after} essential IR canonicalization passes (such as \texttt{mem2reg}) that are crucial for precision, but \textbf{before} any aggressive, semantics-altering transformations like function inlining or loop vectorization.


\section{Experimental Evaluation}
\label{sec:eval}

To assess the effectiveness of our proposed static analyses in reducing ThreadSanitizer's overhead while preserving its race detection capabilities, we conducted a series of experiments. Our evaluation first quantifies how much the proposed analyses reduce overhead and identifies the key drivers of performance improvement across different workloads (\RQ{1}). We then explore the limits of static analysis by identifying code patterns that prevent the elimination of all redundant instrumentation (\RQ{2}). Furthermore, we compare our static, compile-time approach to a purely dynamic filtering technique to evaluate the trade-offs in redundancy elimination and performance (\RQ{3}). Finally, we measure the impact on program compilation time and verify that our optimizations preserve ThreadSanitizer's ability to detect known data races (\RQ{4}). This section details the benchmark suite, experimental setup, metrics, and presents the obtained results.

\subsection{Experimental Methodology}
\label{ssec:methodology}

To evaluate our approach, we selected a diverse suite of real-world applications known for their concurrent nature: the \textbf{Memcached} server, the \textbf{Redis} in-memory store (evaluated with `redis-benchmark`), the \textbf{FFmpeg} multimedia framework, the \textbf{MySQL} database (evaluated with `sysbench`), the \textbf{SQLite} embedded database (evaluated with its `threadtest3` suite), and the \textbf{Chromium} browser project. This diverse set allows us to test our optimizations across various workloads, from I/O-bound servers to CPU-intensive desktop applications.

All experiments were conducted on systems running Ubuntu 22.04 LTS, using a development branch of \textsc{LLVM} and Clang 19 that includes our static analyses. To ensure robustness, we utilized two classes of multi-core systems: a high-performance workstation (AMD Ryzen 9 7945HX, 16 cores at 3.4 GHz) and a high-end server (Intel Xeon w9-3495X, 56 cores). All benchmarks were compiled with \code{-O2}, and each configuration was run multiple times to ensure stable results. We compare four main configurations: \textbf{Native} (uninstrumented); \textbf{Original TSan}; \textbf{TSan+<Opt>} (individual analyses enabled); and our full \textbf{TSan+AllOpt} approach.
For the configurations involving \textbf{DE}, we use the variant employing both domination- and post-domination, where for post-domination make the optimistic assumption that all loops and called functions terminate.
To study the impact of parallelism, we also evaluate a subset of benchmarks by varying the number of active threads.

\subsection{Evaluation Metrics}
\label{ssec:metrics}
We evaluate the impact of our optimizations using the metrics summarized in \cref{fig:metric_definitions,fig:metrics_summary}. These cover runtime performance, memory consumption, the direct effectiveness of our analyses at reducing instrumentation, compilation overhead, and correctness.
\begin{figure}[htbp]
\centering
\small
\[
  \begin{array}{lll}
    \multicolumn{3}{l}{\textit{\textbf{Base Quantities}}} \\
    T &:& \text{Execution time} \\
    F &:& \text{Peak memory consumption} \\
    R &:& \text{Memory overhead } (F - F_{\text{nat}}) \\
    Q &:& \text{Number of statically instrumented memory-access instructions} \\
    L &:& \text{Number of dynamically executed instrumentation calls} \\
    M &:& \text{Number of unique instrumented memory locations at runtime} \\
    \\
    \multicolumn{3}{l}{\textit{\textbf{Configuration Subscripts}}} \\
    \text{nat}  &:& \text{Native build (no \tsan)} \\
    \text{orig} &:& \text{Original \tsan} \\
    \text{opt}  &:& \text{An optimized \tsan configuration (e.g., \code{TSan+EA}, \code{TSan+AllOpt})} \\
  \end{array}
\]
\caption{Definitions of base quantities and configuration subscripts used in evaluation metrics.}
\label{fig:metric_definitions}
\end{figure}
For a configuration $c \in \{\text{nat}, \text{orig}, \text{opt}\}$, we write $T_c$, $F_c$, $R_c$, $Q_c$, $L_c$, and $M_c$ for the corresponding values of these base quantities.
\begin{figure}[htbp]
\centering
\small
\setlength{\tabcolsep}{6pt}
\renewcommand{\arraystretch}{1.15}
\begin{tabular}{@{}l l@{}}
\multicolumn{2}{@{}l}{\textit{\textbf{Runtime Performance}}} \\
SD (Slowdown vs. Native)        & $T_{\text{opt}} / T_{\text{nat}}$ \\
SU (Speedup vs. TSan)           & $T_{\text{orig}} / T_{\text{opt}}$ \\[2pt]
\multicolumn{2}{@{}l}{\textit{\textbf{Memory Consumption}}} \\
IF (Increase Factor)            & $F_{\text{opt}} / F_{\text{nat}}$ \\
OH Red. (Overhead Reduction)    & $(R_{\text{orig}} - R_{\text{opt}})\,/\,R_{\text{orig}}$ \\[2pt]
\multicolumn{2}{@{}l}{\textit{\textbf{Instrumentation Reduction}}} \\
SIR (Static Instr. Reduction)   & $(Q_{\text{orig}} - Q_{\text{opt}})\,/\,Q_{\text{orig}}$ \\
DIR (Dynamic Instr. Reduction)  & $(L_{\text{orig}} - L_{\text{opt}})\,/\,L_{\text{orig}}$ \\
DLIR (Dynamic Loc. Instr. Reduction) & $(M_{\text{orig}} - M_{\text{opt}})\,/\,M_{\text{orig}}$ \\[2pt]
\multicolumn{2}{@{}l}{\textit{\textbf{Build and Correctness}}} \\
CTO (Compilation Time Overhead) & $(T_{\text{opt}} - T_{\text{nat}})\,/\,T_{\text{nat}}$ \\
\end{tabular}
\caption{Summary of evaluation metrics.}
\label{fig:metrics_summary}
\end{figure}

We also verify that our optimizations do not compromise correctness by ensuring that all known real data races in our benchmarks are still reported.

\subsection{Results}
\label{ssec:results}
This section presents the comprehensive results of our experimental evaluation. We first analyze the baseline overhead of the original ThreadSanitizer to motivate the need for optimization, and then study the impact of our optimizations on runtime performance, instrumentation, and resource usage. 

\subsubsection{Baseline Overhead Analysis (\RQ{1})}
\label{ssec:baseline_overhead}

To understand the magnitude of the performance penalty introduced by dynamic race detection, we measured the slowdown of the original, unoptimized ThreadSanitizer compared to a native, uninstrumented build. Figure~\ref{fig:tsan_slowdown_native} summarizes these results across our benchmarks.



\begin{figure}[h!]
    \centering
    \pgfplotsset{
        tsanSlowdownStyle/.style={
            ybar,
            width=0.5\linewidth,
            height=5.0cm,
            bar width=0.5cm,
            bar shift=0pt,
            symbolic x coords={Chromium,Redis,MySQL,FFmpeg,Memcached,SQLite},
            xtick=data,
            xticklabel style={
                font=\scriptsize,
                anchor=east,
                rotate=45
            },
            tick style={draw=none},
            ymin=0, ymax=22.5,
            ytick distance=2,
            ylabel={Slowdown vs. Native},
            ylabel near ticks,
            ylabel style={font=\small},
            label style={font=\small},
            tick label style={font=\scriptsize},
            ymajorgrids=true,
            grid style={dashed, gray!20},
            nodes near coords,
            nodes near coords style={
                font=\scriptsize,
                /pgf/number format/fixed,
                /pgf/number format/precision=1,
                anchor=south
            },
            clip=false,
            enlarge x limits=0.15
        }
    }

    \begin{tikzpicture}
        \begin{axis}[tsanSlowdownStyle]
            \addplot[style={fill=blue!80, draw=none, fill opacity=0.7}]
                coordinates {
                    (Chromium,19.2)
                    (Redis,9.2)
                    (MySQL,7.1)
                    (FFmpeg,2.9)
                    (Memcached,2.5)
                    (SQLite,2.4)
                };
        \end{axis}
    \end{tikzpicture}

    \caption{Slowdown of the original \tsan compared to native execution across benchmarks. For Chromium, slowdown is computed from Speedometer~3.1 scores (higher is better), while for the other benchmarks it is derived from time or throughput metrics.}
    \label{fig:tsan_slowdown_native}
\end{figure}
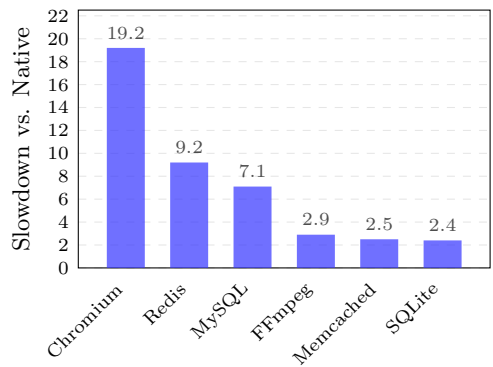

We observe a wide variance in overhead, from about $2.4\times$ to over $19\times$. This variation correlates strongly with workload characteristics. Disk- and network-heavy applications such as SQLite ($2.4\times$) and Memcached ($2.5\times$) show moderate slowdowns because I/O latency masks part of the instrumentation cost. FFmpeg, which mixes compute- and memory-bound phases, sees a $2.9\times$ slowdown. In contrast, in-memory databases with complex locking and fine-grained sharing, such as Redis ($9.2\times$) and MySQL ($7.1\times$), trigger many more instrumented memory accesses and exhibit substantially higher overheads. Chromium represents an extreme case: on Speedometer~3.1 the score drops from 3.65 to 0.19, corresponding to a $19.2\times$ slowdown, reflecting the massive concurrency, object allocation, and synchronization in the V8 engine and rendering pipeline.

\subsubsection{Runtime Performance with Optimizations (\RQ{1})}

Having established the baseline overhead, we now evaluate how much of this cost our static analyses can recover. Our combined \code{TSan+AllOpt} configuration provides significant performance improvements across all benchmarks, as summarized in Figure~\ref{fig:all_results}. A consistent finding is that Dominance Elimination (DE) is the most powerful single optimization. Escape Analysis (EA) also proves to be a strong performer. Other analyses (LO, STC, SWMR) show more targeted impact but contribute to the superior overall results.

\textbf{Memcached.}
As a highly optimized, network-bound server application, Memcached shows a modest speedup of 1.08x with our full optimization suite. As detailed in Figure~\ref{fig:all_results}\subref{fig:results_memcached}, the primary driver is Dominance Elimination (DE), which by itself provides a 1.07x speedup.

\textbf{Redis.}
The Redis benchmark, involving many small, CPU-bound operations, benefits tremendously from our optimizations, showing a geometric mean speedup of 1.42x with \code{TSan+AllOpt}. As seen in Figure~\ref{fig:all_results}\subref{fig:results_redis}, DE is again the strongest single optimization with a 1.35x speedup, but the combination of all analyses provides a further boost.

\textbf{FFmpeg.}
As a CPU-intensive multimedia framework, FFmpeg also shows substantial speedups (Figure~\ref{fig:all_results}\subref{fig:results_ffmpeg}), achieving a 1.34x geomean speedup with our full approach. The results confirm the effectiveness of both DE (1.20x) and EA (1.11x) on these workloads, with their combination yielding the best overall performance.

\textbf{MySQL.}
For MySQL, a complex I/O-bound database system, the improvements are more modest but consistent. \code{TSan+AllOpt} achieves a 1.14x speedup on the `Select` benchmark and 1.12x on `Write-only` (\cref{fig:results_mysql_select,fig:results_mysql_write}). DE remains the most effective single analysis, providing speedups of 1.15x and 1.10x respectively. The smaller gains reflect that \tsan's CPU overhead is a smaller share of total execution time in I/O-heavy applications.

\textbf{SQLite.}
The performance gains in SQLite are the most remarkable, achieving a 1.89x geomean speedup with \code{TSan+AllOpt}, as shown in Figure~\ref{fig:all_results}\subref{fig:results_sqlite}. This is largely driven by DE (1.67x speedup) on SQLite's contention-heavy internal test suite. The combination of all analyses further improves upon this, demonstrating a clear synergistic effect.

\textbf{Chromium.}
To evaluate our approach on a large-scale C++ desktop application, we used the Chromium browser with a suite of benchmarks including Speedometer 3.1 and Blink layout tests. Across all tests, we observed a median speedup of 1.35x (Figure~\ref{fig:chromium_aggregated}). Specific workloads like Speedometer 3.1 showed a median speedup of 1.16x, while layout tests showed higher gains (up to 1.51x). This result is particularly noteworthy as it demonstrates the effectiveness of our static analyses on a complex, real-world application where performance is critical for user experience.


\begin{figure*}[ht!]
    \centering
    \pgfplotsset{
        allbenchstyle/.style={
            ybar,
            width=\linewidth,
            height=5.0cm,
            bar width=0.45cm,
            bar shift=0pt,
            xtick={1,2,3,4,5,6},
            xticklabels={EA, LO, STC, SWMR, DE, AllOpt},
            xticklabel style={font=\footnotesize, anchor=east, rotate=45},
            tick style={draw=none},
            ymin=0.9,
            ytick distance=0.1,
            ylabel={Speedup},
            ylabel near ticks,
            ylabel style={font=\small},
            label style={font=\small},
            tick label style={font=\scriptsize},
            ymajorgrids=true,
            grid style={dashed, gray!20},
            nodes near coords,
            nodes near coords style={
                font=\scriptsize,
                /pgf/number format/fixed,
                /pgf/number format/precision=3,
                anchor=south
            },
            clip=false,
            enlarge x limits=0.15
        }
    }

    \subfloat[Memcached\label{fig:results_memcached}]{
        \begin{minipage}{0.48\textwidth}\centering
        \begin{tikzpicture}
            \begin{axis}[
                allbenchstyle,
                ymax=1.1
            ]
            \addplot[style={fill=blue!80, draw=none, fill opacity=0.7}]
                coordinates {(1,1.00) (2,1.00) (3,1.00) (4,1.01) (5,1.07)};
            \addplot[style={fill=red!80, draw=none, fill opacity=0.7}]
                coordinates {(6,1.08)};
            \end{axis}
        \end{tikzpicture}
        \end{minipage}
    }\hfill
    \subfloat[Redis\label{fig:results_redis}]{
        \begin{minipage}{0.48\textwidth}\centering
        \begin{tikzpicture}
            \begin{axis}[
                allbenchstyle,
                ymax=1.5
            ]
            \addplot[style={fill=blue!80, draw=none, fill opacity=0.7}]
                coordinates {(1,1.00) (2,1.00) (3,1.07) (4,1.00) (5,1.35)};
            \addplot[style={fill=red!80, draw=none, fill opacity=0.7}]
                coordinates {(6,1.42)};
            \end{axis}
        \end{tikzpicture}
        \end{minipage}
    }\\[0.6em]

    \subfloat[FFmpeg\label{fig:results_ffmpeg}]{
        \begin{minipage}{0.48\textwidth}\centering
        \begin{tikzpicture}
            \begin{axis}[
                allbenchstyle,
                ymax=1.4
            ]
            \addplot[style={fill=blue!80, draw=none, fill opacity=0.7}]
                coordinates {(1,1.11) (2,1.00) (3,1.00) (4,1.00) (5,1.20)};
            \addplot[style={fill=red!80, draw=none, fill opacity=0.7}]
                coordinates {(6,1.34)};
            \end{axis}
        \end{tikzpicture}
        \end{minipage}
    }\hfill
    \subfloat[SQLite\label{fig:results_sqlite}]{
        \begin{minipage}{0.48\textwidth}\centering
        \begin{tikzpicture}
            \begin{axis}[
                allbenchstyle,
                ymax=2.0
            ]
            \addplot[style={fill=blue!80, draw=none, fill opacity=0.7}]
                coordinates {(1,1.43) (2,1.01) (3,1.00) (4,1.18) (5,1.67)};
            \addplot[style={fill=red!80, draw=none, fill opacity=0.7}]
                coordinates {(6,1.89)};
            \end{axis}
        \end{tikzpicture}
        \end{minipage}
    }

    \caption{Runtime Performance (Average Speedup vs. Original TSan). Higher is better.}
    \label{fig:all_results}
\end{figure*}
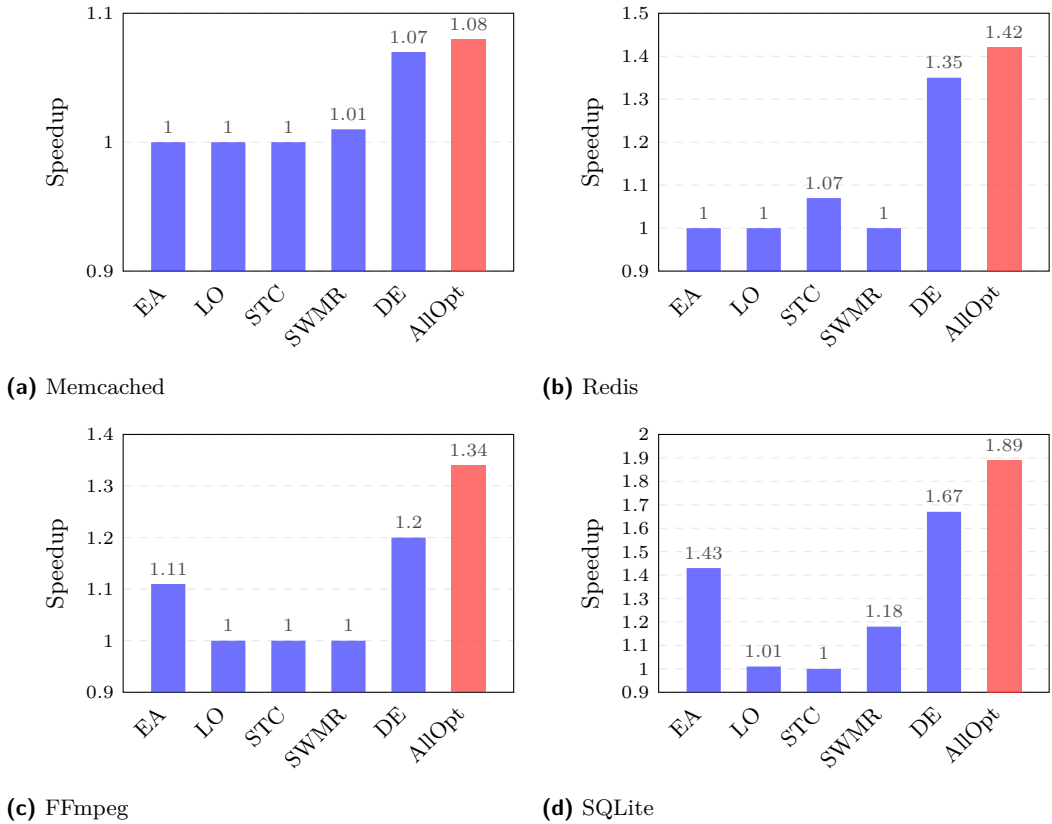


\begin{figure*}[ht!]
    \centering
    
    \pgfplotsset{
        mysqlstyle/.style={
            ybar,
            width=\linewidth,       
            height=5.0cm,           
            bar width=0.45cm,       
            bar shift=0pt,
            xtick={1,2,3,4,5,6},
            xticklabels={DE, EA, LO, SWMR, STC, AllOpt},
            xticklabel style={font=\footnotesize, anchor=east, rotate=45},
            tick style={draw=none}, 
            ymin=1.0, ymax=1.18,    
            ytick distance=0.05,    
            ylabel={Speedup},
            ylabel near ticks, 
            ylabel style={font=\small}, 
            label style={font=\small},
            tick label style={font=\scriptsize},
            title style={font=\small\bfseries, yshift=-1ex},
            ymajorgrids=true,
            grid style={dashed, gray!20},
            nodes near coords,
            nodes near coords style={font=\scriptsize, /pgf/number format/fixed, /pgf/number format/precision=3, anchor=south},
            clip=false,             
            enlarge x limits=0.15,  
        }
    }

    \subfloat[Select random points\label{fig:results_mysql_select}]{
        \begin{minipage}{0.48\textwidth}
        \centering 
        \begin{tikzpicture}
            \begin{axis}[
                mysqlstyle
            ]
            \addplot[style={fill=blue!80, draw=none, fill opacity=0.7}]
                coordinates {(1, 1.145) (2, 1.064) (3, 1.052) (4, 1.076) (5, 1.078)};

            \addplot[style={fill=red!80, draw=none, fill opacity=0.7}]
                coordinates {(6, 1.143)};
            \end{axis}
        \end{tikzpicture}
        \end{minipage}
    }
    \hfill 
    \subfloat[Write-only\label{fig:results_mysql_write}]{
        \begin{minipage}{0.48\textwidth}
        \centering
        \begin{tikzpicture}
            \begin{axis}[
                mysqlstyle,
                ymax=1.15 
            ]
            \addplot[style={fill=blue!80, draw=none, fill opacity=0.7}] 
                coordinates {(1, 1.103) (2, 1.035) (3, 1.032) (4, 1.029) (5, 1.035)};

            \addplot[style={fill=red!80, draw=none, fill opacity=0.7}]
                coordinates {(6, 1.115)};
            \end{axis}
        \end{tikzpicture}
        \end{minipage}
    }

    \caption{MySQL Runtime Performance (Average Speedup vs. Original TSan). Higher is better.}
    \label{fig:mysql_speedup}

\end{figure*}


\begin{figure*}[t!]
    \centering
    
    \pgfplotsset{
        chromestyle/.style={
            width=\linewidth,
            height=5.5cm,
            grid style={dashed, gray!30},
            tick label style={font=\footnotesize},
            label style={font=\small},
            legend style={font=\scriptsize},
            title style={font=\small, yshift=-0.5ex},
            trim axis left, trim axis right,
            anchor=north
        }
    }

    \subfloat[Speedup Distribution by Benchmark Suite\label{fig:chromium_boxplot}]{
        \begin{minipage}[t][5.5cm][t]{0.48\textwidth}
        \centering
        \begin{tikzpicture}
            \begin{axis}[
                chromestyle,
                boxplot/draw direction=y,
                ylabel={Speedup (AllOpt)},
                ylabel near ticks, 
                xtick={1,2,3,4,5},
                xticklabels={Speedometer, DE, Layout, Parser, SVG},
                xticklabel style={
                    rotate=45, 
                    anchor=north east, 
                    align=right, 
                    inner sep=2pt
                },
                ymin=0.9, ymax=2.4,
                ytick distance=0.2,
                ymajorgrids=true,
                enlarge x limits=0.15,
                boxplot/every box/.style={fill=blue!30, draw=blue!80!black, thick},
                boxplot/every whisker/.style={draw=blue!80!black, thick},
                boxplot/every median/.style={draw=red, thick},
            ]
            
            \addplot+[boxplot prepared={
                lower whisker=0.95, lower quartile=1.09, median=1.16, upper quartile=1.29, upper whisker=2.07
            }, draw=black] coordinates {};

            \addplot+[boxplot prepared={
                lower whisker=1.06, lower quartile=1.14, median=1.29, upper quartile=1.39, upper whisker=1.50
            }, draw=black] coordinates {};

            \addplot+[boxplot prepared={
                lower whisker=1.06, lower quartile=1.40, median=1.51, upper quartile=1.60, upper whisker=2.27
            }, draw=black] coordinates {};

            \addplot+[boxplot prepared={
                lower whisker=1.10, lower quartile=1.20, median=1.28, upper quartile=1.40, upper whisker=1.85
            }, draw=black] coordinates {};

            \addplot+[boxplot prepared={
                lower whisker=1.07, lower quartile=1.32, median=1.36, upper quartile=1.42, upper whisker=1.54
            }, draw=black] coordinates {};

            \end{axis}
        \end{tikzpicture}
        \end{minipage}
    }
    \hfill
    \subfloat[Cumulative Distribution of Speedups (All Tests)\label{fig:chromium_cdf}]{
        \begin{minipage}[t][5.5cm][t]{0.48\textwidth}
        \vspace{0pt}
        \centering
        \begin{tikzpicture}
            \begin{axis}[
                chromestyle,
                xlabel={Speedup ($\le X$)},
                ylabel={Fraction of Benchmarks},
                ylabel near ticks, 
                ymin=0, ymax=1.05,
                xmin=0.9, xmax=2.3,
                grid=both,
                legend pos=south east,
                legend cell align={left},
            ]
            \addplot[thick, blue!80!black, mark=none, const plot] coordinates {
                (0.90, 0.00) (0.95, 0.02) (1.00, 0.05) (1.05, 0.08) 
                (1.10, 0.15) (1.15, 0.20) (1.20, 0.28) (1.25, 0.35)
                (1.30, 0.45) (1.35, 0.55) (1.40, 0.65) (1.45, 0.75)
                (1.50, 0.82) (1.55, 0.88) (1.60, 0.94) (1.65, 0.96)
                (1.70, 0.97) (1.80, 0.98) (2.00, 0.99) (2.27, 1.00)
            };
            \addlegendentry{All Chromium Tests}
            
            \draw[red, dashed] (axis cs:1.0,0) -- (axis cs:1.0,1);
            \node[anchor=west, red, font=\scriptsize, rotate=90, yshift=-5pt, xshift=5pt] at (axis cs:1.0, 0.05) {Baseline};

            \draw[gray, dotted] (axis cs:0,0.5) -- (axis cs:2.3,0.5);
            \node[anchor=south east, gray, font=\scriptsize] at (axis cs:2.3, 0.52) {Median};
            
            \end{axis}
        \end{tikzpicture}
        \end{minipage}
    }

    \caption{\textbf{Chromium Results Summary.} 
    (a) Box plot showing distribution per suite. The `Layout` suite shows the highest median speedup.
    (b) CDF showing that over 80\% of all Chromium micro-benchmarks achieve a speedup of at least $1.2\times$.}
    \label{fig:chromium_aggregated}
\end{figure*}
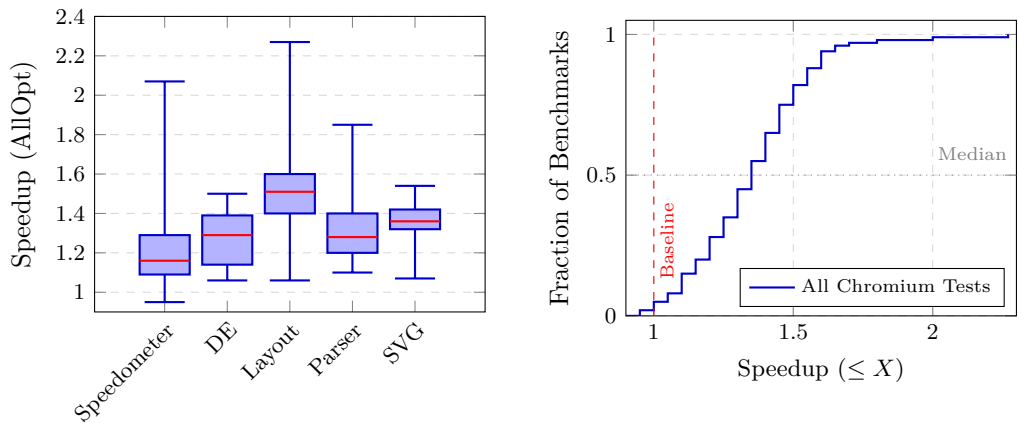

\subsubsection{Impact of Thread Contention (\RQ{1})}
\label{ssec:contention}

To evaluate our optimizations under varying degrees of parallelism, we measured the performance of Memcached with different thread counts.  As shown in Figures~\ref{fig:results_runtime_contention}\subref{fig:results_runtime_memcached_contention}, \ref{fig:results_runtime_contention}\subref{fig:results_runtime_sqlite_contention}, and \cref{fig:results_runtime_ffmpeg_contention}, the effectiveness of optimizations, particularly Dominance Elimination (DE), is preserved as the thread count rises.  In SQLite (Figure~\ref{fig:results_runtime_contention}\subref{fig:results_runtime_sqlite_contention}), TSan+AllOpt closely tracks the performance of DE, which is the dominant optimization. The slight performance regression of AllOpt compared to DE in some data points is attributable to measurement noise and minor code layout variations caused by the additional analysis passes, which do not yield further instrumentation reduction in this specific workload.



\pgfplotsset{
    myplotstyle/.style={
        xtick={2,6,10,14,18,22,26,30,34},
        grid=major,
        ymajorgrids=true,
        line width=1pt,
        width=\linewidth,
        height=5cm,
        xlabel={Threads},
        ylabel={Speedup},
        ylabel near ticks,
        ylabel shift = -5pt,
        xlabel near ticks,
        label style={font=\small},
        tick label style={font=\footnotesize},
        title style={font=\small\bfseries}
    }
}

\begin{figure}[h!]
\centering
\begin{subfigure}[b]{0.495\textwidth}
\centering
\begin{tikzpicture}
  \begin{axis}[
    title={Memcached},
    xmin=2, xmax=36,
    myplotstyle,
    legend columns=-1,
    legend style={
      at={(0.5,-0.25)},
      anchor=north,
      nodes={scale=0.9, transform shape},
      /tikz/every even column/.append style={column sep=0.2cm}
    },
    legend to name=contentionlegend
  ]
\addplot+[mark=triangle*, color=blue, dashed] coordinates {
  (2,1.03) (4,1.03) (6,1.03) (8,1.03) (10,1.04) (12,1.06)
  (14,1.05) (16,1.02) (18,1.03) (20,1.03) (22,1.05) (24,1.06)
  (26,1.07) (28,1.05) (30,1.04) (32,1.05) (34,1.03) (36,1.04)
};\addlegendentry{TSan-DE};

\addplot+[mark=pentagon*, color=cyan, densely dotted] coordinates {
  (2,1.03) (4,1.03) (6,1.03) (8,1.03) (10,1.05) (12,1.04)
  (14,1.06) (16,1.02) (18,1.03) (20,1.03) (22,1.03) (24,1.02)
  (26,1.03) (28,1.06) (30,1.03) (32,1.03) (34,1.02) (36,1.02)
};\addlegendentry{TSan-EA};

\addplot+[mark=*, color=green!60!black, dash dot] coordinates {
  (2,1.00) (4,1.00) (6,1.01) (8,1.00) (10,1.00) (12,1.00)
  (14,1.00) (16,1.00) (18,1.01) (20,1.01) (22,1.01) (24,1.00)
  (26,1.00) (28,1.00) (30,1.00) (32,1.02) (34,1.02) (36,1.00)
};\addlegendentry{TSan-LO};

\addplot+[mark=diamond*, color=purple, dash dot dot] coordinates {
  (2,1.00) (4,1.00) (6,1.00) (8,1.01) (10,1.02) (12,1.01)
  (14,1.04) (16,1.00) (18,1.00) (20,1.01) (22,1.01) (24,1.00)
  (26,1.00) (28,1.00) (30,1.03) (32,1.05) (34,1.01) (36,1.01)
};\addlegendentry{TSan-ST};

\addplot+[mark=star, color=brown, loosely dashed] coordinates {
  (2,1.01) (4,1.00) (6,1.01) (8,1.01) (10,1.01) (12,1.04)
  (14,1.03) (16,1.00) (18,1.00) (20,1.02) (22,1.02) (24,1.00)
  (26,1.02) (28,1.05) (30,1.04) (32,1.03) (34,1.01) (36,1.01)
};\addlegendentry{TSan-SWMR};

\addplot+[mark=o, color=black, solid] coordinates {
  (2,1.12) (4,1.11) (6,1.10) (8,1.10) (10,1.13) (12,1.13)
  (14,1.12) (16,1.08) (18,1.09) (20,1.10) (22,1.10) (24,1.12)
  (26,1.11) (28,1.09) (30,1.08) (32,1.09) (34,1.09) (36,1.06)
};

\addlegendentry{TSan All};

\end{axis}
\end{tikzpicture}
\caption{Memcached}
\label{fig:results_runtime_memcached_contention}
\end{subfigure}
\hfill
\begin{subfigure}[b]{0.495\textwidth}
\centering
\begin{tikzpicture}
  \begin{axis}[
    title={SQLite},
    xmin=2, xmax=36,
    myplotstyle,
  ]
\addplot+[mark=triangle*, color=blue, dashed] coordinates {(2,1.2899) (4,1.2171) (6,1.2166) (8,1.2000) (10,1.1781) (12,1.2236) (14,1.2356) (16,1.2022) (18,1.2280) (20,1.2270) (22,1.2206) (24,1.2126) (26,1.2257) (28,1.2310) (30,1.2147) (32,1.2204) (34,1.2236) (36,1.2283)};
\addplot+[mark=pentagon*, color=cyan, densely dotted] coordinates {(2,1.0628) (4,1.0475) (6,1.0561) (8,1.0612) (10,1.0318) (12,1.0423) (14,1.0426) (16,1.0254) (18,1.0403) (20,1.0359) (22,1.0272) (24,1.0269) (26,1.0381) (28,1.0471) (30,1.0387) (32,1.0431) (34,1.0314) (36,1.0368)};
\addplot+[mark=*, color=green!60!black, dash dot] coordinates {(2,1.0008) (4,0.9948) (6,1.0083) (8,1.0005) (10,1.0047) (12,1.0094) (14,0.9917) (16,0.9800) (18,0.9959) (20,0.9928) (22,0.9978) (24,0.9974) (26,1.0031) (28,1.0111) (30,1.0018) (32,0.9947) (34,0.9978) (36,1.0018)};
\addplot+[mark=diamond*, color=purple, dash dot dot] coordinates {(2,1.0008) (4,0.9937) (6,0.9984) (8,1.0074) (10,0.9843) (12,1.0019) (14,0.9940) (16,0.9950) (18,1.0059) (20,0.9969) (22,0.9898) (24,0.9801) (26,0.9960) (28,1.0049) (30,0.9982) (32,0.9978) (34,0.9948) (36,0.9991)};
\addplot+[mark=star, color=brown, loosely dashed] coordinates {(2,1.0008) (4,0.9908) (6,0.9860) (8,0.9916) (10,0.9739) (12,1.0033) (14,1.0046) (16,0.9950) (18,0.9946) (20,0.9969) (22,0.9996) (24,0.9810) (26,0.9987) (28,1.0027) (30,0.9938) (32,0.9938) (34,0.9852) (36,0.9996)};
\addplot+[mark=o, color=black, solid] coordinates {(2,1.3457) (4,1.3202) (6,1.2499) (8,1.2756) (10,1.2607) (12,1.2734) (14,1.2634) (16,1.2458) (18,1.2557) (20,1.2678) (22,1.2647) (24,1.2581) (26,1.2695) (28,1.2719) (30,1.2697) (32,1.2697) (34,1.2655) (36,1.2748)};
  \end{axis}
\end{tikzpicture}
\caption{SQLite}
\label{fig:results_runtime_sqlite_contention}
\end{subfigure}

\ref{contentionlegend}

\caption{Runtime Performance under Contention: Speedup (SU) vs. Original TSan.}
\label{fig:results_runtime_contention}
\end{figure}


\pgfplotsset{
    myplotstyle/.style={
        symbolic x coords={2,4,8,16},
        xtick=data,
        grid=major,
        ymajorgrids=true,
        line width=1pt,
        width=0.35\textwidth,  
        height=5cm,            
        xlabel={Threads},
        ylabel={Speedup},      
        ylabel near ticks,
        xlabel near ticks,
        ylabel shift = -5pt,
        label style={font=\footnotesize},
        tick label style={font=\footnotesize},
        title style={font=\footnotesize\bfseries},
        xmin=2, xmax=16,
    }
}

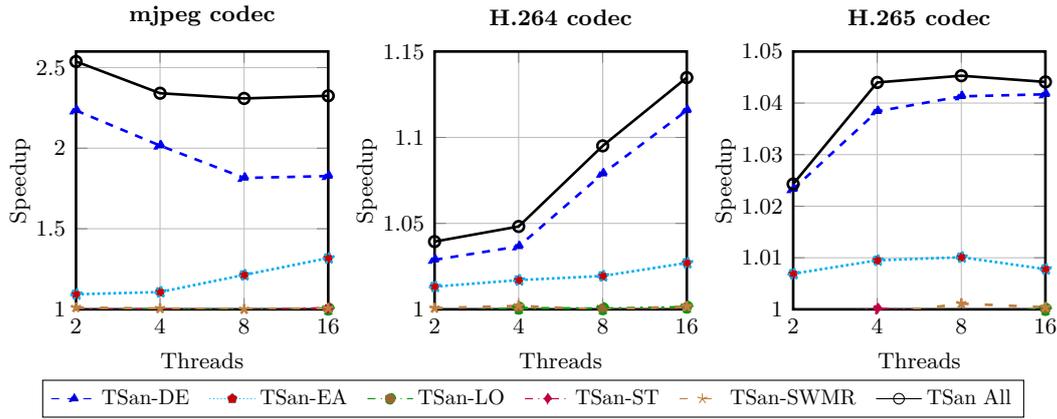
\begin{figure}[h]
\centering
\begin{tikzpicture}
    \begin{groupplot}[
        group style={
            group size=3 by 1,
            horizontal sep=1.4cm,
            vertical sep=1.8cm,
        }
    ]

    \nextgroupplot[
        title={mjpeg codec},
        ymin=1.0, ymax=2.6, ytick distance=0.5,
        myplotstyle
    ]
    \addplot+[mark=triangle*, color=blue, dashed] coordinates {(2,2.2342) (4,2.0159) (8,1.8153) (16,1.8269)};
    \addplot+[mark=pentagon*, color=cyan, densely dotted] coordinates {(2,1.0934) (4,1.1073) (8,1.2133) (16,1.3179)};
    \addplot+[mark=*, color=green!60!black, dash dot] coordinates {(2,0.9952) (4,0.9983) (8,0.9967) (16,1.0018)};
    \addplot+[mark=diamond*, color=purple, dash dot dot] coordinates {(2,0.9964) (4,0.9992) (8,0.9923) (16,1.0074)};
    \addplot+[mark=star, color=brown, loosely dashed] coordinates {(2,1.0110) (4,1.0048) (8,1.0000) (16,1.0066)};
    \addplot+[mark=o, color=black, solid] coordinates {(2,2.5380) (4,2.3410) (8,2.3087) (16,2.3250)};

    \nextgroupplot[
        title={H.264 codec},
        ymin=1.0, ymax=1.15, ytick distance=0.05,
        myplotstyle,
        legend columns=-1,
        legend style={
            at={(0.5,-0.45)},
            anchor=north,
            nodes={scale=0.8, transform shape},
            /tikz/every even column/.append style={column sep=0.2cm}
        },
        legend to name=mylegend
    ]
    \addplot+[mark=triangle*, color=blue, dashed] coordinates {(2,1.0288) (4,1.0366) (8,1.0789) (16,1.1157)}; \addlegendentry{TSan-DE};
    \addplot+[mark=pentagon*, color=cyan, densely dotted] coordinates {(2,1.0132) (4,1.0170) (8,1.0194) (16,1.0270)}; \addlegendentry{TSan-EA};
    \addplot+[mark=*, color=green!60!black, dash dot] coordinates {(2,0.9979) (4,1.0007) (8,1.0004) (16,1.0013)}; \addlegendentry{TSan-LO};
    \addplot+[mark=diamond*, color=purple, dash dot dot] coordinates {(2,0.9971) (4,0.9983) (8,0.9993) (16,0.9996)}; \addlegendentry{TSan-ST};
    \addplot+[mark=star, color=brown, loosely dashed] coordinates {(2,1.0007) (4,1.0020) (8,0.9998) (16,1.0017)}; \addlegendentry{TSan-SWMR};
    \addplot+[mark=o, color=black, solid] coordinates {(2,1.0394) (4,1.0482) (8,1.0951) (16,1.1348)}; \addlegendentry{TSan All};

    \nextgroupplot[
        title={H.265 codec},
        ymin=1.00, ymax=1.05, ytick distance=0.01,
        myplotstyle
    ]
    \addplot+[mark=triangle*, color=blue, dashed] coordinates {(2,1.0233) (4,1.0384) (8,1.0413) (16,1.0417)};
    \addplot+[mark=pentagon*, color=cyan, densely dotted] coordinates {(2,1.0069) (4,1.0095) (8,1.0101) (16,1.0078)};
    \addplot+[mark=*, color=green!60!black, dash dot] coordinates {(2,0.9977) (4,0.9996) (8,0.9992) (16,1.0000)};
    \addplot+[mark=diamond*, color=purple, dash dot dot] coordinates {(2,0.9987) (4,1.0001) (8,0.9995) (16,0.9984)};
    \addplot+[mark=star, color=brown, loosely dashed] coordinates {(2,0.9986) (4,0.9990) (8,1.0011) (16,1.0004)};
    \addplot+[mark=o, color=black, solid] coordinates {(2,1.0243) (4,1.0440) (8,1.0453) (16,1.0441)};


    \end{groupplot}
\end{tikzpicture}

\ref{mylegend}

\caption{FFMPEG Runtime Performance under Contention: Speedup (SU) vs. Original TSan.}
\label{fig:results_runtime_ffmpeg_contention}

\end{figure}


\subsubsection{Instrumentation Reduction Analysis (\RQ{1})}

The runtime performance gains are a direct consequence of our analyses' ability to prune instrumentation. Figure~\ref{fig:instr_reduction} quantifies this effectiveness from three perspectives: the reduction of instrumentation sites in the static binary (SIR), the reduction of executed checks at runtime (DIR), and the spatial precision of the remaining checks, measured by the reduction in unique instrumented memory locations (DLIR). 
Note that Chromium is excluded from the dynamic reduction metrics (DIR and DLIR). Collecting these exact counts requires profiling instrumentation that introduces excessive runtime overhead. This additional latency triggers internal watchdog timers and test timeouts within the Chromium test suite, rendering the collection of complete execution traces infeasible for such a large-scale interactive application.


\begin{figure*}[ht!]
    \centering

    \pgfplotsset{
        allbenchstyle/.style={
            ybar,
            width=0.34\linewidth,
            height=4.5cm,
            bar width=0.4cm, 
            xtick=data,
            xticklabel style={font=\scriptsize, anchor=east, rotate=45},
            ymin=0, ymax=100,
            tick style={draw=none},
            ylabel style={font=\scriptsize, yshift=-0.2cm},
            label style={font=\scriptsize},
            tick label style={font=\tiny}, 
            ymajorgrids=true,
            grid style={dashed, gray!20},
            nodes near coords,
            nodes near coords style={
                font=\tiny,
                /pgf/number format/fixed,
                /pgf/number format/precision=3,
                anchor=south
            },
            clip=false,
            enlarge x limits=0.12
        }
    }

    \subfloat[Static Reduction]{\label{fig:sir_reduction}
        \begin{tikzpicture}
            \begin{axis}[
                allbenchstyle,
                symbolic x coords={Memcached, Redis, FFmpeg, SQLite, MySQL, Chromium},
                ylabel={Sites Remaining (\%)},
            ]
            \addplot[style={fill=red!80, draw=none, fill opacity=0.7}]
                coordinates {
                    (Memcached, 32.6)
                    (Redis, 47.1)
                    (FFmpeg, 20.9)
                    (SQLite, 47.3)
                    (MySQL, 33.1) 
                    (Chromium, 51.6)
                };
            \end{axis}
        \end{tikzpicture}
    }\hfill
    \subfloat[Dynamic Reduction (Temporal)]{\label{fig:dir_reduction}
        \begin{tikzpicture}
            \begin{axis}[
                allbenchstyle,
                symbolic x coords={Memcached, Redis, FFmpeg, SQLite, MySQL},
                ylabel={Checks Removed (\%)},
            ]
            \addplot[style={fill=green!50!black, draw=none, fill opacity=0.7}] coordinates {(Memcached, 44.4)(Redis, 61.1)(FFmpeg, 69.6)(SQLite, 45.8) (MySQL, 55.9)};
            \end{axis}
        \end{tikzpicture}
    }\hfill
    \subfloat[Dynamic Reduction (Spatial)]{\label{fig:dlir_reduction}
        \begin{tikzpicture}
            \begin{axis}[
                allbenchstyle,
                symbolic x coords={Memcached, Redis, FFmpeg, SQLite, MySQL},
                ylabel={Locations Removed (\%)},
            ]
            \addplot[style={fill=orange!80, draw=none, fill opacity=0.7}] coordinates {(Memcached, 59.9)(Redis, 35.7)(FFmpeg, 36.4)(SQLite, 41.3) (MySQL, 74.7)};
            \end{axis}
        \end{tikzpicture}
    }

    \caption{Instrumentation Reduction Analysis for \texttt{TSan+AllOpt}. (a) Static Instrumentation Rate (SIR)
    (lower is better). (b) Dynamic Instrumentation Rate (DIR)
    (higher is better). (c) Dynamic Location Instrumentation Rate (DLIR) 
    (higher is better).}
    \label{fig:instr_reduction}
\end{figure*}
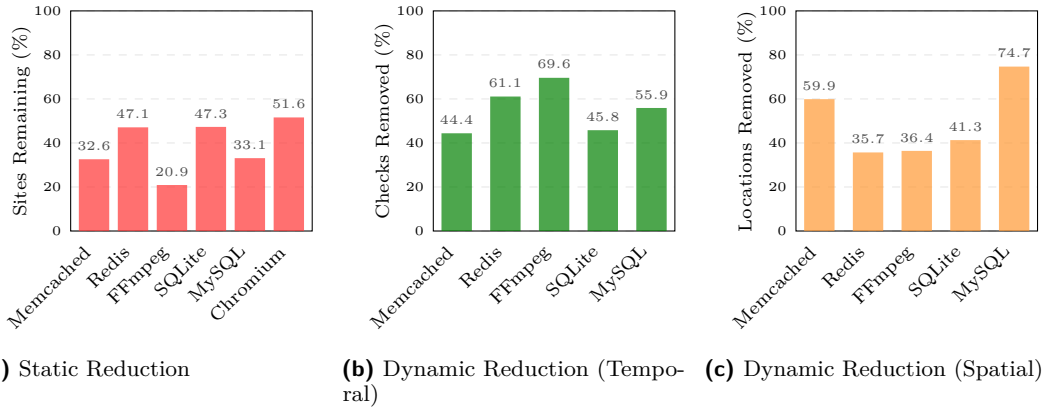

As shown in Figure~\ref{fig:instr_reduction}\subref{fig:sir_reduction}, our framework is highly effective at statically identifying and removing unnecessary instrumentation. For FFmpeg and Memcached, we eliminate nearly 80\% and over 67\% of static instrumentation sites, respectively. This demonstrates the precision of our compile-time analyses in proving large portions of code to be race-free.

This static reduction translates into a substantial decrease in the number of executed runtime checks. As shown in Figure~\ref{fig:instr_reduction}\subref{fig:dir_reduction}, for FFmpeg, we remove nearly 70\% of all dynamic checks, which directly correlates with the observed performance gains. Even for SQLite, where the dynamic reduction is a more modest 46\%, this still represents the elimination of over 230 million expensive runtime calls, explaining the significant speedup.

Finally, our analyses also demonstrate high spatial precision by significantly reducing the number of unique memory locations that require instrumentation at runtime (Figure~\ref{fig:instr_reduction}\subref{fig:dlir_reduction}). For Memcached, we remove instrumentation for nearly 60\% of unique addresses. This confirms that our approach, particularly Escape Analysis, successfully avoids instrumenting large portions of thread-local memory, focusing the overhead of TSan only where it is needed.

\subsubsection{Compilation and Memory Overheads (\RQ{4})}

Figure~\ref{fig:overheads} provides a comprehensive view of our framework's associated costs in terms of compilation time and memory consumption.


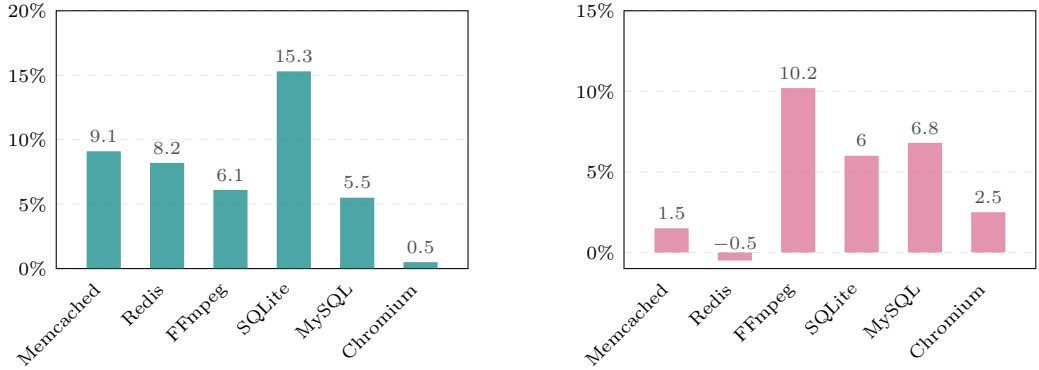
\begin{figure*}[ht!]
    \centering

    \pgfplotsset{
        allbenchstyle/.style={
            ybar,
            width=0.5\linewidth,
            height=5.0cm,
            bar width=0.45cm,
            xtick=data,
            xticklabel style={font=\scriptsize, anchor=east, rotate=45},
            ymin=0, ymax=100, yticklabel={\pgfmathprintnumber{\tick}\%}, 
            tick style={draw=none},
            ylabel near ticks,
            ylabel style={font=\small},
            label style={font=\small},
            tick label style={font=\scriptsize},
            ymajorgrids=true,
            grid style={dashed, gray!20},
            nodes near coords,
            nodes near coords style={
                font=\scriptsize,
                /pgf/number format/fixed,
                /pgf/number format/precision=3,
                anchor=south
            },
            clip=false,
            enlarge x limits=0.15
        }
    }

    \subfloat[Compilation Time Overhead (CTO) relative to a native build.]{
      \label{fig:cto_overhead}
        \begin{tikzpicture}
            \begin{axis}[
                allbenchstyle,
                ymin=0, ymax=20,
                symbolic x coords={Memcached, Redis, FFmpeg, SQLite, MySQL, Chromium},
            ]
            \addplot[style={fill=teal, draw=none, fill opacity=0.7}] coordinates {
                (Memcached, 9.1) (Redis, 8.2) (FFmpeg, 6.1) (SQLite, 15.3) (MySQL, 5.5) (Chromium, 0.5)
            };
            \end{axis}
        \end{tikzpicture}
    }\hfill
    \subfloat[Memory Overhead Reduction (OH Red.) relative to Original TSan's overhead.]{
      \label{fig:mem_overhead}
        \begin{tikzpicture}
            \begin{axis}[
                allbenchstyle,
                ymin=-1, ymax=15,
                symbolic x coords={Memcached, Redis, FFmpeg, SQLite, MySQL, Chromium},
            ]
            \addplot[style={fill=purple!60, draw=none, fill opacity=0.7}] coordinates {
                (Memcached, 1.5) (Redis, -0.5) (FFmpeg, 10.2) (SQLite, 6) (MySQL, 6.8) (Chromium, 2.5)
            };
            \end{axis}
        \end{tikzpicture}
    }

    \caption{
      Build and Memory Overheads for \texttt{TSan+AllOpt}.
      (a) Compilation Time Overhead (CTO) relative to a native build (lower is better).
      (b) Memory Overhead Reduction (OH Red.) relative to Original TSan's overhead (higher is better).
    }
    \label{fig:overheads}
\end{figure*}

\paragraph*{Compilation Time.}
Our analyses introduce a modest, one-time, compilation cost. As shown in \cref{fig:cto_overhead}, for a large project like FFmpeg, 
the overall overhead of \tsan including all of our optimizations (\texttt{TSan+AllOpt}) is only +6.1\% compared to a native build.
In general, overheads do not exceed 16\%, which we consider to be low enough for practical integration into regular development and CI workflows.

\paragraph*{Memory Consumption.}
We observe a reduction in \tsan's memory overhead for almost all benchmarks (\cref{fig:mem_overhead}). 
This may seem surprising at first, as we have not modified \tsan's treatment of shadow memory.
However, it is a direct consequence of our analyses reducing the number of instrumented memory accesses:
\tsan lazy allocates shadow memory for a given memory location when the first instrumented write occurs.
For memory locations where our analyses eliminate all instrumentation (see \cref{fig:dlir_reduction}), no shadow memory is ever allocated.
The memory overhead reduction, shown in \cref{fig:mem_overhead}, is primarily driven by Escape Analysis. For FFmpeg and MySQL, which utilize many short-lived buffers, EA is highly effective, resulting in a 10.2\% and 6.8\% reduction in \tsan's memory overhead, respectively. For other server applications like Redis and SQLite, where long-lived, shared data structures are more common, EA has less impact.
The slight increase in overhead for Redis is likely due to measurement noise and minor variations in memory layout.

\subsubsection{Race Detection Capability (\RQ{4})}

Our tests on benchmarks with known data races confirmed that TSan+AllOpt successfully detected all races that were also identified by Original TSan. The diagnostic reports, while potentially less granular for accesses removed by Dominance Elimination, correctly identified the racing memory locations and conflicting threads. This supports our claim that the static optimizations, by design, only remove instrumentation for accesses or code regions proven safe or redundant, without impairing \tsan's ability to detect actual data races.

\subsection{Studying Static Analysis Limitations using an Oracle Tracer (\RQ{2})}
\label{ssec:oracle}
We want to quantify how close our static analyses come to the theoretical optimum in eliminating instrumentation, i.e.,
how close they get to computing $\memInstrSuff \setminus \memInstrRed$.
However, as membership in this set is undecidable, the ground truth is unknown. 
As a proxy, we developed a dynamic oracle tracer to identify all access instructions that
a) access memory locations accessed by multiple threads during observed executions and b) happen while the program is multi-threaded.
While this is neither an over- nor an underapproximation of $\memInstrSuff \setminus \memInstrRed$,
it is enlightening to compare it to the accesses our analyses deem necessary to instrument.

\cref{tab:oracle_analysis} quantifies the gap between both sets. While the absolute number of accesses reported by the oracle tracer is small (e.g., 20.93\% for Memcached),
our static analysis is forced to be more conservative. This gap highlights the challenges of compile-time analysis.

\begin{table}[h!]
  \centering
  \caption{Difference between Instrumentation by \code{TSan+AllOpt} and Oracle Reports.}
  \label{tab:oracle_analysis}
  \begin{tabular}{@{} p{0.60\linewidth} ccc@{}}
    \toprule
    \textbf{Benchmark} & Memcached & Redis & FFmpeg \\
    \midrule
    Fraction of accesses instrumented by \code{TSan+AllOpt} minus fraction reported by oracle & 19\% & 56\% & 29\% \\
    \bottomrule
  \end{tabular}
\end{table}
A manual investigation into the code responsible for this gap revealed four primary causes of static analysis imprecision, summarized in Table \ref{tab:imprecision_causes}. These patterns, common in real-world C/C++ applications, represent barriers towards getting better approximations of
$\memInstrSuff \setminus \memInstrRed$.
\begin{table*}[h!]
  \centering
  \caption{Primary Causes of Static Analysis Imprecision.}
  \label{tab:imprecision_causes}
  \begin{tabular}{@{}p{0.15\linewidth} p{0.60\linewidth} p{0.15\linewidth}@{}}
    \toprule
    \textbf{Limitation \newline Category} & \textbf{Description} & \textbf{Example} \\
    \midrule
    \textbf{Complex \newline Data Flow} & Objects are passed through a complex chain of functions, making it difficult for the interprocedural analysis (IPA) to prove that they remain thread-local. & 
    Memcached, Redis \\
    \addlinespace
    \textbf{Function \newline Pointers} & Control flow is obscured by function pointers, forcing a conservative assumption that any potential target function could share data. & Memcached \\
    \addlinespace
    \textbf{Runtime-Dependent \newline Addressing} & Data structures are accessed via indices or pointers computed at runtime, which the analysis cannot resolve at compile time. & SQLite \\
    \addlinespace
    \textbf{Inline \newline Assembly} & Code inside \code{__asm__} blocks is opaque to the compiler's static analyzer, which must assume it can read/write any memory and perform synchronization. & FFmpeg \\
    \bottomrule
  \end{tabular}
\end{table*}
These identified causes for imprecision provide valuable insights into the limitations of our current static analyses and
may serve as a roadmap for future enhancements.

\subsection{Comparison with a Dynamic Redundancy Filter (\textsc{ReX}) (\RQ{3})}
\label{ssec:rex}

To contrast our static, ahead-of-time approach with a dynamic one, we implemented \textsc{ReX}, a state-of-the-art dynamic redundancy filter \cite{huang2017rex}, in \tsan. \textsc{ReX} operates at runtime by intercepting every memory access, computing a signature of its \emph{concurrency context} 
(based on the history of synchronization events), and filtering out redundant accesses.

Table \ref{tab:rex_effectiveness} shows the high effectiveness of \textsc{ReX} in identifying redundant instrumentation. It confirms that in typical applications, the vast majority of dynamic memory accesses events are redundant. For instance, in Memcached and SQLite, over 95\% of events were filtered. The data also reveals that most of this redundancy is \textit{intra-thread}—caused by repeated execution of code within loops.

\begin{table*}[htbp]
  \centering
  \caption{Effectiveness of the \textsc{ReX} Dynamic Filter on Benchmarks. Percentages for Intra-thread and Inter-thread are relative to the total number of filtered accesses.}
  \label{tab:rex_effectiveness}
  \begin{tabular}{@{}lc
    >{\centering\arraybackslash}p{2cm}
    >{\centering\arraybackslash}p{3cm}
    >{\centering\arraybackslash}p{3cm}
    @{}}
    \toprule
    \textbf{Benchmark} & \textbf{Total Accesses} & \textbf{Filtered by ReX (\%)} & \textbf{Intra-thread (\% of Filtered)} & \textbf{Inter-thread (\% of Filtered)} \\
    \midrule
    Memcached & 385M & 99\% & 86\% & 13\% \\
    SQLite    & 168M & 96\% & 95\% & 4\% \\
    Redis     & 12M & 37\% & 99\% & 0\% \\
    \bottomrule
  \end{tabular}
\end{table*}
%
%

While \textsc{ReX} proved highly effective at \textit{identifying} redundancy, this did not translate into a net runtime performance gain in our implementation. In fact, the filter itself introduced a significant performance overhead, causing the benchmarks to run several times slower than even the original \tsan. 
We attribute this to the inherent cost of the filtering logic. Although our implementation utilizes optimized synchronization with lock-free read paths to minimize contention, the maintenance of the shared concurrency history trie inevitably incurs substantial cache thrashing and atomic overhead on every memory access. This indicates that the performance bottleneck is fundamental to the dynamic filtering approach rather than an artifact of a specific implementation.
Optimizing this dynamic filtering mechanism to its full potential is a considerable engineering challenge. Since our primary goal was to use \textsc{ReX} as an analytical tool to quantify dynamic redundancy rather than to build a production-ready dynamic filter, we did not pursue these optimizations further.

This result, however, strongly underscores a key advantage of our static approach. The two methods present a classic trade-off. Our static framework avoids the per-access runtime cost of dynamic filtering, but at the expense of compile-time conservatism. \textsc{ReX}, conversely, can identify more redundancy in certain dynamic contexts but pays a constant, and potentially very high, runtime penalty for its checks.

\begin{tcolorbox}[mygraybox]
Our results demonstrate that while individual static analyses provide targeted benefits, their combination in \textsc{TSan+AllOpt} yields the most robust performance gains, confirming a synergistic effect. The Dominance Elimination (DE) analysis is consistently the most powerful single optimization. However, \textsc{TSan+AllOpt} outperforms \textsc{TSan+DE} in nearly all cases.
This indicates that other analyses, like Escape Analysis (EA) and SWMR, effectively eliminate instrumentation in contexts where DE is not applicable, such as for non-global objects or read-only global variables.
\end{tcolorbox}

\section{Discussion}
\label{sec:discussion}
We lastly discuss important aspects of our approach, including its interaction with the broader compiler ecosystem and the inherent trade-offs involved in doing
redundancy elimination.

\subsection{Interaction with Standard Compiler Optimizations}

Our approach is tightly integrated with the \textsc{LLVM} compilation pipeline, with our analysis passes running immediately before the \tsan instrumentation they guide.

Perhaps surprisingly, the \textit{elimination} of instrumentation can create new opportunities for subsequent compiler optimizations.
The RTL calls inserted by \tsan act as compiler barriers, preventing certain optimizations from being applied across them.
By reducing the number of such barriers, our optimizations provide the instruction scheduler with greater freedom to reorder 
code across larger blocks, potentially improving instruction-level parallelism and hiding memory latency. This also reduces 
register pressure and final code size, leading to better instruction cache performance.
Furthermore, eliminating instrumentation in a function makes it cheaper to inline.
Inlining, in turn, can expose further optimization opportunities in the caller's context.

\subsection{Soundness under Instruction Reordering}

The soundness of Dominance-Based Elimination hinges on an important guarantee: for any execution path free of release-like synchronization instruction, an instrumented dominating memory access must execute before any subsequent, uninstrumented dominated access. This can be challenged by instruction reordering, a common optimization at both the compiler and hardware levels. We argue that our approach remains sound in the face of both.

Let us consider an example (Listing~\ref{lst:de_reordering_example}), where a write to variable \texttt{x} at line 4 dominates another write to \texttt{x} at line 7. Our optimization instruments the dominating access $a_1$ but removes the instrumentation for the dominated $a_2$.

\begin{figure}[h!]
\begin{lstlisting}[language=C, caption={Example analyzing reordering effects on Dominance-based Analysis.}, label=lst:de_reordering_example]
// BB1: Dominator block
__tsan_write(&x); // TSan instrumentation call
x = 1;   // Dominating access a1 (instrumented)
if (condition) 
  // BB3: Dominated block
  x = 2; // Dominated access a2 (uninstrumented)
\end{lstlisting}
\end{figure}

The critical question is what prevents an access like \texttt{x = 2} (from Listing~\ref{lst:de_reordering_example}) from being reordered before \texttt{x = 1}. We analyze this at two levels. From a compiler perspective, an aggressive optimizer might attempt to lift the \texttt{x = 2} assignment out of the \texttt{if}-block. However, this reordering is prevented by the preceding instrumentation call, \texttt{\_\_tsan\_write(\&x)}. This opaque call acts as a \textbf{compiler barrier}, preventing subsequent memory accesses from being moved across it. Thus, compiler-driven reordering does not cause any issues.
We now turn to hardware-level reordering. Modern processors perform out-of-order execution, but our analysis, like \tsan itself, reasons about the program's \textit{architectural state}. The processor guarantees that instructions are eventually committed in program order. Even if \texttt{x = 2} executes speculatively before \texttt{x = 1}, its result is not architecturally committed until all preceding instructions (including \texttt{x = 1}) are.


\subsection{Trade-off Between Performance and Report Granularity}

A main aspect of our Dominance-Based Elimination (DE) is the trade-off between performance and the granularity of race reports. Our optimization guarantees race \textit{detection} but can affect report \textit{granularity}. If a data race occurs on an uninstrumented access $a_2$, \tsan will correctly flag a race, but the report will point to the dominating access $a_1$ (Listing~\ref{lst:de_reordering_example}).

Importantly, this principle is not entirely new to \tsan; the standard version already omits instrumentation for same-location accesses within a single basic block. Our DE optimization is a more powerful, inter-block extension of this principle.

We remark that while this trade-off can lead to an iterative debugging cycle ($a_1$ is fixed, then a race is found at $a_2$), this could be avoided by
introducing a post-processing step: Instead of only reporting the dominating access, the tool could also emit a report for all dominated accesses.
We leave the (efficient) implementation of this idea to future work.


\section{Related Work}
\label{sec:related}
\emph{Dynamic} and \emph{static} analysis are complementary approaches to study whether a program may contain any races.
While dynamic analysis inspects a program's execution traces to flag data races, static analysis reasons about all possible
program behaviors without executing the program.
Sound static analysis can, in principle, prove the absence of data races but usually suffers from a high number of false positives.
In practice, the number of false positives can often be reduced by making optimistic assumptions. This, however, compromises the soundness guarantees.
Dynamic analyses, on the other hand, do not produce any false positives, but can suffer from high runtime overheads preventing their widespread
adoption. Our approach aims to combine the best of both worlds:
By integrating a lightweight, sound static analysis into the compilation pipeline of the successful, commercial-grade dynamic data race detector
\tsan, we can reduce the amount of instrumentation needed and thus the overhead of dynamic data race detection without requiring developers to change
their workflow and while preserving all the soundness and completeness guarantees of the dynamic analysis.
We quickly survey prior work in purely dynamic and purely static data race detection before 
contrasting our approach with previous attempts to combine both approaches.

\paragraph*{Dynamic Techniques.}
Dynamic race detectors based on the happens-before relation offer high precision but suffer from significant performance overhead. 
Seminal tools in this space include \tsan~\cite{serebryany2009threadsanitizer} and \textsc{FastTrack}~\cite{flanagan2009fasttrack} 
(whose algorithm was later formally verified in \textsc{VerifiedFT}~\cite{wilcox2018verifiedft}), which introduced an adaptive epoch-based representation for vector clocks later adopted by \tsan.
A more recent research direction is \textit{predictive} analysis, which infers races from a single trace by exploring alternative interleavings. 
This has been achieved using relations weaker than happens-before like WCP~\cite{kini2017dynamic}, unsound but high-coverage heuristics~\cite{roemer2018high, roemer2020smarttrack}, formal feasibility checking~\cite{cai2021sound, pavlogiannis2019fast}, and by reasoning about synchronization-preserving~\cite{mathur2021optimal} or sync-reversal races~\cite{shi2024optimistic}.
Our contribution here is orthogonal to both improvements to the happens-before algorithm and predictive analysis, as it focuses on reducing the number of instrumented accesses, and thus the number of dynamic events that need to be considered by these techniques
-- benefiting all such approaches.
A challenging aspect for dynamic data race detection based on recorded traces is dealing with the sheer size of traces.
A proposed solution is to compress traces and then employ algorithms that are able to operate directly on the 
compressed representation~\cite{KiniM018} without requiring reconstruction of the original trace.
Another prominent approach to reducing the overhead of dynamic race detection is \emph{sampling}, which instruments only a subset of memory accesses.
This includes techniques like proportional sampling with \textsc{PACER}~\cite{bond2010pacer}, 
adaptive thread-local sampling in \textsc{LiteRace}~\cite{marino2009literace}, 
tailoring the race detection algorithm to the characteristics of the employed sampling strategy~\cite{UTrack2025},
randomized property testing~\cite{RPT2023},
and the use of hardware support such as breakpoints in \textsc{DataCollider}~\cite{erickson2010effective} or hardware-assisted sampling in \textsc{CRSampler}~\cite{cai2016deployable}.
Our work, on the other hand, reduces the overhead while preserving the detector's completeness ---
hopefully making sampling obsolete for many programs.

\paragraph*{Static Techniques.}
Here, we focus our discussion on approaches based on dataflow analysis or abstract interpretation:
While approaches based on model checking~\cite{he2021satisfiability,gavrilenko2019bmc,beyer2011cpachecker,toth2017theta}
or reduction to automata-theoretic problems~\cite{heizmann2013software,farzan2022sound,dietsch2023ultimate} 
also exist, the former are most closely related to our work.
Over the years, a multitude of static data race 
detectors~\cite{engler2003racerx,naik2006effective,pratikakis2011locksmith,blackshear2018racerd,kahlon2007fast,liu2021o2,vojdani2016static,schwarz2025digest,CuiGUPLH23,MaSH23,SwainLLLG020,LiL019} have been proposed,
ranging from rather cheap, compositional analyses to more heavyweight, context-sensitive and flow-sensitive approaches.
Analyses of the first kind typically scale to large code bases, but suffer from a high number of
false positives (or perform an \emph{unsound} post-processing step to only report likely races,
leading to false negatives),
while the latter approaches are more precise, but often do not scale to larger code bases.
A recent survey~\cite{holter2025sound} of static race detection techniques also highlights that many common idioms
such as thread pools are still out of range for existing static analyses.
An issue that is common to almost all of these static analyses is that they are not integrated
into an existing compiler pipeline, requiring manual effort by developers to set up an analysis
pipeline which often involves non-trivial engineering and configuration effort. 

\paragraph*{Hybrid Techniques.}
Hybrid approaches seek to combine the strengths of both worlds.
Early work~\cite{choi2002efficient,vonPraun2003static} propose to use (rather expensive) whole-program static analyses for Java to identify access instructions that cannot be involved in data races and omit instrumentation for those. Later work~\cite{di2016accelerating} based on the \textsc{SVF} framework
takes a similar approach, but targets {C} programs.
Some form of dominance elimination
to remove redundant instrumentation was also proposed~\cite{choi2002efficient} early on.
In this instance, intraprocedurally dominated accesses are removed in case there are no intervening function calls. 
%

Later, redundancy elimination was generalized to apply more broadly. By focussing on redundancy elimination only,
it was possible to use much cheaper analyses reducing the compilation overhead.
This approach is, e.g., taken by \textsc{RedCard}~\cite{flanagan2013redcard}, which statically
identifies and eliminates runtime checks on memory locations that have already been accessed
within the same code fragment without any release-like operations that intervene, which is similar to
Dominance Elimination. Furthermore, it collapses the shadow memory for such locations that are always accessed together with
some other location in the same release-free span. \textsc{RedCard} targets Java programs and the
static analysis is implemented on top of the \textsc{Wala} framework, using \textsc{FastTrack}
as the dynamic race detector.
\textsc{Bigfoot}~\cite{rhodes2017bigfoot} further generalizes the class of redundant checks,
akin to our notion of post-dominance. Additionally, it compresses shadow metadata leveraging both static and dynamic analysis.
Such compression of metadata is not readily applicable to \tsan,
where checks must not cross aligned 8-byte granules of memory.
Furthermore, \textsc{Bigfoot} also moves checks to program locations where no accesses occur, e.g., moving
checks for accesses happening inside of loops to after the loop, even though no access occurs there.
This leads to reports containing misleading locations, harming the usability of the overall approach.
In our setting, on the other hand, the reported location of a race always corresponds to one of the
racy access instructions. 
Like \textsc{RedCard}, \textsc{BigFoot} targets Java programs and is implemented on top of the \textsc{Wala} framework for the static analysis component and builds on \textsc{FastTrack} for the dynamic race detection.

Our approach unifies both the ideas of removing instrumentation for provably safe accesses and eliminating
redundant instrumentation. It is in the process of being integrated into the successful \tsan framework
which ships with the widely used \textsc{LLVM} compiler infrastructure. It supports compositionality at the
level of compilation units.
Our analyses are specifically tailored towards a low compilation overhead, while our redundancy elimination
attempts to strike a good balance between aggressively eliminating redundant checks and keeping the reports
immediately actionable to developers.

\textsc{RaceMob}~\cite{racemob2013} also takes a hybrid approach to a related problem:
Based on accesses reported as potentially racy by the static analyzer \textsc{Relay}, \textsc{RaceMob}
tries to crowdsource which of these accesses are indeed racy, thus validating the reports from the static analysis.
It uses a client-server architecture where each client running the program under tests instruments only a small subset of the potentially
racy accesses, reporting any detected races to the server. In this way, the burden of instrumentation
is shared across all clients. Such a crowdsourcing approach is orthogonal to our contribution and both approaches could be combined.

Hybrid approaches have also been explored in settings with structured concurrency, as, e.g., provided
in \textsc{OpenMP}. There, high-level knowledge can be used to derive cheap analyses to reduce the amount of instrumentation needed~\cite{atzeni2014lowoverhead,archerLong}. There are also approaches~\cite{schwitanski2023leveraging} that target \textsc{MPI} programs and use static analysis to identify local memory accesses which need not be instrumented in this setting.
Recent work~\cite{sun2024hardracedynamicdatarace} also explores
using static analysis of binaries to accelerate dynamic race detection on the binary level.

Like all hybrid approaches discussed thus far, our work requires the underlying static analysis to
be sound to preserve the soundness and completeness guarantees of the dynamic analysis.
Some work has explored using \emph{unsound} static analyses to speed up dynamic analyses~\cite{devecsery2018optimistic} while remaining sound overall. This approach relies on gathering likely
invariants by profiling runs of the program and then running the static analyzer assuming these invariants hold. During runtime, monitoring is used to check whether these invariants are violated,
in which case the program is re-executed with the instrumentation computed by a fully sound static
analysis without any optimistic assumptions.
While this approach is certainly innovative, it does not fit our use case well: Profiling runs ahead of
time requires extensive modifications to the development workflow. Furthermore, simply terminating
the program and re-executing it when a violated invariant is detected may be feasible for a test
suite, but is usually not acceptable in production settings.

\section{Conclusion}
\label{sec:conclusion}

The high performance overhead of dynamic data race detectors like \tsan remains a significant barrier to their widespread adoption. We have addressed this challenge by developing an extensible framework that integrates multiple static analyses into the compilation pipeline to intelligently reduce the amount of instrumentation required for
effective dynamic data race detection without compromising the soundness and completeness of the dynamic race detector.
We have instantiated this framework with five complementary static analyses--escape analysis, lock ownership, single-threaded context, SWMR pattern detection, and dominance-based redundancy elimination. 
Our experimental evaluation on a suite of large, real-world applications confirms that this approach significantly reduces \tsan's overhead, achieving a geometric mean speedup of $1.34\times$ over the baseline, with peaks of up to $2.5\times$ in contention-heavy workloads. This is accomplished with a negligible increase in compilation time, underscoring the practicality of our approach.
After initial discussions with the \tsan maintainers, we have been invited to contribute our optimizations, and we're currently in the process of upstreaming them.

A key aspect of our approach, particularly our dominance-based optimization, is the intentional trade-off between performance and diagnostic report granularity. We propose mitigating this in future work through \emph{automated report enrichment}, a new mechanism where the compiler annotates the binary with information about eliminated checks. At runtime, this would allow \tsan to generate richer diagnostic reports that include all relevant source locations for a detected race, turning a potential limitation into a powerful feature. Further work could also focus on expanding the analyses to support a wider range of synchronization idioms.


\bibliography{references} 

\begin{thebibliography}{10}

\bibitem{archerLong}
Simone Atzeni, Ganesh Gopalakrishnan, Zvonimir Rakamaric, Dong~H. Ahn, Ignacio
  Laguna, Martin Schulz, Gregory~L. Lee, Joachim Protze, and Matthias~S.
  Müller.
\newblock Archer: Effectively spotting data races in large openmp applications.
\newblock In {\em 2016 IEEE International Parallel and Distributed Processing
  Symposium (IPDPS)}, pages 53--62, 2016.
\newblock \href {https://doi.org/10.1109/IPDPS.2016.68}
  {\path{doi:10.1109/IPDPS.2016.68}}.

\bibitem{beyer2011cpachecker}
Dirk Beyer and M~Erkan Keremoglu.
\newblock Cpachecker: A tool for configurable software verification.
\newblock In {\em International conference on computer aided verification},
  pages 184--190. Springer, 2011.

\bibitem{blackshear2018racerd}
Sam Blackshear, Nikos Gorogiannis, Peter~W O'Hearn, and Ilya Sergey.
\newblock Racerd: compositional static race detection.
\newblock {\em Proceedings of the ACM on Programming Languages},
  2(OOPSLA):1--28, 2018.

\bibitem{bond2010pacer}
Michael~D Bond, Katherine~E Coons, and Kathryn~S McKinley.
\newblock Pacer: Proportional detection of data races.
\newblock {\em ACM Sigplan Notices}, 45(6):255--268, 2010.

\bibitem{cai2021sound}
Yan Cai, Hao Yun, Jinqiu Wang, Lei Qiao, and Jens Palsberg.
\newblock Sound and efficient concurrency bug prediction.
\newblock In {\em Proceedings of the 29th ACM Joint Meeting on European
  Software Engineering Conference and Symposium on the Foundations of Software
  Engineering}, pages 255--267, 2021.

\bibitem{cai2016deployable}
Yan Cai, Jian Zhang, Lingwei Cao, and Jian Liu.
\newblock A deployable sampling strategy for data race detection.
\newblock In {\em Proceedings of the 2016 24th ACM SIGSOFT International
  Symposium on Foundations of Software Engineering}, pages 810--821, 2016.

\bibitem{choi2002efficient}
Jong-Deok Choi, Keunwoo Lee, Alexey Loginov, Robert O'Callahan, Vivek Sarkar,
  and Manu Sridharan.
\newblock Efficient and precise datarace detection for multithreaded
  object-oriented programs.
\newblock In {\em Proceedings of the ACM SIGPLAN 2002 Conference on Programming
  language design and implementation}, pages 258--269, 2002.

\bibitem{CuiGUPLH23}
Siwei Cui, Yifei Gao, Rainer Unterguggenberger, Wilfried Pichler, Sean
  Livingstone, and Jeff Huang.
\newblock Smallrace: Static race detection for dynamic languages - {A} case on
  smalltalk.
\newblock In {\em 45th {IEEE/ACM} International Conference on Software
  Engineering, {ICSE} 2023, Melbourne, Australia, May 14-20, 2023}, pages
  1136--1147. {IEEE}, 2023.
\newblock \href {https://doi.org/10.1109/ICSE48619.2023.00102}
  {\path{doi:10.1109/ICSE48619.2023.00102}}.

\bibitem{devecsery2018optimistic}
David Devecsery, Peter~M Chen, Jason Flinn, and Satish Narayanasamy.
\newblock Optimistic hybrid analysis: Accelerating dynamic analysis through
  predicated static analysis.
\newblock In {\em Proceedings of the Twenty-Third International Conference on
  Architectural Support for Programming Languages and Operating Systems}, pages
  348--362, 2018.

\bibitem{di2016accelerating}
Peng Di and Yulei Sui.
\newblock Accelerating dynamic data race detection using static thread
  interference analysis.
\newblock PMAM'16, New York, NY, USA, 2016. Association for Computing
  Machinery.
\newblock \href {https://doi.org/10.1145/2883404.2883405}
  {\path{doi:10.1145/2883404.2883405}}.

\bibitem{dietsch2023ultimate}
Daniel Dietsch, Matthias Heizmann, Dominik Klumpp, Frank Sch{\"u}ssele, and
  Andreas Podelski.
\newblock Ultimate taipan and race detection in ultimate: (competition
  contribution).
\newblock In {\em International Conference on Tools and Algorithms for the
  Construction and Analysis of Systems}, pages 582--587. Springer, 2023.

\bibitem{engler2003racerx}
Dawson Engler and Ken Ashcraft.
\newblock Racerx: Effective, static detection of race conditions and deadlocks.
\newblock {\em ACM SIGOPS operating systems review}, 37(5):237--252, 2003.

\bibitem{erickson2010effective}
John Erickson, Madanlal Musuvathi, Sebastian Burckhardt, and Kirk Olynyk.
\newblock Effective data-race detection for the kernel.
\newblock In {\em 9th USENIX Symposium on Operating Systems Design and
  Implementation (OSDI 10)}, 2010.

\bibitem{farzan2022sound}
Azadeh Farzan, Dominik Klumpp, and Andreas Podelski.
\newblock Sound sequentialization for concurrent program verification.
\newblock In {\em Proceedings of the 43rd ACM SIGPLAN International Conference
  on Programming Language Design and Implementation}, pages 506--521, 2022.

\bibitem{flanagan2009fasttrack}
Cormac Flanagan and Stephen~N Freund.
\newblock Fasttrack: efficient and precise dynamic race detection.
\newblock {\em ACM Sigplan Notices}, 44(6):121--133, 2009.

\bibitem{flanagan2013redcard}
Cormac Flanagan and Stephen~N Freund.
\newblock Redcard: Redundant check elimination for dynamic race detectors.
\newblock In {\em ECOOP 2013--Object-Oriented Programming: 27th European
  Conference, Montpellier, France, July 1-5, 2013. Proceedings 27}, pages
  255--280. Springer, 2013.

\bibitem{gavrilenko2019bmc}
Natalia Gavrilenko, Hern{\'a}n Ponce-de Le{\'o}n, Florian Furbach, Keijo
  Heljanko, and Roland Meyer.
\newblock Bmc for weak memory models: Relation analysis for compact smt
  encodings.
\newblock In {\em International Conference on Computer Aided Verification},
  pages 355--365. Springer, 2019.

\bibitem{he2021satisfiability}
Fei He, Zhihang Sun, and Hongyu Fan.
\newblock Satisfiability modulo ordering consistency theory for multi-threaded
  program verification.
\newblock In {\em Proceedings of the 42nd ACM SIGPLAN International Conference
  on Programming Language Design and Implementation}, pages 1264--1279, 2021.

\bibitem{heizmann2013software}
Matthias Heizmann, Jochen Hoenicke, and Andreas Podelski.
\newblock Software model checking for people who love automata.
\newblock In {\em International Conference on Computer Aided Verification},
  pages 36--52. Springer, 2013.

\bibitem{holter2025sound}
Karoliine Holter, Simmo Saan, Patrick Lam, and Vesal Vojdani.
\newblock Sound static data race verification for c: Is the race lost?
\newblock {\em ACM Transactions on Programming Languages and Systems}, 2025.

\bibitem{huang2017rex}
Jeff Huang and Arun~K Rajagopalan.
\newblock What's the optimal performance of precise dynamic race detection?--a
  redundancy perspective.
\newblock In {\em 31st European Conference on Object-Oriented Programming
  (ECOOP 2017)}, pages 15--1. Schloss Dagstuhl--Leibniz-Zentrum f{\"u}r
  Informatik, 2017.

\bibitem{jannesari2009helgrind+}
Ali Jannesari, Kaibin Bao, Victor Pankratius, and Walter~F Tichy.
\newblock Helgrind+: An efficient dynamic race detector.
\newblock In {\em 2009 IEEE International Symposium on Parallel \& Distributed
  Processing}, pages 1--13. IEEE, 2009.

\bibitem{kahlon2007fast}
Vineet Kahlon, Yu~Yang, Sriram Sankaranarayanan, and Aarti Gupta.
\newblock Fast and accurate static data-race detection for concurrent programs.
\newblock In {\em Computer Aided Verification: 19th International Conference,
  CAV 2007, Berlin, Germany, July 3-7, 2007. Proceedings 19}, pages 226--239.
  Springer, 2007.

\bibitem{racemob2013}
Baris Kasikci, Cristian Zamfir, and George Candea.
\newblock Racemob: crowdsourced data race detection.
\newblock In {\em Proceedings of the Twenty-Fourth ACM Symposium on Operating
  Systems Principles}, SOSP '13, page 406–422, New York, NY, USA, 2013.
  Association for Computing Machinery.
\newblock \href {https://doi.org/10.1145/2517349.2522736}
  {\path{doi:10.1145/2517349.2522736}}.

\bibitem{kini2017dynamic}
Dileep Kini, Umang Mathur, and Mahesh Viswanathan.
\newblock Dynamic race prediction in linear time.
\newblock {\em ACM SIGPLAN Notices}, 52(6):157--170, 2017.

\bibitem{KiniM018}
Dileep Kini, Umang Mathur, and Mahesh Viswanathan.
\newblock Data race detection on compressed traces.
\newblock In Gary~T. Leavens, Alessandro Garcia, and Corina~S. Pasareanu,
  editors, {\em Proceedings of the 2018 {ACM} Joint Meeting on European
  Software Engineering Conference and Symposium on the Foundations of Software
  Engineering, {ESEC/SIGSOFT} {FSE} 2018, Lake Buena Vista, FL, USA, November
  04-09, 2018}, pages 26--37. {ACM}, 2018.
\newblock \href {https://doi.org/10.1145/3236024.3236025}
  {\path{doi:10.1145/3236024.3236025}}.

\bibitem{lamport2019time}
Leslie Lamport.
\newblock Time, clocks, and the ordering of events in a distributed system.
\newblock In {\em Concurrency: the Works of Leslie Lamport}, pages 179--196.
  2019.

\bibitem{LiL019}
Yanze Li, Bozhen Liu, and Jeff Huang.
\newblock {SWORD:} a scalable whole program race detector for java.
\newblock In Joanne~M. Atlee, Tevfik Bultan, and Jon Whittle, editors, {\em
  Proceedings of the 41st International Conference on Software Engineering:
  Companion Proceedings, {ICSE} 2019, Montreal, QC, Canada, May 25-31, 2019},
  pages 75--78. {IEEE} / {ACM}, 2019.
\newblock URL: \url{https://doi.org/10.1109/ICSE-Companion.2019.00042}, \href
  {https://doi.org/10.1109/ICSE-COMPANION.2019.00042}
  {\path{doi:10.1109/ICSE-COMPANION.2019.00042}}.

\bibitem{lin2018runtime}
Pei-Hung Lin, Chunhua Liao, Markus Schordan, and Ian Karlin.
\newblock Runtime and memory evaluation of data race detection tools.
\newblock In {\em Leveraging Applications of Formal Methods, Verification and
  Validation. Verification: 8th International Symposium, ISoLA 2018, Limassol,
  Cyprus, November 5-9, 2018, Proceedings, Part II}, page 179–196, Berlin,
  Heidelberg, 2018. Springer-Verlag.
\newblock \href {https://doi.org/10.1007/978-3-030-03421-4_13}
  {\path{doi:10.1007/978-3-030-03421-4_13}}.

\bibitem{liu2021o2}
Bozhen Liu, Peiming Liu, Yanze Li, Chia-Che Tsai, Dilma Da~Silva, and Jeff
  Huang.
\newblock When threads meet events: efficient and precise static race detection
  with origins.
\newblock In {\em Proceedings of the 42nd ACM SIGPLAN International Conference
  on Programming Language Design and Implementation}, pages 725--739, 2021.

\bibitem{MaSH23}
Chenyang Ma, Wei Song, and Jeff Huang.
\newblock Transracer: Function dependence-guided transaction race detection for
  smart contracts.
\newblock In Satish Chandra, Kelly Blincoe, and Paolo Tonella, editors, {\em
  Proceedings of the 31st {ACM} Joint European Software Engineering Conference
  and Symposium on the Foundations of Software Engineering, {ESEC/FSE} 2023,
  San Francisco, CA, USA, December 3-9, 2023}, pages 947--959. {ACM}, 2023.
\newblock \href {https://doi.org/10.1145/3611643.3616281}
  {\path{doi:10.1145/3611643.3616281}}.

\bibitem{marino2009literace}
Daniel Marino, Madanlal Musuvathi, and Satish Narayanasamy.
\newblock Literace: Effective sampling for lightweight data-race detection.
\newblock In {\em Proceedings of the 30th ACM SIGPLAN Conference on Programming
  Language Design and Implementation}, pages 134--143, 2009.

\bibitem{treeClocksMathur2022}
Umang Mathur, Andreas Pavlogiannis, H\"{u}nkar~Can Tun\c{c}, and Mahesh
  Viswanathan.
\newblock A tree clock data structure for causal orderings in concurrent
  executions.
\newblock In {\em Proceedings of the 27th ACM International Conference on
  Architectural Support for Programming Languages and Operating Systems},
  ASPLOS '22, page 710–725, New York, NY, USA, 2022. Association for
  Computing Machinery.
\newblock \href {https://doi.org/10.1145/3503222.3507734}
  {\path{doi:10.1145/3503222.3507734}}.

\bibitem{mathur2021optimal}
Umang Mathur, Andreas Pavlogiannis, and Mahesh Viswanathan.
\newblock Optimal prediction of synchronization-preserving races.
\newblock {\em Proceedings of the ACM on Programming Languages}, 5(POPL):1--29,
  2021.

\bibitem{naik2006effective}
Mayur Naik, Alex Aiken, and John Whaley.
\newblock Effective static race detection for java.
\newblock In {\em Proceedings of the 27th ACM SIGPLAN Conference on Programming
  Language Design and Implementation}, pages 308--319, 2006.

\bibitem{novillo2007memory}
Diego Novillo et~al.
\newblock Memory ssa-a unified approach for sparsely representing memory
  operations.
\newblock In {\em Proceedings of the GCC Developers’ Summit}, pages 97--110.
  Citeseer, 2007.

\bibitem{pavlogiannis2019fast}
Andreas Pavlogiannis.
\newblock Fast, sound, and effectively complete dynamic race prediction.
\newblock {\em Proceedings of the ACM on Programming Languages}, 4(POPL):1--29,
  2019.

\bibitem{pozniansky2007multirace}
Eli Pozniansky and Assaf Schuster.
\newblock Multirace: efficient on-the-fly data race detection in multithreaded
  c++ programs.
\newblock {\em Concurrency and Computation: Practice and Experience},
  19(3):327--340, 2007.

\bibitem{pratikakis2011locksmith}
Polyvios Pratikakis, Jeffrey~S Foster, and Michael Hicks.
\newblock Locksmith: Practical static race detection for c.
\newblock {\em ACM Transactions on Programming Languages and Systems (TOPLAS)},
  33(1):1--55, 2011.

\bibitem{TSanDoc}
LLVM Project.
\newblock Threadsanitizer.
\newblock \url{https://clang.llvm.org/docs/ThreadSanitizer.html}.
\newblock Clang 22.0.0 git documentation; accessed August 17, 2025.

\bibitem{atzeni2014lowoverhead}
Joachim Protze, Simone Atzeni, Dong~H. Ahn, Martin Schulz, Ganesh
  Gopalakrishnan, Matthias~S. M\"{u}ller, Ignacio Laguna, Zvonimir
  Rakamari\'{c}, and Greg~L. Lee.
\newblock Towards providing low-overhead data race detection for large openmp
  applications.
\newblock LLVM-HPC '14, page 40–47. IEEE Press, 2014.
\newblock \href {https://doi.org/10.1109/LLVM-HPC.2014.7}
  {\path{doi:10.1109/LLVM-HPC.2014.7}}.

\bibitem{rhodes2017bigfoot}
Dustin Rhodes, Cormac Flanagan, and Stephen~N Freund.
\newblock Bigfoot: Static check placement for dynamic race detection.
\newblock {\em ACM SIGPLAN Notices}, 52(6):141--156, 2017.

\bibitem{roemer2018high}
Jake Roemer, Kaan Gen{\c{c}}, and Michael~D Bond.
\newblock High-coverage, unbounded sound predictive race detection.
\newblock {\em ACM SIGPLAN Notices}, 53(4):374--389, 2018.

\bibitem{roemer2020smarttrack}
Jake Roemer, Kaan Gen{\c{c}}, and Michael~D Bond.
\newblock Smarttrack: efficient predictive race detection.
\newblock In {\em Proceedings of the 41st ACM SIGPLAN Conference on Programming
  Language Design and Implementation}, pages 747--762, 2020.

\bibitem{schilling2024binary}
Joschua Schilling, Andreas Wendler, Philipp G{\"o}rz, Nils Bars, Moritz
  Schloegel, and Thorsten Holz.
\newblock A binary-level thread sanitizer or why sanitizing on the binary level
  is hard.
\newblock In {\em 33rd USENIX Security Symposium (USENIX Security 24)}, pages
  1903--1920, 2024.

\bibitem{schwarz2025digest}
Michael Schwarz and Julian Erhard.
\newblock Data race detection by digest-driven abstract interpretation.
\newblock In {\em VMCAI '26}, 2026.
\newblock to appear.

\bibitem{schwitanski2023leveraging}
Simon Schwitanski, Yussur~Mustafa Oraji, Cornelius P\"{a}tzold, Joachim Jenke,
  and Matthias~S. M\"{u}ller.
\newblock Leveraging static analysis to accelerate dynamic race detection for
  remote memory access programs.
\newblock In {\em High Performance Computing. ISC High Performance 2024
  International Workshops: Hamburg, Germany, May 12–16, 2024, Revised
  Selected Papers}, page 45–58, Berlin, Heidelberg, 2023. Springer-Verlag.

\bibitem{serebryany2009threadsanitizer}
Konstantin Serebryany and Timur Iskhodzhanov.
\newblock Threadsanitizer: data race detection in practice.
\newblock In {\em Proceedings of the Workshop on Binary Instrumentation and
  Applications}, WBIA '09, pages 62--71, New York, NY, USA, 2009. Association
  for Computing Machinery.
\newblock \href {https://doi.org/10.1145/1791194.1791203}
  {\path{doi:10.1145/1791194.1791203}}.

\bibitem{serebryany2011dynamic}
Konstantin Serebryany, Alexander Potapenko, Timur Iskhodzhanov, and Dmitriy
  Vyukov.
\newblock Dynamic race detection with llvm compiler: Compile-time
  instrumentation for threadsanitizer.
\newblock In {\em International Conference on Runtime Verification}, pages
  110--114. Springer, 2011.

\bibitem{shi2024optimistic}
Zheng Shi, Umang Mathur, and Andreas Pavlogiannis.
\newblock Optimistic prediction of synchronization-reversal data races.
\newblock In {\em Proceedings of the IEEE/ACM 46th International Conference on
  Software Engineering}, pages 1--13, 2024.

\bibitem{sun2024hardracedynamicdatarace}
Xudong Sun, Zhuo Chen, Jingyang Shi, Yiyu Zhang, Peng Di, Jianhua Zhao,
  Xuandong Li, and Zhiqiang Zuo.
\newblock Hardrace: A dynamic data race monitor for production use, 2024.
\newblock URL: \url{https://arxiv.org/abs/2410.18412}, \href
  {http://arxiv.org/abs/2410.18412} {\path{arXiv:2410.18412}}.

\bibitem{SwainLLLG020}
Bradley Swain, Yanze Li, Peiming Liu, Ignacio Laguna, Giorgis Georgakoudis, and
  Jeff Huang.
\newblock Ompracer: a scalable and precise static race detector for openmp
  programs.
\newblock In Christine Cuicchi, Irene Qualters, and William~T. Kramer, editors,
  {\em Proceedings of the International Conference for High Performance
  Computing, Networking, Storage and Analysis, {SC} 2020, Virtual Event /
  Atlanta, Georgia, USA, November 9-19, 2020}, page~54. {IEEE/ACM}, 2020.
\newblock \href {https://doi.org/10.1109/SC41405.2020.00058}
  {\path{doi:10.1109/SC41405.2020.00058}}.

\bibitem{RPT2023}
Mosaad~Al Thokair, Minjian Zhang, Umang Mathur, and Mahesh Viswanathan.
\newblock Dynamic race detection with o(1) samples.
\newblock {\em Proc. ACM Program. Lang.}, 7(POPL), January 2023.
\newblock \href {https://doi.org/10.1145/3571238} {\path{doi:10.1145/3571238}}.

\bibitem{toth2017theta}
Tam{\'a}s T{\'o}th, {\'A}kos Hajdu, Andr{\'a}s V{\"o}r{\"o}s, Zolt{\'a}n
  Micskei, and Istv{\'a}n Majzik.
\newblock Theta: a framework for abstraction refinement-based model checking.
\newblock In {\em 2017 Formal Methods in Computer Aided Design (FMCAD)}, pages
  176--179. IEEE, 2017.

\bibitem{vojdani2016static}
Vesal Vojdani, Kalmer Apinis, Vootele R{\~o}tov, Helmut Seidl, Varmo Vene, and
  Ralf Vogler.
\newblock Static race detection for device drivers: the goblint approach.
\newblock In {\em Proceedings of the 31st IEEE/ACM International Conference on
  Automated Software Engineering}, pages 391--402, 2016.

\bibitem{vonPraun2003static}
Christoph {Von Praun} and Thomas~R Gross.
\newblock Static conflict analysis for multi-threaded object-oriented programs.
\newblock {\em ACM Sigplan Notices}, 38(5):115--128, 2003.

\bibitem{wilcox2018verifiedft}
James~R Wilcox, Cormac Flanagan, and Stephen~N Freund.
\newblock Verifiedft: a verified, high-performance precise dynamic race
  detector.
\newblock In {\em Proceedings of the 23rd ACM SIGPLAN Symposium on Principles
  and Practice of Parallel Programming}, pages 354--367, 2018.

\bibitem{UTrack2025}
Minjian Zhang, Daniel Wee~Soong Lim, Mosaad Al~Thokair, Umang Mathur, and
  Mahesh Viswanathan.
\newblock Efficient timestamping for sampling-based race detection.
\newblock {\em Proc. ACM Program. Lang.}, 9(PLDI), June 2025.
\newblock \href {https://doi.org/10.1145/3729252} {\path{doi:10.1145/3729252}}.

\end{thebibliography}
\appendix
\section{Details on Escape Analysis}\label{app:ea}
While we have given the high-level idea for our escape analysis and an illustrative example in
\cref{ss:ea}, we provide a more formal description here. We remark that while our implementation
and the informal description in the main text are field-sensitive, we omit this feature here for
simplicity of presentation.
Our Escape Analysis (EA) here does not rely on the results of a points-to-analysis and computing a global set of escaped memory locations.
Instead, for each, auxiliary $a$ it keeps track of whether the auxiliary can contain the address of a global or a variable that has escaped.
Any addresses that are read from a global or from an auxiliary marked as escaped are also treated as potentially escaped.
\newcommand{\DangerousRegisters}{\Registers^{esc}}
\newcommand{\EscapedLocals}{\Locals^{esc}}
Let $\Instructions_f$ for a function $f\in \Functions$ be the set of instructions appearing in $f$.
We then compute for each function $f$ two sets
\begin{itemize}
    \item $\DangerousRegisters_f \subseteq \Registers_f$: the set of registers that may contain pointers to escaped memory locations (either globals or escaped locals).
    \item $\EscapedLocals_f \subseteq \Locals_f$: the set of local variables of $f$ that may have escaped. 
\end{itemize}
and a summary $\Sigma_f$ which is a tuple of booleans indicating, for each formal argument of $f$, whether the function may cause the argument to escape when passed to $f$.
The resulting constraint system is then given by:
\[
    \begin{array}{lllr}
        \DangerousRegisters_f \supseteq\span \span  \\
        \quad && \{ a \mid (a = \&v) \in \Instructions_f, v \in (\Globals \cup \EscapedLocals_f ) \} \\
        \quad && \cup\; \{ a_0 \mid (a_0 = r(*a_1)) \in \Instructions_f, a_1 \in \DangerousRegisters_f \} \\\ 
        \quad && \cup\; \{ a_1 \mid (w(*a_0,a_1)) \in \Instructions_f, a_0 \in \DangerousRegisters_f \} \\ 
        \quad && \cup\; \{ a_0 \mid (a_0 = \mathbf{F}(a_1,\dots,a_i)) \in \Instructions_f, \{a_1, \dots a_1\} \cap \DangerousRegisters_f \neq \emptyset \}\\
        \quad && \cup\; \{ a \mid (\text{return }a)\in \Instructions_f \}\\ 
        \quad && \cup\; \{ a_0 \mid a_0 \texttt{ = g(}a_1,\dots,a_{n}\texttt{)}\in \Instructions_f  \}\\
        \quad && \cup\; \{ a_i \mid a_0 \texttt{ = g(}a_1,\dots,a_{n}\texttt{)}\in \Instructions_f, i \in [1,n], \langle \Sigma_g \rangle_i \} & (*)\\
        \quad && \cup\; \{ a_{i} \mid a_0 \texttt{ = f(}b_1,\dots,b_{n}\texttt{)}\in \Instructions_g, a_{i} \text{ is $i$-th formal of f}, b_{i+1} \in
            \DangerousRegisters_g  \} & (\ddagger)\\[2ex]

        \EscapedLocals_f \supseteq \{ v \mid (a = \&v) \in \Instructions_f, a \in \DangerousRegisters_f, v \in \Locals_f \}\span\span \\[2ex]

        \Sigma_f = (a_0 \in \DangerousRegisters_f, \dots, a_{n-1} \in \DangerousRegisters_f)
        \qquad \text{(for $a_i$ the $i$-th formal of function $f$)}
        \span\span  & (\dagger)
    \end{array}
\]
where $\Sigma_f$ for an
$n-ary$ function $f$ is a tuple of booleans indicating whether the function $f$ will
cause its $k$-th argument to escape, and $\langle \cdot \rangle_i$ denotes accessing the $i$-th element of the tuple.
For clarity, we have omitted the rules relating to indirect calls via function pointers.
These use the \emph{may call} relationship from \cref{ss:stc} to overapproximate the possible callees, and then apply the rules given above.
The constraints marked with $(*)$ and $(\ddagger)$ together with the definition of the summaries $(\dagger)$ describe the interprocedural aspects of the analysis:
Constraints of the form $(\ddagger)$ ensure that the information that an actual supplied at one of
the callsites may point to escaped memory is propagated to the callee function's corresponding
formal parameter. Summaries are computed by checking whether the formal parameters may potentially
point to escaped memory within the callee. These summaries are then used at callsites $(*)$ to ensure
that the actuals supplied are also marked as escaped. 
\begin{proposition}
    A memory access instruction $x \equiv (a_0 = r(*a)) \in \Instructions_f$  
    or $x \equiv w(*a, a_0) \in \Instructions_f$ need not be instrumented, i.e. $x\not\in\memInstrSuff$  if $a \not\in \DangerousRegisters_f$.
\end{proposition}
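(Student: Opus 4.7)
The plan is to reduce the proposition to the already-established Propositions \ref{prop:ea0} and \ref{prop:ea1} by proving that the constraint system defining $\DangerousRegisters_f$, $\EscapedLocals_f$, and $\Sigma_f$ soundly overapproximates the runtime escape behavior. By Proposition \ref{prop:ea0} it suffices to show that whenever $a \notin \DangerousRegisters_f$, the location pointed to by $a$ at any dynamic event executing $x$ is thread-local. By Proposition \ref{prop:ea1} this further reduces to showing that the pointee has not escaped via any of the Global, Transitive, Argument, or Return Escape routes.

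The key ingredient is a runtime invariant, which I would maintain by induction on the execution prefix: at every reachable state, for every live activation of each function $f$, (i) if a register $r \in \Registers_f$ currently holds the address of a global or of a local variable that has already escaped, then $r \in \DangerousRegisters_f$; and (ii) if a local $v \in \Locals_f$ has escaped, then $v \in \EscapedLocals_f$. Given this invariant, the proposition follows immediately: a non-dangerous register $a$ at $x$ can only hold the address of an as-yet non-escaped local, which by Proposition \ref{prop:ea1} is confined to the executing thread, and Proposition \ref{prop:ea0} finishes the argument.

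To establish the invariant I would proceed by induction on the length of the execution prefix. The base case at the initial state of \mainFun\ is trivial. In the inductive step, I would case-split on the next instruction. The intraprocedural cases correspond one-to-one with the clauses of the constraint system: e.g., for $a = \&v$ with $v$ global or already in $\EscapedLocals_f$, the first clause puts $a$ into $\DangerousRegisters_f$; for a store $w(*a_0, a_1)$ with $a_0 \in \DangerousRegisters_f$, any fresh leakage of $a_1$'s target is captured by the third clause, and symmetrically for loads; a \texttt{return} statement escapes its register argument into the caller, matched by the fifth clause. Interprocedural transitions are handled via the summaries $\Sigma_g$: clause $(\ddagger)$ transfers escape information from actuals at the callsite into the callee's formals, so escape caused inside the callee body is covered by the callee's own constraints, while $(*)$ together with $(\dagger)$ feeds any escape that $g$ inflicts on its formals back to the actuals supplied at the callsite.

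The main obstacle is twofold. First, dynamic calls via function pointers force the analysis to rely on the \emph{may call} overapproximation from Section \ref{ss:stc}, and one has to argue that taking the union of summaries $\Sigma_g$ across all potential callees remains sound for the actual dynamic callee at runtime. Second, because $\DangerousRegisters$, $\EscapedLocals$, and $\Sigma$ are mutually recursive across the entire call graph, the induction must be formulated against the final fixed point of the constraint system \emph{globally} --- so that whenever the inductive step consults a callee summary, a complete solution is already in place --- rather than against any partial iterate or bottom-up pass. Field-sensitivity, elided here for simplicity, adds bookkeeping (replacing escape of an object by escape of specific fields) but no new conceptual difficulty.
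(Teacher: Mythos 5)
Your overall route is the same as the paper's: its proof sketch likewise reduces the claim to \cref{prop:ea1} and \cref{prop:ea0} by arguing that $a \notin \DangerousRegisters_f$ implies that none of the locations pointed to by $a$ can satisfy the escape conditions of \cref{ss:ea}, and you expand the ``verifying'' step into an explicit induction over execution prefixes with a case analysis matching the constraint clauses, which is a reasonable way to discharge it. Your handling of the interprocedural summaries, indirect calls via the \emph{may call} relation, and the need to induct against the global fixed point rather than a partial iterate is also on target.

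One step as written does not go through, however. Your invariant is formulated point-in-time (a register is dangerous if it holds the address of a local that has \emph{already} escaped), and from it you conclude that a non-dangerous $a$ at $x$ points to an ``as-yet non-escaped local,'' which you then feed into \cref{prop:ea1}. But an object that has not escaped \emph{yet} at the moment of the access can still escape later and then be accessed by another thread, and those two accesses can race: for example, thread 1 writes a local $v$, then publishes $\&v$ through a global without any synchronization, and a second thread's subsequent write through that pointer is unordered with the first, pre-escape write. So ``not yet escaped at the time of $x$'' does not yield thread-locality of the pointee for the remainder of the execution, which is what \cref{prop:ea0} needs. What the argument actually requires is that the pointee \emph{never} satisfies the escape conditions in any execution. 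The constraint system delivers this precisely because it is flow-insensitive --- if $v$ ever escapes anywhere, then $v \in \EscapedLocals_f$ and every register taking $\&v$ becomes dangerous, so the offending first write in the example above \emph{is} instrumented --- but your invariant must be strengthened accordingly, e.g.\ to: a register outside $\DangerousRegisters_f$ only ever holds addresses of locals outside $\EscapedLocals_f$, together with a separate lemma that $v \notin \EscapedLocals_f$ implies $v$ never escapes in any execution.
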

\begin{proofsketch}
    By verifying that if $a \not\in \DangerousRegisters_f$ implies that none of the memory locations
    pointed to by $a$ can fulfill the escape conditions given in \cref{ss:ea}. Then, by \cref{prop:ea1}, these memory locations remain thread-local, and thus accesses to them cannot be involved
    in any race by \cref{prop:ea0}.
\end{proofsketch}
\begin{remark}
    Our implementation leverages the \textsc{MemorySSA} analysis~\cite{novillo2007memory} as available
    in \textsc{LLVM} to efficiently track the flow of pointers through memory operations, meaning some of the rules above are implemented using \textsc{MemorySSA} queries rather than explicit iteration over all instructions.
    Furthermore, our implementation also handles dynamic memory allocation, which
    we have omitted from our core language.
\end{remark}

\end{document}